\documentclass[12pt]{report}
\usepackage{geometry}
\usepackage{fancyhdr}
\usepackage{titlesec}
\usepackage{lipsum}
\usepackage{graphicx}
\usepackage{setspace}
\usepackage{amsmath}
\usepackage{amsthm}
\usepackage{amssymb}
\usepackage{bm}
\usepackage{tikz-cd}
\usepackage{array}
\usepackage{tabularx}
\usepackage{titletoc}
\usepackage{booktabs}
\usepackage{float}
\usepackage{fontspec}
\usepackage{listings}
\newfontfamily\juliamono{JuliaMono-Regular.ttf}[
  ItalicFont = JuliaMono-RegularItalic.ttf,
  BoldFont = JuliaMono-Bold.ttf,
  BoldItalicFont = JuliaMono-BoldItalic.ttf,
  Scale = MatchLowercase,
]

\usepackage[backend=biber, style=ieee, sorting=none]{biblatex}
\usepackage{aliascnt}
\usepackage[nameinlink]{cleveref}

\renewcommand{\headrulewidth}{0pt}
\fancypagestyle{plain}{
  \fancyhf{} 
  \fancyfoot[R]{\thepage} 
  \renewcommand{\headrulewidth}{0pt}}

\makeatletter
\def\@makechapterhead#1{%
  { \parindent \z@ \normalfont
    \ifnum \c@secnumdepth >\m@ne
        \thechapter \space - \space        
    \fi
    \interlinepenalty\@M
    \centering \MakeUppercase{#1}\par\nobreak
    \vskip 12\p@
  }
}
\def\@makeschapterhead#1{%
  {\parindent \z@
    \normalfont
    \interlinepenalty\@M
    \centering \MakeUppercase{#1}\par\nobreak
    \vskip 12\p@
  }
}
\def\@makesection#1{%
  {\parindent \z@
    \normalfont
    \interlinepenalty\@M
    \centering \MakeUppercase{#1}\par\nobreak
    \vskip 12\p@
  }
}
\makeatother

\titlecontents{chapter}
[0em] %
{\normalfont}
{\contentslabel{0pt}\hspace{0.7cm}\MakeUppercase}
{\contentslabel{0pt}\MakeUppercase}
{\hfill\contentspage}[]

\titlecontents{section}
[1.8em] %
{\normalfont}
{\contentslabel{0pt}\hspace{1cm}\MakeUppercase}
{\contentslabel{0pt}\MakeUppercase}
{\hfill\contentspage}[]

\titlecontents{subsection}
[4.1em] %
{\normalfont}
{\contentslabel{0pt}\hspace{1.3cm}\MakeUppercase}
{\contentslabel{0pt}\MakeUppercase}
{\hfill\contentspage}[]

\titleformat{\section}[block]{\normalfont}{}{0cm}{\thesection \space - \space \MakeUppercase}
\titleformat{name=\section,numberless}[block]{\normalfont}{}{0cm}{\MakeUppercase}
\titlespacing{\section}{0cm}{0cm}{0cm}

\titleformat{\subsection}[block]{\normalfont}{}{0cm}{\thesubsection \space - \space \MakeUppercase}
\titleformat{name=\subsection,numberless}[runin]{\normalfont}{}{12pt}{\MakeUppercase}[\space - \space\space]
\titlespacing{\subsection}{0cm}{0cm}{0cm}

\newtheorem{theorem}{Theorem}[section]
\newaliascnt{corollary}{theorem}
\newtheorem{corollary}[corollary]{Corollary}
\aliascntresetthe{corollary}
\newaliascnt{lemma}{theorem}
\newtheorem{lemma}[lemma]{Lemma}
\aliascntresetthe{lemma}

\crefname{theorem}{theorem}{theorems}
\Crefname{theorem}{Theorem}{Theorems}
\crefname{corollary}{corollary}{corollaries}
\Crefname{corollary}{Corollary}{Corollaries}
\crefname{lemma}{lemma}{lemmas}
\Crefname{lemma}{Lemma}{Lemmas}

\theoremstyle{definition}
\newtheorem{definition}{Definition}[section]

\newcommand{\prd}[2]{\prod\nolimits_{(#1)}\,#2}
\newcommand{\sm}[2]{\sum\nolimits_{(#1)}\,#2}

\newcommand{\RR}{\mathbb{R}}
\newcommand{\QQ}{\mathbb{Q}}
\newcommand{\Qp}{\mathbb{Q}_{+}}

\newcommand{\close}[1]{\sim_{#1}}
\newcommand{\closesym}{\mathord\sim}
\newcommand{\rational}{\mathsf{rational}}
\newcommand{\limit}{\mathsf{limit}}
\newcommand{\cpath}{\mathsf{path}}
\newcommand{\transport}{\mathsf{transport}}
\newcommand{\induction}{\mathsf{induction}}
\newcommand{\recursion}{\mathsf{recursion}}
\newcommand{\IsCauchy}{\mathsf{IsCauchy}}
\NewDocumentCommand{\dclose}{m o}{%
  \mathrel{\IfNoValueTF{#2}{\frown_{#1}}{\frown_{#1}^{#2}}}%
}
\newcommand{\dclosesym}{\mathord\frown}
\newcommand{\apart}{\mathrel{\#}}
\newcommand{\unittype}{\mathbf{1}}

\newcommand{\trunc}[2]{\mathopen{}\left\Vert #2\right\Vert_{#1}\mathclose{}}
\newcommand{\bracket}[1]{\trunc{}{#1}}

\begin{document}

\renewcommand{\contentsname}{TABLE OF CONTENTS}

\begin{titlepage}
\begin{center}
    Formalizing the Real Numbers in \\
    \bigskip
    Homotopy Type Theory with Cubical Agda \\
    \bigskip
    \bigskip
    \bigskip
    by\\
    \bigskip
    Jackson Brough\\
    \bigskip
    \bigskip
    \bigskip
    A Senior Honors Thesis Submitted to the Faculty of\\
    The University of Utah\\
    In Partial Fulfillment of the Requirements for the\\
    \bigskip
    Honors Degree in Bachelor of Science\\
    \bigskip
    \bigskip
    \bigskip
    In \\
    \bigskip
    Computer Engineering
    \bigskip
    \bigskip
\end{center}
Approved:
\begin{center}
  \begin{tabularx}{\textwidth}{XX}
    \bigskip \bigskip \bigskip \\
    \rule{6cm}{0.4pt} & \rule{6cm}{0.4pt} \\
    Ben Greenman & Neal Patwari \\
    Faculty Thesis Mentor & Departmental Honors Liaison \\
    \bigskip \bigskip \bigskip \\
    \rule{6cm}{0.4pt} & \rule{6cm}{0.4pt}\\
    Hanseup Kim & Monisha Pasupathi \\
    Chair, Department of ECE & Dean, Honors College \\
  \end{tabularx}
\end{center}

\begin{center}
    \bigskip
    \bigskip
    \bigskip
    \bigskip

    April 2026\\
    Copyright \textcopyright\ 2026\\
    All Rights Reserved
\end{center}
\end{titlepage}

\setstretch{2.0}

\clearpage
\pagenumbering{roman}
\setcounter{page}{2} 

\chapter*{ABSTRACT}
\addcontentsline{toc}{chapter}{ABSTRACT}
\bigskip

Real numbers in constructive mathematics have always seemed to require
compromises of one form or another. Classical proofs of Cauchy completeness
require countable choice, Bishop's setoid construction introduces persistent
bookkeeping overhead on every definition and theorem, and Dedekind cuts force
cumbersome universe-level tracking in predicative type theory. The Homotopy Type
Theory (HoTT) book presents an alternative construction of the Cauchy real
numbers as a higher inductive-inductive type family, avoiding all three
compromises. We formalize the HoTT book reals in Cubical Agda, a proof assistant
whose native support for higher inductive types allows the construction to be
expressed directly. The code type-checks without postulates or holes, providing
a foundation for further machine-assisted work in constructive analysis.

\newpage

{
\let\clearpage\relax
\tableofcontents
}

\clearpage
\pagenumbering{arabic}

\pagestyle{fancy}
\fancyhf{}
\renewcommand{\headrulewidth}{0pt}
\fancyhead[R]{\thepage}

\fancypagestyle{plain}{
  \fancyhf{} 
  \fancyhead[R]{\thepage} 
  \renewcommand{\headrulewidth}{0pt}}

\chapter{INTRODUCTION}

Constructive mathematics endows mathematical proofs with a computational
interpretation. When a constructive proof claims ``given $\varepsilon$, there
exists a $\delta$,'' it is expected to provide a method for obtaining such a
$\delta$. Given a specific $\varepsilon$, a reader should in principle be able
to follow the proof step by step to determine a corresponding $\delta$.
By contrast, a classical proof is only required to show that a $\delta$
cannot fail to exist.

Constructive real analysis is attractive because it promises to narrow the gap
between abstract existence claims and the explicit procedures needed to realize
them in practice.
Yet, giving a
constructive account of real analysis becomes challenging almost immediately,
because unlike elements of discrete number systems, arbitrary real numbers
cannot in general be specified by finite data; in particular, equality of
arbitrary real numbers is not effectively decidable. Prior constructions of
the real numbers have always required compromises of one kind or another
\cite[p.~375]{univalentfoundationsprogram2013}.

To make the tension precise, consider the classical Cauchy construction of the
real numbers. Classically, the Cauchy reals are obtained by taking the quotient
of the Cauchy sequences of rational numbers under the equivalence relation that
identifies sequences which are eventually arbitrarily close. The crucial step is
to show Cauchy completeness: every Cauchy sequence of reals should have a limit
in the reals. However, the classical proof of Cauchy completeness for the
quotient fails in a constructive setting. We are given only a sequence of
equivalence classes, but to build a representative for its limit we must choose
a representative for each term of that sequence. Doing so requires the axiom of
countable choice: for each term of the sequence we know the equivalence class is
nonempty, but we must produce a function that selects a particular
representative for each term. Constructively, the equivalence classes do not
single out any particular representative, so no such function is available in
general.

In his 1967 book \emph{Foundations of Constructive Analysis}, Errett Bishop
showed that a substantial portion of classical analysis could be recast
constructively~\cite{bishop1967}. His aim was ``to give a numerical meaning to
as much as possible of classical abstract analysis''~\cite[p.~3]{bishop1967}.
Bishop's solution to the problem of Cauchy completeness is to avoid taking the
quotient and to work instead with the setoid of Cauchy sequences of rational
numbers, where a setoid is a set equipped with a canonical equivalence
relation. This avoids the need for countable choice, but it introduces a
persistent bookkeeping burden: because compatibility with the equivalence
relation is not automatic, the same kinds of congruence proofs must be supplied
repeatedly, often for routine constructions. Taken to its limit, this approach
would force all of abstract algebra to be reformulated in terms of setoids in
order to apply results to the reals, as observed by Gilbert~\cite{gilbert2017}.

Another alternative is to work with Dedekind reals, but these can be awkward in
predicative type theory because their construction involves Dedekind
cuts---subsets valued in a universe of propositions. The real numbers defined in
this way thus live one universe level above their cuts, and a property of the
reals one level higher still. Tracking these varying universe levels throughout
a development becomes cumbersome~\cite[p.~377]{univalentfoundationsprogram2013}.

In the final chapter of the \emph{Homotopy Type Theory} (HoTT) book, the authors
present a new formulation of the Cauchy reals as a higher inductive-inductive
type family~\cite{univalentfoundationsprogram2013}. We refer to this
construction as \emph{the HoTT book reals}, to distinguish it from the naive
quotient of Cauchy sequences discussed previously. This construction is
compelling because it avoids the compromises of the previous approaches. First,
it builds Cauchy completeness directly into the definition by including a limit
constructor for Cauchy approximations valued in the reals. The accompanying
closeness relation is defined simultaneously with the type of reals itself, so
this construction is not circular. This avoids the need for countable
choice. Second, unlike in Bishop's work, it aligns the identity type of the
reals with their intended notion of equality: the HoTT book reals come equipped
with a path constructor identifying any two reals that are $\varepsilon$-close
for every rational $\varepsilon > 0$. Finally, because this construction is not
defined in terms of proposition-valued Dedekind cuts, it avoids the associated
burden of tracking universe levels in predicative type theory.

Gilbert formalized the HoTT book reals in a modified version of the Rocq proof
assistant and even generalized the construction to premetric spaces, following
earlier work of O'Connor~\cite{gilbert2017,oconnor2007}. However, Rocq by
default lacks support for both inductive-inductive type families and higher
inductive types. Gilbert worked with an experimental branch of Rocq that added
inductive-inductive type families, but higher inductive types were still
unavailable, so the path constructors for the reals and the accompanying
closeness relation had to be postulated. As a result, the formalization could
not be type checked in mainline Rocq, and the corresponding path constructors
did not enjoy native computation rules.

In this thesis, we present a formalization of the HoTT book reals in Cubical
Agda, whose native support for higher inductive types allows the construction to
be expressed directly, without postulates. Cubical Agda implements cubical type
theory, which gives a constructive interpretation of both higher inductive types
and univalence, making it a natural setting for this
work~\cite{vezzosimortbergabel2019}. Our formalization follows Section 11.3 of
the HoTT book. The code is open source and available at
\url{https://github.com/utahplt/hott-reals}.

Chapter 2 presents the structure of the formalization in four stages. It begins
with the higher inductive-inductive definition of the HoTT book reals and
several variations of induction and recursion principles used throughout the
development. It then develops a collection of results concerning the closeness
relation and an alternative characterization thereof. From there, the chapter
establishes results on lifting Lipschitz and nonexpanding maps to the reals and
finally develops the algebraic and order structure of the HoTT book reals,
culminating in the proof that reals form an Archimedean ordered field.

Chapter 3 reflects on what the Agda code revealed about the mathematical
construction in Chapter 2. In several places, the process of formalization
sharpened our understanding of the informal presentation. We articulated the
hypotheses needed to extend multi-variable identities from the rationals to the
reals, repaired a naive transcription of the enhanced recursion principle to a
form stronger than an informal reading suggests, and recognized the structural
necessity of the alternative characterization of closeness. The chapter also
includes a demonstration that real arithmetic computes definitionally on
rational inputs, an overview of the codebase organization, and a discussion of
our experimentation with Claude Code during development.

Chapter 4 presents related work.
Chapter 5 outlines future work.
Chapter 6 concludes.

The Cubical Agda code typechecks without postulates or holes using the latest
standard library release, and is available for other researchers to build
on. The development clarifies how the HoTT book reals behave in a proof
assistant with native support for higher inductive types. It provides a
foundation for further machine-assisted work in constructive analysis.

\chapter{Real numbers as a higher inductive type}
\label{chapter:reals-math}

This chapter presents the mathematical development of the HoTT book reals
($\RR$), beginning with their higher inductive-inductive definition and
ultimately concluding that they form an Archimedean ordered field. We follow the
notation and conventions of the HoTT book \cite{univalentfoundationsprogram2013}
and assume familiarity with basic Homotopy Type Theory. Readers less familiar
with HoTT should still be able to follow the main ideas by consulting the HoTT
book occasionally when unfamiliar concepts or notation arise.

The chapter is organized as follows. In
\Cref{section:definition-induction-recursion}, we introduce the definition of
the HoTT book reals and specify their induction and recursion principles. We
then develop the basic theory of the closeness relation in
\Cref{section:closeness}, and use it to state extension principles for Lipschitz
and non-expanding maps in \Cref{section:lifting}. Finally, in
\Cref{section:algebra-order}, we assemble the resulting algebraic and
order-theoretic structure. The first three sections follow the HoTT book
closely. In the final section, we continue to use the HoTT book for the basic
algebraic constructions, but follow Gilbert \cite{gilbert2017} for the treatment
of strict order and for the construction of multiplication and reciprocal, with
Kraus's theorem on maps out of propositional truncations into sets providing a
convenient formulation of one local-to-global step in the latter construction
\cite{kraus2015}.

By default, we do not reproduce proofs already given in the HoTT book or
Gilbert. We repeat an existing proof only when its content contributes
meaningfully to the conceptual narrative, or when our proof strategy differs
from the one given in those sources.

\section{Definition, Induction, and Recursion}
\label{section:definition-induction-recursion}

The HoTT book reals are an adaptation of the classical Cauchy construction, in
which the real numbers are obtained by adjoining limits to the rationals. The
completion of a general metric space is obtained by adjoining limits of
sequences satisfying the Cauchy condition, that is, sequences whose terms
eventually become arbitrarily close. However, metric spaces presuppose a notion
of distance valued in the real numbers, so proceeding in this way would be
circular here, since we are in the process of constructing the reals
themselves. The key observation is that the full structure of a metric is not
required to formulate the Cauchy condition; it suffices to express approximate
closeness between reals. In the HoTT book construction, this is captured by a
family of binary closeness relations on the reals, indexed by positive
rationals.

The main difficulty is that the definitions of the reals and the closeness
relation depend on each other. The limit constructor for the reals takes as
input a family of real approximations satisfying the Cauchy condition formulated
in terms of closeness. Conversely, the closeness relation is defined by cases on
the point constructors for the reals: rational-rational, rational-limit,
limit-rational, and limit-limit. This kind of mutual dependence is handled by
the framework of inductive-inductive definitions, which allow the simultaneous
definition of a type $A$ together with a type family $B : A \to \mathcal{U}$
over $A$~\cite{forsberg2013}. In the present case, this simultaneously defined
family takes the form of a ternary relation: for reals $u, v : \RR$ and a
positive rational $\varepsilon$, the type $u \close{\varepsilon} v$ expresses
that $u$ and $v$ are $\varepsilon$-close.

In addition to adjoining limits, we must also ensure that arbitrarily close
reals are identified. Specifically, reals which are $\varepsilon$-close for
every positive rational $\varepsilon$ ought to be treated as
indistinguishable. This requirement is built into the definition via a path
constructor that provides a path between any two reals satisfying this
condition. For this reason, the HoTT book reals and the associated closeness
relation are defined as a higher inductive-inductive type family.

To express the Cauchy condition in terms of the closeness relation, the HoTT
book uses Cauchy approximations rather than traditional Cauchy sequences. A
\textbf{Cauchy approximation} is a map $x : \Qp \to \RR$ satisfying
$x_{\varepsilon} \close{\varepsilon + \delta} x_{\delta}$ for all positive
rationals $\varepsilon$ and $\delta$. Once closeness is indexed by positive
rationals, it is natural for families of approximations to share the same
index. Intuitively, a Cauchy approximation assigns, for each requested precision
$\varepsilon > 0$, an approximate value of the intended limit. In contrast, an
ordinary Cauchy sequence $\mathbb{N} \to \RR$ does not by itself indicate which
term to use for a requested precision; a modulus of convergence is also
needed. Cauchy approximations therefore package a Cauchy sequence together with
its modulus of convergence into a form suited to the closeness-based formulation
of completion.

With this motivation in place, we can state the higher inductive-inductive
definition of the HoTT book reals and their closeness relation.

\begin{definition}[{\textcite[Definition 11.3.2]{univalentfoundationsprogram2013}}]
  \label{def:hott-reals}
  The type $\RR$ of \textbf{HoTT book reals}, together with a type family
  $\closesym : \Qp \to \RR \to \RR \to \mathcal{U}$ referred to as the
  \textbf{closeness relation}, are defined simultaneously as the following
  higher inductive-inductive type family. The type $\RR$ comes equipped with the
  following constructors:
  \begin{itemize}
    \item For each rational $q : \QQ$, there is an element $\rational(q) : \RR$.
    \item For each map $x : \Qp \to \RR$ equipped with an element of type
      \[
        \IsCauchy(x) := \prd{\varepsilon, \delta : \Qp} x_{\varepsilon} \close{\varepsilon + \delta} x_{\delta},
      \]
      there is an element $\limit(x) : \RR$. We call $x$ a \textbf{Cauchy
        approximation}.
    \item For each $u, v : \RR$ such that $u \close{\varepsilon} v$ for all
      $\varepsilon : \Qp$, there is a path $\cpath(u, v) : u = v$.
  \end{itemize}
  The closeness relation $\closesym : \Qp \to \RR \to \RR \to \mathcal{U}$ comes
  equipped with the following constructors:
  \begin{itemize}
  \item For all rationals $q, r, \varepsilon$ with $\varepsilon > 0$, if
    $- \varepsilon < q - r < \varepsilon$ then
    $\rational(q) \close{\varepsilon} \rational(r)$.
  \item For all rationals $q, \varepsilon, \delta$ with $\varepsilon > 0$,
    $\delta > 0$, and $\varepsilon - \delta > 0$ and all Cauchy approximations
    $y : \Qp \to \RR$, if $\rational(q) \close{\varepsilon - \delta} y_{\delta}$
    then $\rational(q) \close{\varepsilon} \limit(y)$.
  \item For all Cauchy approximations $x : \Qp \to \RR$ and all rationals
    $r, \varepsilon, \delta$ with $\varepsilon > 0$, $\delta > 0$, and
    $\varepsilon - \delta > 0$, if
    $x_{\delta} \close{\varepsilon - \delta} \rational(r)$ then
    $\limit(x) \close{\varepsilon} \rational(r)$.
  \item For all Cauchy approximations $x, y$ and all rationals
    $\varepsilon, \delta, \eta$ with $\varepsilon, \delta, \eta > 0$ and
    $\varepsilon - (\delta + \eta) > 0$, if
    $x_{\delta} \close{\varepsilon - (\delta + \eta)} y_{\eta}$ then
    $\limit(x) \close{\varepsilon} \limit(y)$.
  \item Given reals $u, v$ and a rational $\varepsilon > 0$, if
    $\varphi, \varphi' : u \close{\varepsilon} v$ then there is a path
    $\varphi = \varphi'$.
  \end{itemize}
\end{definition}

The $\rational$ constructor of $\RR$ embeds each rational number as a real
number. The $\limit$ constructor asserts that every Cauchy approximation
$x : \Qp \to \RR$ determines a real number $\limit(x)$, which we refer to as the
limit of $x$. The remaining constructor, discussed below, identifies reals that
are arbitrarily close.

Heuristically, $x_{\varepsilon}$ is intended to be an approximation of
$\limit(x)$ corresponding to the error tolerance $\varepsilon$. Assuming the
limit exists, informal reasoning with the triangle inequality suggests that
$x_{\varepsilon}$ and $x_{\delta}$ are $(\varepsilon + \delta)$-close. The
$\IsCauchy$ condition can be understood as internalizing this reasoning while
avoiding any circular reference to the limit itself. A formal definition of the
triangle inequality in the context of the closeness relation is given in
\Cref{definition:triangle-inequality}.

In the Agda formalization, a proof $\varphi : \IsCauchy(x)$ is tracked
explicitly, but in our informal presentation here, we follow
\cite{univalentfoundationsprogram2013} and abuse notation slightly by writing
$\limit(x)$ instead of $\limit(x, \varphi)$.

The path constructor enforces that any two reals that are \(\varepsilon\)-close
for every positive rational \(\varepsilon\) are equal. In the HoTT book, this
property is called \textbf{separatedness}
\cite[\S11.3.2]{univalentfoundationsprogram2013}. The terminology is analogous
to the Hausdorff condition in topology, which classically says that distinct
points can be separated by disjoint neighborhoods. For the real numbers, this
matches the familiar metric intuition that distinct points are separated by a
positive distance. Constructively, we use the corresponding positive
formulation: if two reals are arbitrarily close, then they are equal.

The first constructor for closeness lifts the usual condition for
$\varepsilon$-closeness for rational numbers to the reals. The next three
constructors describe how closeness interacts with limits and can be understood
using the heuristic explanation above: we think of $x_{\varepsilon}$ as an
approximation to $\limit(x)$ associated with the error tolerance $\varepsilon$,
and then reason informally via the triangle inequality. The final constructor
asserts that closeness is a proposition, that is, given $u, v : \RR$ and
$\varepsilon > 0$, any two proofs of $u \close{\varepsilon} v$ are equal.

The use of a higher inductive-inductive type family, together with the large
number of constructors in \Cref{def:hott-reals}, makes the induction principle
quite involved. Because $\RR$ and $\closesym$ are defined mutually, it is not
sufficient in general to work with a type family only over $\RR$ or only over
$\closesym$, though we will consider these as special cases shortly.

We will work up to the full induction principle following
Forsberg~\cite[\S3.2.5]{forsberg2013}.  Forsberg makes a distinction between
``simple'' and ``general'' elimination principles for inductive-inductively
defined types. Suppose $A : \mathcal{U}$ and $B : A \to \mathcal{U}$ are
inductive-inductively defined types. Simple elimination rules use motives of the
form
\begin{align*}
  P & : A \to \mathcal{U}, \\
  Q & : \prd{x : A} B(x) \to \mathcal{U}
\end{align*}
and therefore have the form
\begin{align*}
  \induction'_{A} & : \cdots \to \prd{x : A} P(x), \\
  \induction'_{B} & : \cdots \to \prd{x : A} \prd{y : B(x)} Q(x, y).
\end{align*}
In contrast, general elimination principles use motives of the form
\begin{align*}
  P & : A \to \mathcal{U}, \\
  Q & : \prd{x : A} \prd{y : B(x)} P(x) \to \mathcal{U}
\end{align*}
and therefore have the form
\begin{align*}
  \induction_{A} & : \cdots \to \prd{x : A} P(x), \\
  \induction_{B} & : \cdots \to \prd{x : A} \prd{y : B(x)} Q(x, y, \induction_{A}(\ldots, x)).
\end{align*}
With a general induction principle, the motive $Q$ may depend on the result of
induction for $P$. When constructing data over an indexed family, we often need
access not only to the index, but also to the data already assigned to that
index by induction.

The present case is slightly more elaborate, since the closeness relation is
indexed by two reals and a positive rational~\cite[\S6.2]{forsberg2013}.
The codomain families in our setting therefore take
the form
\begin{align*}
  A & : \RR \to \mathcal{U}, \\
  B & : \prd{u, v : \RR} A(u) \to A(v) \to \prd{\varepsilon : \Qp} u \close{\varepsilon} v \to \mathcal{U}.
\end{align*}
The family $B$ may be understood as an $\varepsilon$-indexed binary relation
between the fibers $A(u)$ and $A(v)$, defined only when $u, v : \RR$ are known
to be $\varepsilon$-close. Accordingly, we usually write this type using infix
notation as $(u, a) \dclose{\varepsilon}[\varphi] (v, b)$, where $a : A(u)$,
$b : A(v)$, and $\varphi : u \close{\varepsilon} v$. Because closeness is a mere
relation, the witness $\varphi$ is often irrelevant, and the endpoints $u$ and
$v$ are usually clear from context, so we will often abbreviate this as
$a \dclose{\varepsilon} b$.

\begin{definition}[{\textcite[\S~11.3.2]{univalentfoundationsprogram2013}}]
  \label{def:general-induction}
  Assume codomain families $A$ and $\dclosesym$ as above. The general induction
  principle for the HoTT book reals,
  \textbf{$(\RR, \closesym)$-induction}, asserts that to construct sections of
  the families $A$ and $B$, it suffices to specify data corresponding to each
  constructor of the higher inductive-inductive definition. Concretely, the
  required hypotheses are as follows:

  \begin{itemize}
    \item For each $q : \QQ$, an element $f_{\rational}(q) : A(\rational(q))$.
    \item For each Cauchy approximation $x : \Qp \to \RR$ and each
      $a : \prd{\varepsilon : \Qp} A(x_{\varepsilon})$ satisfying
      \begin{equation}
        \label{eq:dependent-cauchy-approximation}
        \prd{\varepsilon, \delta : \Qp} (x_{\varepsilon}, a_{\varepsilon})
        \dclose{\varepsilon + \delta} (x_{\delta}, a_{\delta}),
      \end{equation}
      an element $f_{\limit}(x, a) : A(\limit(x))$. Whenever such an $a$
      satisfies \labelcref{eq:dependent-cauchy-approximation}, we refer to it as
      a \textbf{dependent Cauchy approximation} over $x$. As with $\limit(x)$,
      we typically suppress the witness that $x$ is a Cauchy approximation and
      that $a$ is a dependent Cauchy approximation when they are clear from
      context. Otherwise, we explicitly write $f_{\limit}(x, \varphi, a, \psi)$,
      for witnesses $\varphi, \psi$.
    \item For all $u, v : \RR$ such that $u \close{\varepsilon} v$ for every
      positive rational $\varepsilon$, and all $a : A(u)$ and $b : A(v)$
      such that $(u, a) \dclose{\varepsilon} (v, b)$ for every positive rational
      $\varepsilon$, a dependent path\footnote{As in
        \cite[\S6.2]{univalentfoundationsprogram2013}, if $u : B(x)$,
        $v : B(y)$, and $p : x = y$, the type of dependent paths from $u$ to $v$
        over $p$ can be expressed $u =_{p}^{B} v := \transport^{B}(p, u) = v$}
      $a =_{\cpath(u, v)}^{A} b$.
    \item For all $q, r : \QQ$ and all $\varepsilon : \Qp$, a proof that if
      $- \varepsilon < q - r < \varepsilon$ then
      \[
        (\rational(q), f_{\rational}(q)) \dclose{\varepsilon} (\rational(r), f_{\rational}(r)).
      \]
    \item For all rationals $q : \QQ$, all $\varepsilon, \delta : \Qp$ with
      $\varepsilon - \delta > 0$, all Cauchy approximations $y : \Qp \to \RR$,
      and all dependent Cauchy approximations $b$ over $y$, a proof that if
      $\rational(q) \close{\varepsilon - \delta} y_{\delta}$ and
      $(\rational(q), f_{\rational}(q)) \dclose{\varepsilon - \delta}
      (y_{\delta}, b_{\delta})$ then
      \[
        (\rational(q), f_{\rational}(q)) \dclose{\varepsilon} (\limit(y), f_{\limit}(y, b)).
      \]
    \item For all Cauchy approximations $x : \Qp \to \RR$, all dependent Cauchy
      approximations $a$ over $x$, all $r : \QQ$, and all $\varepsilon, \delta$
      with $\varepsilon - \delta > 0$, a proof that if
      $x_{\delta} \close{\varepsilon - \delta} \rational(r)$ and
      $(x_{\delta}, a_{\delta}) \dclose{\varepsilon - \delta} (\rational(r),
      f_{\rational}(r))$ then
      \[
        (\limit(x), f_{\limit}(x, a)) \dclose{\varepsilon} (\rational(r),
        f_{\rational}(r)).
      \]
    \item For all Cauchy approximations $x, y : \Qp \to \RR$, all dependent
      Cauchy approximations $a$ and $b$ over $x$ and $y$, respectively, and all
      $\varepsilon, \delta, \eta : \Qp$ with $\varepsilon - (\delta + \eta) > 0$,
      a proof that if $x_{\delta} \close{\varepsilon - (\delta + \eta)} y_{\eta}$ and
      $(x_{\delta}, a_{\delta}) \dclose{\varepsilon - (\delta + \eta)}
      (y_{\eta}, b_{\eta})$ then
      \[
        (\limit(x), f_{\limit}(x, a)) \dclose{\varepsilon} (\limit(y),
        f_{\limit}(y, b)).
      \]
    \item For all $u, v : \RR$, all $\varepsilon : \Qp$, all
      $a : A(u)$ and $b : A(v)$, and all $\varphi : u \close{\varepsilon} v$,
      a proof that the type
      $(u, a) \dclose{\varepsilon}[\varphi] (v, b)$ is a proposition.
  \end{itemize}
  Under these hypotheses, we obtain functions
  \begin{align*}
    \induction_{\RR} & : \prd{u : \RR} A(u), \\
    \induction_{\closesym} & : \prd{u, v : \RR} \prd{\varepsilon : \Qp} \prd{\varphi : u \close{\varepsilon} v} (u, \induction_{\RR}(u)) \dclose{\varepsilon}[\varphi] (v, \induction_{\RR}(v))
  \end{align*}
  which satisfy the following computation rules
  \begin{align*}
    \induction_{\RR}(\rational(q)) & \doteq f_{\rational}(q), \\
    \induction_{\RR}(\limit(x, \varphi)) & \doteq f_{\limit}(x, \varphi, \varepsilon \mapsto \induction_{\RR}(x_{\varepsilon}), \psi),
  \end{align*}
  where $q : \QQ$, $x : \Qp \to \RR$, $\varphi$ witnesses that $x$ is a Cauchy
  approximation, and for $\varepsilon, \delta : \Qp$, we define
  $\psi(\varepsilon, \delta) := \induction_{\closesym}(x_{\varepsilon},
  x_{\delta}, \varepsilon + \delta, \varphi(\varepsilon, \delta))$. Thus $\psi$
  witnesses that the induced function
  $\varepsilon \mapsto \induction_{\RR}(x_{\varepsilon})$ is a dependent Cauchy
  approximation over $x$.
\end{definition}

We obtain useful special cases of the general induction principle by
trivializing one of two codomain families. If $\dclosesym$ is the constant
family returning the unit type, we obtain an induction principle for $\RR$
alone. If $A$ is constant at the unit type, we obtain an induction principle for
the closeness relation $\closesym$.

\begin{definition}[{\textcite[\S~11.3.2]{univalentfoundationsprogram2013}}]
  \label{def:rr-induction}
  By taking the family $\dclosesym$ in \Cref{def:general-induction} to be
  constant at $\unittype$, we obtain a special case of the general induction
  principle referred to as \textbf{$\RR$-induction}. Thus, to construct a
  section of a type family $A : \RR \to \mathcal{U}$, it suffices to specify:
  \begin{itemize}
    \item For each $q : \QQ$, an element $f_{\rational}(q) : A(\rational(q))$.
    \item For each Cauchy approximation $x : \Qp \to \RR$ and each dependent
      function $a : \prd{\varepsilon : \Qp} A(x_{\varepsilon})$, an element
      $f_{\limit}(x, a) : A(\limit(x))$.
    \item For all $u, v : \RR$ such that $u \close{\varepsilon} v$ for all
      $\varepsilon : \Qp$, and all $a : A(u)$ and $b : A(v)$, a dependent path
      $a =^{A}_{\cpath(u, v)} b$.
  \end{itemize}
  Under these hypotheses, $\RR$-induction yields a function
  \[
    \induction^{\RR} : \prd{u : \RR} A(u)
  \]
  which satisfies the same computation rules as $(\RR,
  \closesym)$-induction. Since $\dclosesym$ is constant at $\unittype$, proofs
  for the dependent Cauchy approximation condition are trivially satisfied.
\end{definition}

\begin{definition}[{\textcite[\S~11.3.2]{univalentfoundationsprogram2013}}]
  \label{def:close-induction}
  By taking the family $A$ in \Cref{def:general-induction} to be constant at
  $\unittype$, we obtain a special case of the general induction principle
  referred to as \textbf{$\closesym$-induction}. Thus, to construct a section of
  a type family
  \[
    \dclosesym : \prd{u, v : \RR} \prd{\varepsilon : \Qp} u
    \close{\varepsilon} v \to \mathcal{U}
  \]
  it suffices to specify data corresponding to the constructors of $\closesym$,
  where $q, r$ are rationals, $\varepsilon, \delta, \eta$ are positive
  rationals, and $x, y$ are Cauchy approximations:
  \begin{itemize}
  \item If $- \varepsilon < q - r < \varepsilon$ then
    $\rational(q) \dclose{\varepsilon} \rational(r)$.
  \item If $\varepsilon - \delta > 0$ and
    $\rational(q) \close{\varepsilon - \delta} y_{\delta}$ then
    $\rational(q) \dclose{\varepsilon} \limit(y)$.
  \item If $\varepsilon - \delta > 0$ and
    $x_{\delta} \close{\varepsilon - \delta} \rational(r)$ then
    $\limit(x) \dclose{\varepsilon} \rational(r)$.
  \item If $\varepsilon - (\delta + \eta) > 0$,
    $x_{\delta} \close{\varepsilon - (\delta + \eta)} y_{\eta}$, and
    $x_{\delta} \dclose{\varepsilon - (\delta + \eta)}$ then
    $\limit(x) \dclose{\varepsilon} \limit(y)$.
  \item For all $u, v : \RR$, all $\varepsilon : \Qp$, and all
    $\varphi : u \close{\varepsilon} v$, the type
    $u \dclose{\varepsilon}[\varphi] v$ is a proposition.
  \end{itemize}
  Under these hypotheses, $\closesym$-induction yields a function
  \[
    \induction^{\closesym} :
    \prd{u, v : \RR} \prd{\varepsilon : \Qp}
    u \close{\varepsilon} v \to
    u \dclose{\varepsilon} v.
  \]
  Note, since $A$ is constant at $\unittype$ and $\closesym$ takes values in
  propositions, there is no meaningful dependence on the endpoints in the
  codomain or the proof of closeness in the domain.
\end{definition}

The principle of recursion obtained by making the general induction principle
non-dependent is not as useful as we would like. This recursion principle, which
is referred to as ``ordinary'' $(\RR, \closesym)$-recursion in
\cite{univalentfoundationsprogram2013}, states that to construct a function
$f : \RR \to A$, it suffices to provide:
\begin{itemize}
\item For every $q : \QQ$, an element $f(\rational(q)) : A$.
\item For every Cauchy approximation $x : \Qp \to \RR$, an element
  $f(\limit(x)) : A$, assuming $f$ has been defined on $x_{\varepsilon}$ for all
  rational $\varepsilon > 0$.
\item For every $u, v : \RR$ such that $u \close{\varepsilon} v$ for all
  rational $\varepsilon > 0$, a proof that $f(u) = f(v)$.
\end{itemize}
As explained in \cite{univalentfoundationsprogram2013}, the last condition is
generally difficult to prove unless we have extra information specifying how $f$
behaves on $\varepsilon$-close reals. In other words, we need a way to measure
approximate closeness between the images of $\varepsilon$-close reals in the
codomain. This is exactly the role played by the family $\dclosesym$ in the
general induction principle, but since $A$ is now non-dependent, $\dclosesym$ is
simply a type family of the form $A \to A \to \Qp \to \mathcal{U}$. Accordingly,
\cite{univalentfoundationsprogram2013} introduces a recursion principle whose
hypotheses include exactly this extra information.

\begin{definition}[{\textcite[\S~11.3.2]{univalentfoundationsprogram2013}}]
  \label{def:enhanced-recursion}
  Let $A$ be a type and let $a \dclose{\varepsilon} b$ be a type family indexed
  by $a, b : A$ and $\varepsilon : \Qp$. The \textbf{enhanced principle of
    $(\RR, \closesym)$-recursion}, which we will refer to simply as
  \textbf{$(\RR, \closesym)$-recursion} since it is the only recursion principle
  used here, states that to construct a function $\RR \to A$, it suffices to
  provide the following data:
  \begin{itemize}
  \item A function $f_{\rational} : \QQ \to A$.
  \item For every Cauchy approximation $x : \Qp \to \RR$ and every map
    $f' : \Qp \to A$ such that
    \[
      f'(\varepsilon) \dclose{\varepsilon + \delta} f'(\delta)
    \]
    for all rational $\varepsilon, \delta > 0$, an element
    $f_{\limit}(x, f') : A$. Note that this condition amounts to asserting that
    $f'$ is a dependent Cauchy approximation over $x$, albeit with the
    dependence removed, since $A$ is no longer a type family over $\RR$. In
    \cite{univalentfoundationsprogram2013}, this is referred to as ``a Cauchy
    approximation with respect to $\dclosesym$''. As with the general induction
    principle, we suppress the relevant witnesses when they are clear from
    context; otherwise we write $f_{\limit}(x, \varphi, f', \psi)$ for witnesses
    $\varphi$ and $\psi$.
  \item For all $a, b : A$, if $a \dclose{\varepsilon} b$ for all
    $\varepsilon : \Qp$, then $a = b$. This is referred to as the
    \emph{separatedness} condition for $\dclosesym$.
  \item For all $a, b : A$ and all $\varepsilon : \Qp$, the type
    $a \dclose{\varepsilon} b$ is a proposition; that is, the family
    $\dclosesym$ is a mere relation.
  \item For all $q, r : \QQ$ and all $\varepsilon : \Qp$, if
    $- \varepsilon < q - r < \varepsilon$ then
    $f_{\rational}(q) \dclose{\varepsilon} f_{\rational}(r)$.
  \item For all $q : \QQ$, all $\varepsilon, \delta : \Qp$ with
    $\varepsilon - \delta > 0$, all Cauchy approximations $y : \Qp \to \RR$, and
    all Cauchy approximations $g' : \Qp \to A$ with respect to $\dclosesym$, if
    $\rational(q) \close{\varepsilon - \delta} y_{\delta}$ and
    $f_{\rational}(q) \dclose{\varepsilon - \delta} g'(\delta)$ then
    \[
      f_{\rational}(q) \dclose{\varepsilon} f_{\limit}(y, g').
    \]
  \item For all Cauchy approximations $x : \Qp \to \RR$, all Cauchy
    approximations $f' : \Qp \to A$ with respect to $\dclosesym$, all $r : \QQ$,
    and all $\varepsilon, \delta : \Qp$ with $\varepsilon - \delta > 0$, if
    $x_{\delta} \close{\varepsilon - \delta} \rational(r)$ and
    $f'(\delta) \dclose{\varepsilon - \delta} f_{\rational}(r)$ then
    \[
      f_{\limit}(x, f') \dclose{\varepsilon} f_{\rational}(r).
    \]
  \item For all Cauchy approximations $x, y : \Qp \to \RR$, all Cauchy
    approximations $f', g' : \Qp \to A$ with respect to $\dclosesym$, and all
    $\varepsilon, \delta, \eta : \Qp$ with $\varepsilon - (\delta + \eta) > 0$,
    if $x_{\delta} \close{\varepsilon - (\delta + \eta)} y_{\eta}$ and
    $f'(\delta) \dclose{\varepsilon - (\delta + \eta)} g'(\eta)$ then
    \[
      f_{\limit}(x, f') \dclose{\varepsilon} f_{\limit}(y, g').
    \]
  \end{itemize}
  Under these hypotheses, $(\RR, \closesym)$-recursion yields functions
  \begin{align*}
    \recursion_{\RR} & : \RR \to A, \\
    \recursion_{\closesym} & : \prd{u, v : \RR} \prd{\varepsilon : \Qp} u \close{\varepsilon} v \to \recursion_{\RR}(u) \dclose{\varepsilon} \recursion_{\RR}(v)
  \end{align*}
  such that for all $q : \QQ$ and all $x : \Qp \to \RR$ equipped with a witness
  $\varphi$ that $x$ is a Cauchy approximation, the computation rules
  \begin{align*}
    \recursion_{\RR}(\rational(q)) & = f_{\rational}(q), \\
    \recursion_{\RR}(\limit(x, \varphi)) & = f_{\limit}(x, \varphi, \varepsilon \mapsto \recursion_{\RR}(x_{\varepsilon}), \psi)
  \end{align*}
  are satisfied, where for all $\varepsilon, \delta : \Qp$,
  \[
    \psi(\varepsilon, \delta) := \recursion_{\closesym}(x_{\varepsilon},
    x_{\delta}, \varepsilon + \delta, \varphi(\varepsilon, \delta))
  \]
  witnesses that the map
  $\varepsilon \mapsto \recursion_{\RR}(x_{\varepsilon})$ is a Cauchy
  approximation with respect to $\dclosesym$.
\end{definition}

\section{Closeness}
\label{section:closeness}

The induction and recursion principles allow us to build up a collection of
basic properties about the closeness relation. In turn, these properties make
closeness the main tool for constructing functions on the reals, and ultimately,
for equipping $\RR$ with its algebraic and order-theoretic structure. For
example, $\RR$-induction shows that $\varepsilon$-closeness is reflexive.

\begin{lemma}[{\textcite[Lemma 11.3.8]{univalentfoundationsprogram2013}}]
  \label{lem:close-reflexive}
  For all rational $\varepsilon > 0$, we have $u \close{\varepsilon} u$. In
  other words, for each rational $\varepsilon > 0$, the binary relation
  $\close{\varepsilon}$ is reflexive.
\end{lemma}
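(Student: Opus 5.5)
We prove the statement by $\RR$-induction (\Cref{def:rr-induction}) applied to the type family
\[
  A(u) := \prd{\varepsilon : \Qp} u \close{\varepsilon} u .
\]
Since each $u \close{\varepsilon} u$ is a proposition by the last constructor of closeness, $A(u)$ is a proposition as well, being a product of propositions. This disposes of the path case at once. Given $u, v$ with $\cpath(u,v) : u = v$ and elements $a : A(u)$, $b : A(v)$, the transport $\transport^{A}(\cpath(u,v), a)$ and $b$ both inhabit the proposition $A(v)$, so they are equal. That equality is exactly the required dependent path $a =^{A}_{\cpath(u,v)} b$.

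For the rational case we must show $\rational(q) \close{\varepsilon} \rational(q)$ for each $\varepsilon : \Qp$. We use the first closeness constructor. This only requires $-\varepsilon < q - q < \varepsilon$, that is, $-\varepsilon < 0 < \varepsilon$, which follows from $\varepsilon > 0$.

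The limit case is the only one with content. We are given a Cauchy approximation $x$ and the induction hypothesis $x_\delta \close{\eta} x_\delta$ for all $\delta, \eta : \Qp$, and we must show $\limit(x) \close{\varepsilon} \limit(x)$. We use the limit--limit constructor with both approximations equal to $x$, choosing $\delta = \eta = \varepsilon/3$. Then $\varepsilon - (\delta + \eta) = \varepsilon/3 > 0$. The constructor's hypothesis becomes $x_{\varepsilon/3} \close{\varepsilon/3} x_{\varepsilon/3}$, which is an instance of the induction hypothesis.

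The only care needed is in this last step. The induction hypothesis must be available at every tolerance and not just at $\varepsilon$, which is why the family $A$ quantifies over $\varepsilon$ inside rather than fixing it outside the induction. Beyond that, the remaining work is routine: rational arithmetic showing that $\varepsilon - (\varepsilon/3 + \varepsilon/3)$ equals $\varepsilon/3$ and is positive. Since the result is a mere proposition, no computation rules are needed.
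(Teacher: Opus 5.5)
Your proof is correct and is essentially the paper's argument. The thesis obtains reflexivity by $\RR$-induction, deferring to the HoTT book's proof of Lemma 11.3.8: the proposition-valued motive quantifies over all tolerances so the path case is automatic, the first constructor handles rationals, and the limit--limit constructor with $\delta = \eta = \varepsilon/3$ handles limits.
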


Combining the previous lemma with the fact that the family of closeness
relations is separated and takes values in propositions, we can apply Theorem
7.2.2 of \cite{univalentfoundationsprogram2013}, which states that a type
equipped with a reflexive mere relation implying identity is a set.

\begin{corollary}[{\textcite[Theorem 11.3.9]{univalentfoundationsprogram2013}}]
  \label{corollary:reals-set}
  The HoTT book reals form a set.
\end{corollary}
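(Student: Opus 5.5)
We apply Theorem 7.2.2 of the HoTT book. It says that if a type $X$ carries a mere relation $R$ that is reflexive and implies identity, then $X$ is a set. For $X := \RR$ we take
\[
  R(u, v) := \prd{\varepsilon : \Qp} u \close{\varepsilon} v .
\]
Three things must be checked: that $R$ is a mere relation, that it is reflexive, and that it implies identity.

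First, $R$ is a mere relation. By the last constructor of $\closesym$ in \Cref{def:hott-reals}, each type $u \close{\varepsilon} v$ is a proposition. A dependent product of propositions is again a proposition, by function extensionality, which is available in Cubical Agda. Hence $R(u, v)$ is a proposition for all $u, v : \RR$.

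Second, $R$ is reflexive. For any $u : \RR$, \Cref{lem:close-reflexive} gives $u \close{\varepsilon} u$ for every $\varepsilon : \Qp$. Abstracting over $\varepsilon$ yields an element of $R(u, u)$.

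Third, $R$ implies identity. Given $u, v : \RR$ and $\rho : R(u, v)$, the path constructor of \Cref{def:hott-reals} produces $\cpath(u, v)$, applied to $\rho$, which is an element of $u = v$. This is exactly separatedness, so no further argument is needed.

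With all three hypotheses in place, Theorem 7.2.2 applies directly and shows that $\RR$ is a set. The proof of that theorem is the standard Hedberg-style argument; we use it as a black box and do not repeat it.

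None of these steps is a real obstacle, since each hypothesis is either a constructor of the higher inductive-inductive type or an earlier lemma. The only substantive input is \Cref{lem:close-reflexive}, whose proof requires $\RR$-induction (or $(\RR,\closesym)$-induction) and is assumed here. The one point that needs care is bookkeeping: the relation handed to Theorem 7.2.2 must be $R$, which quantifies over $\varepsilon$, and not a single $\close{\varepsilon}$, because only the quantified relation implies identity via $\cpath$.
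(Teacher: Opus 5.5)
Your proposal is correct and is the paper's own argument: apply Theorem 7.2.2 of the HoTT book to the relation $(u, v) \mapsto \prd{\varepsilon : \Qp} u \close{\varepsilon} v$. As you say, it is a mere relation by the truncation constructor of $\closesym$ together with function extensionality, it is reflexive by \Cref{lem:close-reflexive}, and it implies identity by $\cpath$.
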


Similarly, $\closesym$-induction shows that $\varepsilon$-closeness is symmetric.

\begin{lemma}[{\textcite[Lemma 11.3.12]{univalentfoundationsprogram2013}}]
  \label{lem:close-symmetric}
  For all rational $\varepsilon > 0$, if $u \close{\varepsilon} v$ then
  $v \close{\varepsilon} u$, that is, the relation $\close{\varepsilon}$ is
  symmetric.
\end{lemma}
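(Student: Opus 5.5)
We prove the statement by $\closesym$-induction (\Cref{def:close-induction}), taking as motive the family
\[
  u \dclose{\varepsilon}[\varphi] v :\equiv v \close{\varepsilon} u .
\]
The propositionality hypothesis is immediate, since $\closesym$ is mere-relation valued by its final constructor. The induction then yields
\[
  \induction^{\closesym} : \prd{u, v : \RR} \prd{\varepsilon : \Qp} u \close{\varepsilon} v \to v \close{\varepsilon} u,
\]
which is exactly the claim. It remains to supply the four constructor cases. In each case, the plan is to apply the ``mirror image'' constructor of $\closesym$.

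For the rational--rational case we assume $-\varepsilon < q - r < \varepsilon$. Negating gives $-\varepsilon < r - q < \varepsilon$ by elementary rational arithmetic. The first constructor then gives $\rational(r) \close{\varepsilon} \rational(q)$.

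The two mixed cases need the inductive hypothesis on the premise, which $\closesym$-induction does not provide here. Concretely, in the rational--limit case we have $\rational(q) \close{\varepsilon-\delta} y_\delta$ and want $\limit(y) \close{\varepsilon} \rational(q)$. The limit--rational constructor requires $y_\delta \close{\varepsilon-\delta} \rational(q)$, which is the symmetric version of the premise. So we work from the general $(\RR,\closesym)$-induction of \Cref{def:general-induction} with $A$ constant at $\unittype$. This still has the same motive, but its clauses supply the hypothesis $(\rational(q),\ast) \dclose{\varepsilon-\delta} (y_\delta,\ast)$, i.e.\ $y_\delta \close{\varepsilon-\delta} \rational(q)$, in every case. (The statement of \Cref{def:close-induction} already lists this hypothesis in the limit--limit clause, and the general principle provides it in all of them.) With it, the rational--limit case follows from the third constructor. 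Symmetrically, the limit--rational case uses the inductive hypothesis $\rational(r) \close{\varepsilon-\delta} x_\delta$ and the second constructor.

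For the limit--limit case we have $x_\delta \close{\varepsilon-(\delta+\eta)} y_\eta$ with inductive hypothesis $y_\eta \close{\varepsilon-(\delta+\eta)} x_\delta$. We want $\limit(y) \close{\varepsilon} \limit(x)$. The fourth constructor, instantiated with the roles of $\delta$ and $\eta$ swapped, asks for $y_\eta \close{\varepsilon-(\eta+\delta)} x_\delta$. We therefore transport the inductive hypothesis along the rational equality $\delta+\eta = \eta+\delta$, which is the only bookkeeping step. The side condition $\varepsilon-(\eta+\delta)>0$ is handled by the same rewrite.

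The main obstacle is this mismatch. The induction must carry the inductive hypothesis in the mixed cases, so we must use the general principle rather than the abbreviated clause list. After that, the remaining work consists of routine transports along commutativity of addition in $\QQ$ and along the index rewrite.
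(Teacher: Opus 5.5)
Your proof is correct and follows the paper's route. The paper proves the lemma by $\closesym$-induction with motive $v \close{\varepsilon} u$, closing each constructor case with its mirror-image constructor, and you do the same, including the transport along $\delta+\eta=\eta+\delta$ in the limit--limit case. Your switch to the general principle with $A$ constant at $\unittype$ is not a different approach: it just recovers the inductive hypotheses in the mixed cases, which the paper's abbreviated statement of $\closesym$-induction omits but the specialization (as in the HoTT book) provides.
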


We also need closeness to satisfy analogues of familiar metric properties. The
first is a version of the triangle inequality formulated for relations indexed
by a positive rational.

\begin{definition}[{\textcite[\S~11.3.2]{univalentfoundationsprogram2013}}]
  \label{definition:triangle-inequality}
  A relation 
  \[
    \approx : \RR \to \RR \to \Qp \to \mathcal{U}
  \]
  satisfies the \textbf{triangle inequality} if, for all
  $u, v, w : \RR$ and all $\varepsilon, \delta : \Qp$, if
  $u \approx_{\varepsilon} v$ and $v \approx_{\delta} w$ then
  $u \approx_{\varepsilon + \delta} w$.
\end{definition}

The second condition we need is roundedness.

\begin{definition}[{\textcite[\S~11.3.2]{univalentfoundationsprogram2013}}]
  \label{def:rounded}
  A relation
  \[
    \approx : \RR \to \RR \to \Qp \to \mathcal{U}
  \]
  is \textbf{rounded} if, for all $u, v : \RR$ and all $\varepsilon : \Qp$,
  \[
    u \approx_{\varepsilon} v \iff
    \exists \theta : \Qp, (\theta < \varepsilon) \times (u \approx_{\varepsilon - \theta} v).
  \]
  The implication from left to right is called \textbf{openness}, and the
  converse implication is called \textbf{monotonicity}.
\end{definition}

At first this openness condition may seem counterintuitive: why should
$\varepsilon$-closeness imply $(\varepsilon - \theta)$-closeness for some
strictly positive $\theta$? The point is that $u \close{\varepsilon} v$ is
intended to express a \emph{strict} inequality. Indeed, once the algebraic and
order structure of $\RR$ has been developed, Theorem~11.3.44 of
\cite{univalentfoundationsprogram2013} shows
\[
  \left(u \close{\varepsilon} v\right) \simeq
  \left(\lvert u - v \rvert < \rational(\varepsilon)\right).
\]
From this perspective, the openness condition becomes much more natural.

This phenomenon is visible in the rational case. We can show that the relation
$\sim : \QQ \to \QQ \to \Qp \to \mathcal{U}$ given by
\[
  q \close{\varepsilon} r := \lvert q - r \rvert < \varepsilon
\]
satisfies the openness condition.

\begin{lemma}
  The rational closeness relation is open.
\end{lemma}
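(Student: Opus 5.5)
To prove openness, we fix $q, r : \QQ$ and $\varepsilon : \Qp$ with $\lvert q - r \rvert < \varepsilon$. We must produce some $\theta : \Qp$ with $\theta < \varepsilon$ and $\lvert q - r \rvert < \varepsilon - \theta$. Since the rationals have decidable order and computable arithmetic, we can write down an explicit witness. The existential can then be discharged by constructing an actual pair and applying the propositional truncation.

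The natural choice is to take $\theta$ to be half the gap between $\lvert q - r\rvert$ and $\varepsilon$:
\[
  \theta := \frac{\varepsilon - \lvert q - r \rvert}{2}.
\]
The steps, in order, are as follows.

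First, we show $\theta > 0$. The hypothesis gives $\varepsilon - \lvert q - r \rvert > 0$, and halving a positive rational preserves positivity, so $\theta : \Qp$.

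Second, we show $\theta < \varepsilon$. We have $\theta \le \varepsilon/2 < \varepsilon$, using $\lvert q - r\rvert \ge 0$ and $\varepsilon > 0$.

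Third, we compute
\[
  \varepsilon - \theta = \frac{\varepsilon + \lvert q - r \rvert}{2}.
\]
This strictly exceeds $\lvert q - r \rvert$, because it is the midpoint of $\lvert q - r\rvert$ and $\varepsilon$, and these two are strictly ordered. So $q \close{\varepsilon - \theta} r$ holds, which is exactly what openness requires.

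None of these steps is conceptually difficult. The main obstacle is bookkeeping about positivity, since $\varepsilon - \theta$ must itself be recognized as an element of $\Qp$ for the closeness relation to be well-typed. We get this directly from $\theta < \varepsilon$, or alternatively from the midpoint expression above, which is positive because $\varepsilon > 0$. In the formalization this amounts to threading the proof of $\varepsilon - \theta > 0$ into the subtype $\Qp$. One also has to rewrite $\varepsilon - \theta$ into the midpoint form using field identities on $\QQ$, which should be routine ring normalization.

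If the absolute value proves awkward, an equivalent route avoids it entirely: split $\lvert q - r\rvert < \varepsilon$ into the two strict inequalities $-\varepsilon < q - r < \varepsilon$. We then take $\theta$ to be half the minimum of the two gaps $\varepsilon - (q - r)$ and $\varepsilon + (q - r)$, and check both inequalities separately.
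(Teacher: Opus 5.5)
Your proof is correct and follows the paper's argument: the same witness $\theta := (\varepsilon - \lvert q - r\rvert)/2$, positivity from the hypothesis, and the midpoint identity $\varepsilon - \theta = (\lvert q - r\rvert + \varepsilon)/2$ giving $\lvert q - r\rvert < \varepsilon - \theta$. You also check $\theta < \varepsilon$ explicitly, which the paper's definition of roundedness requires but its proof leaves implicit, and your absolute-value-free variant produces the same witness, since $\min(\varepsilon - (q-r), \varepsilon + (q-r)) = \varepsilon - \lvert q - r\rvert$.
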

\begin{proof}
  Suppose $\lvert q - r \rvert < \varepsilon$. Define
  \[
    \theta := \frac{\varepsilon - \lvert q - r \rvert}{2}.
  \]
  Then $\varepsilon - \lvert q - r \rvert > 0$, so $0 < \theta$. Moreover,
  \[
    \varepsilon - \theta = \varepsilon - \frac{\varepsilon - \lvert q - r \rvert}{2} = \frac{\lvert q - r \rvert + \varepsilon}{2}
  \]
  which is the midpoint between $\lvert q - r \rvert$ and $\varepsilon$, and
  since $\lvert q - r \rvert < \varepsilon$, it lies strictly between
  $\lvert q - r \rvert$ and $\varepsilon$. In particular,
  \[
    \lvert q - r \rvert < \varepsilon - \theta.
  \]
  Hence $q \close{\varepsilon - \theta} r$.
\end{proof}

However, reasoning with midpoints in this manner assumes that closeness is
induced by a distance metric. At this stage, the closeness relation has only
been given inductively, so we do not yet have the characterization of
Theorem~11.3.44 available. Since the characterization itself depends on the
triangle inequality and roundedness, appealing to it here would be circular. We
therefore need a different way to analyze the inductively defined closeness
relation.

There is a second difficulty. \Cref{def:hott-reals} defines the closeness
relation by constructors, so its clauses tell us how to \emph{construct}
witnesses of $\varepsilon$-closeness. For example, the rational-rational
constructor gives a map
\[
    -\varepsilon < q - r < \varepsilon \to
    \rational(q) \close{\varepsilon} \rational(r).
\]
Thus the displayed inequality is sufficient to produce a proof that
$\rational(q)$ and $\rational(r)$ are $\varepsilon$-close.

What is missing is the converse implication. If we are instead given a proof of
$\rational(q) \close{\varepsilon} \rational(r)$ as a hypothesis, the inductive
definition does not by itself allow us to conclude that
$- \varepsilon < q - r < \varepsilon$. The same issue arises in the other three
cases involving limits. In other words, the constructor clauses for closeness do
not yet provide a case-by-case characterization of closeness when it appears as
an assumption.

The strategy in \cite{univalentfoundationsprogram2013} is therefore to define an
auxiliary closeness relation $\approx$ by recursion, so that when its arguments
are rational points or limits, the expression $u \approx_{\varepsilon} v$
unfolds according to one of four explicit clauses. In this sense, $\approx$
computes on the point constructors of the reals. This gives a closeness relation
whose behavior can be read off immediately from the constructors used in its
arguments.

\begin{theorem}[{\textcite[Theorem 11.3.16]{univalentfoundationsprogram2013}}]
  \label{theorem:alternative-closeness}
  There is a family of mere relations $\approx : \RR \to \RR \to \Qp \to \mathcal{U}$ such that
  \begin{align*}
    (\rational(q) \approx_{\varepsilon} \rational(r)) & := (- \varepsilon < q - r < \varepsilon) \\
    (\rational(q) \approx_{\varepsilon} \limit(y)) & := \exists \delta : \Qp, \rational(q) \approx_{\varepsilon - \delta} y_{\delta} \\
    (\limit(x) \approx_{\varepsilon} \rational(r)) & :=  \exists \delta : \Qp, x_{\delta} \approx_{\varepsilon - \delta} \rational(r) \\
    (\limit(x) \approx_{\varepsilon} \limit(y)) & := \exists \delta, \eta : \Qp, x_{\delta} \approx_{\varepsilon - (\delta + \eta)} y_{\eta}.
  \end{align*}
  Moreover, $\approx$ is rounded and satisfies the mixed triangle laws
  \begin{align*}
    (u \approx_{\varepsilon} v) \to (v \close{\delta} w) \to (u \approx_{\varepsilon + \delta} w), \\
    (u \close{\varepsilon} v) \to (v \approx_{\delta} w) \to (u \approx_{\varepsilon + \delta} w).
  \end{align*}
\end{theorem}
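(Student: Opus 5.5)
The plan is to follow the HoTT book and build $\approx$ by a nested application of $(\RR,\closesym)$-recursion (\Cref{def:enhanced-recursion}). The outer recursion targets the type $A$ of \emph{rounded families}: mere-relation-valued maps $B : \RR \to \Qp \to \mathrm{Prop}$ that are rounded in the second argument, in the sense that $B(w,\varepsilon) \iff \exists \theta < \varepsilon,\, B(w,\varepsilon-\theta)$. On $A$ we put the closeness relation
\[
  B \dclose{\varepsilon} C \;:=\; \prd{w : \RR}\prd{\eta : \Qp} \bigl(B(w,\eta) \to C(w,\varepsilon+\eta)\bigr) \times \bigl(C(w,\eta) \to B(w,\varepsilon+\eta)\bigr).
\]
This relation is clearly a mere relation. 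It is separated because of roundedness: if $B(w,\eta)$ holds, openness gives $B(w,\eta-\theta)$, and then $B \dclose{\theta} C$ yields $C(w,\eta)$. Combined with propositional extensionality and function extensionality, this gives $B = C$. The outer clauses are then set as $f_{\rational}(q) := (w \mapsto \rational(q) \approx_{(-)} w)$ and $f_{\limit}(x,f') := (w,\varepsilon) \mapsto \exists \delta,\, f'(\delta)(w,\varepsilon-\delta)$. Roundedness of $f_{\limit}$ follows directly from roundedness of each $f'(\delta)$.

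The four compatibility obligations of the outer recursion are inequalities between these families. Each should follow by unfolding definitions, applying roundedness, and using the Cauchy condition $f'(\varepsilon) \dclose{\varepsilon+\delta} f'(\delta)$ to move between indices.

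Defining $f_{\rational}(q)$ needs an inner recursion on $w$, this time into the type $A' := \Qp \to \mathrm{Prop}$ restricted to rounded predicates. Here $L \dclose{\varepsilon} M$ means $\prd{\eta} (L(\eta) \to M(\varepsilon+\eta)) \times (M(\eta) \to L(\varepsilon+\eta))$. The inner clauses are $\rational(r) \mapsto (\varepsilon \mapsto -\varepsilon < q-r < \varepsilon)$ and $\limit(y) \mapsto (\varepsilon \mapsto \exists \delta,\, g'(\delta)(\varepsilon-\delta))$. Its obligations reduce to rational arithmetic together with the triangle inequality for the rational relation, using roundedness in the same way as before. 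Once both recursions are in place, the four defining equations hold by the computation rules of \Cref{def:enhanced-recursion}, and $\approx$ is a mere relation by construction.

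Roundedness and the mixed triangle laws should not be proved separately after the fact. They are exactly what the recursion already supplies. Roundedness is built into the target type. The law $(u \close{\varepsilon} v) \to (v \approx_{\delta} w) \to (u \approx_{\varepsilon+\delta} w)$ is precisely $\recursion_{\closesym}$ of the outer recursion, instantiated at $w$ and $\delta$ using the symmetric component of $\dclosesym$; symmetry of $\closesym$ from \Cref{lem:close-symmetric} supplies the orientation. The other law, $(u \approx_{\varepsilon} v) \to (v \close{\delta} w) \to (u \approx_{\varepsilon+\delta} w)$, comes from $\recursion_{\closesym}$ of the inner recursion, applied in the second argument to $\rational(q)$ and to $\limit(x)$. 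For this to work, the inner structure has to be available uniformly in the first argument. It can be obtained by an additional $\RR$-induction on $u$ into the proposition $\prd{v,w,\varepsilon,\delta}(\dots)$, with the limit case handled by unfolding $f_{\limit}$ and using the hypothesis at $x_\delta$.

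I expect the main obstacle to be the limit–limit obligation of the outer recursion. It asks to show $f_{\limit}(x,f') \dclose{\varepsilon} f_{\limit}(y,g')$ from $f'(\delta) \dclose{\varepsilon-(\delta+\eta)} g'(\eta)$, for every $w$. To do this one has to take a witness $\delta'$ on one side, transport it through the Cauchy conditions for $f'$ and $g'$, and rebalance the errors to produce a witness on the other side. The inner recursion has the same difficulty in its $\limit$ clause. My first attempt will be to pick the new witness on the other side as $\eta$ itself and absorb the slack $\delta'$ versus $\delta$ via the Cauchy condition with index $\delta+\delta'$. The arithmetic then reduces to checking $(\varepsilon+\eta_0) - \eta = (\delta'+\delta) + (\varepsilon-(\delta+\eta)) + (\eta_0-\delta')$ and similar identities. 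I would discharge these with a ring solver in the formalization.
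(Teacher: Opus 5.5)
The step that fails is your claim that all four defining equations hold by the computation rules. Your outer limit clause is the pointwise family $f_{\limit}(x,f')(w,\varepsilon) := \exists \delta,\ f'(\delta)(w,\varepsilon-\delta)$, with no inner recursion on $w$. So the computation rules give only
\[
  (\limit(x) \approx_{\varepsilon} \limit(y)) \equiv \exists \delta : \Qp,\ x_{\delta} \approx_{\varepsilon-\delta} \limit(y).
\]
Since $x_{\delta}$ is an arbitrary real, the right-hand side does not unfold any further. The stated clause $\exists \delta, \eta : \Qp,\ x_{\delta} \approx_{\varepsilon-(\delta+\eta)} y_{\eta}$ is therefore not what you have defined. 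You can recover it only after the fact:
\begin{itemize}
\item prove, by a proposition-valued $\RR$-induction on $u$, that $u \approx_{\varepsilon} \limit(y) \iff \exists \eta : \Qp,\ u \approx_{\varepsilon-\eta} y_{\eta}$ for all $y$ and $\varepsilon$ (the rational case is the inner computation rule, and the limit case reassociates the two existentials using the hypothesis at $x_{\delta}$);
\item then apply propositional extensionality.
\end{itemize}
This gives the fourth clause as a path between propositions, not by computation. Any argument that unpacks or produces $\limit(x) \approx_{\varepsilon} \limit(y)$, such as the limit--limit cases in the proof of \Cref{theorem:close-eq-close'}, must invoke that lemma instead of simply unfolding. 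A smaller omission: the outer rational--rational obligation $f_{\rational}(q) \dclose{\varepsilon} f_{\rational}(r)$ compares two different inner recursions at an arbitrary $w$. It therefore needs its own $\RR$-induction on $w$ (the rational triangle inequality, then the hypothesis at $y_{\delta}$), not mere unfolding.

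The paper, following the book, instead defines $\limit(x) \approx_{(-)} (-)$ by a second inner recursion on the second argument, with limit clause $\exists \delta,\eta,\ x_{\delta} \approx_{\varepsilon-(\delta+\eta)} y_{\eta}$. That makes all four clauses definitional, which is the stated purpose of $\approx$ in the paper: it ``computes on the point constructors.'' The cost is that the inner recursion must name $y_{\eta}$. This is exactly why the enhanced recursion principle had to be strengthened to carry the domain approximation, as described in \Cref{section:lessons-learned}.

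Your route buys real simplifications:
\begin{itemize}
\item None of your recursion data mentions $x$ or $y$, so even the naive recursion principle would suffice.
\item The outer obligations involving limits are pointwise in $w$, with exactly the index arithmetic you give.
\item The law $(u \approx_{\varepsilon} v) \to (v \close{\delta} w) \to (u \approx_{\varepsilon+\delta} w)$ in the limit case is a one-line induction.
\end{itemize}
Once you add the characterization lemma above, your argument proves the theorem with the fourth equation read as an equality of propositions. It does not give the definitional $:=$ that the statement and the formalization use.
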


The advantage of $\approx$ is that these explicit clauses provide the converse
information that was missing from $\closesym$. To use this information for the
original closeness relation, however, we need to show that the two
relations coincide.

\begin{theorem}[{\textcite[Theorem 11.3.32]{univalentfoundationsprogram2013}}]
  \label{theorem:close-eq-close'}
  For any $u, v : \RR$ and $\varepsilon : \Qp$, we have
  \[
    u \close{\varepsilon} v \iff u \approx_{\varepsilon} v.
  \]
  Since both sides are mere propositions, it follows that
  \[
    (u \close{\varepsilon} v) = (u \approx_{\varepsilon} v).
  \]
\end{theorem}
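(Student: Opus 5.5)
The proof splits into the two implications, each handled by a different induction principle.

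\textbf{Forward direction: $\closesym$ implies $\approx$.} Apply $\closesym$-induction (\Cref{def:close-induction}) with the family $u \dclose{\varepsilon}[\varphi] v := u \approx_{\varepsilon} v$. This family is a proposition by \Cref{theorem:alternative-closeness}, which discharges the last hypothesis. The constructor cases then reduce to the defining clauses of $\approx$.

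\begin{itemize}
\item \emph{Rational--rational.} The hypothesis $-\varepsilon < q - r < \varepsilon$ is literally $\rational(q) \approx_{\varepsilon} \rational(r)$.
\item \emph{Rational--limit.} We are given $\rational(q) \close{\varepsilon-\delta} y_\delta$ together with the inductive hypothesis $\rational(q) \approx_{\varepsilon-\delta} y_\delta$. We must produce $\exists \delta', \rational(q) \approx_{\varepsilon-\delta'} y_{\delta'}$, and the witness $\delta' := \delta$ works.
\item \emph{Limit--rational.} This case is symmetric to the previous one.
\item \emph{Limit--limit.} Take the witnesses $\delta, \eta$ directly.
\end{itemize}

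One caveat: the statement of $\closesym$-induction above lists the inductive hypothesis explicitly only in the limit--limit clause. To get it in the mixed clauses as well, I will use the general principle of \Cref{def:general-induction} with $A$ constant at $\unittype$, which supplies these hypotheses in every clause. So this direction is essentially bookkeeping.

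\textbf{Reverse direction: $\approx$ implies $\closesym$.} Here we cannot induct on a proof of $\approx$, since $\approx$ is defined by recursion rather than by constructors. Instead I will do $\RR$-induction (\Cref{def:rr-induction}) twice, on $u$ and then on $v$, proving
\[
  \prd{\varepsilon : \Qp} (u \approx_{\varepsilon} v) \to (u \close{\varepsilon} v).
\]
This motive is a proposition, because closeness is a mere relation. Hence the path-constructor obligations are automatic: between propositions, any dependent path exists. We are left with the four point cases.

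\begin{itemize}
\item \emph{Rational--rational.} Apply the first closeness constructor.
\item \emph{Rational--limit.} Since the goal $\rational(q) \close{\varepsilon} \limit(y)$ is a mere proposition, we may eliminate the truncated existential and obtain $\delta$ with $\rational(q) \approx_{\varepsilon-\delta} y_\delta$. The inner inductive hypothesis, the one for $y_\delta$, turns this into $\rational(q) \close{\varepsilon-\delta} y_\delta$. The second closeness constructor then finishes.
\item \emph{Limit--rational.} Symmetric to the previous case.
\item \emph{Limit--limit.} Extract $\delta, \eta$ and apply the inductive hypothesis at $x_\delta$. That hypothesis is quantified over all $v$, so it covers $y_\eta$. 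The fourth closeness constructor then finishes.
\end{itemize}

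Positivity side conditions such as $\varepsilon - \delta > 0$ are part of the data of $\approx$, since $\approx$ is indexed by $\Qp$ and the witnesses carry them. I expect to expose this explicitly when unfolding the clauses.

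\textbf{Main obstacle.} The delicate point is arranging the nested induction correctly. The outer induction on $u$ must carry a motive quantified over all $v$ and all $\varepsilon$, so that in the limit--limit case the hypothesis for $x_\delta$ can be applied at the different point $y_\eta$ and the different tolerance $\varepsilon - (\delta + \eta)$. The inner induction on $v$ must likewise keep $\varepsilon$ general. A second subtlety is making sure that $\approx$ on constructor inputs reduces to the stated clauses; this holds definitionally by \Cref{theorem:alternative-closeness}, up to the computation rules of $(\RR,\closesym)$-recursion used to build $\approx$.

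Once both implications are established, both sides are mere propositions: $\closesym$ by its final constructor, and $\approx$ by \Cref{theorem:alternative-closeness}. Propositional univalence then upgrades the logical equivalence to the stated equality of types.
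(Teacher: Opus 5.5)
Your proposal is correct and takes essentially the same route as the proof the paper relies on (HoTT book Theorem 11.3.32, whose forward implication the paper notes is proved by induction on closeness): $\closesym$-induction with motive $\approx$ in one direction, and nested $\RR$-induction with a proposition-valued motive quantified over all $v$ and $\varepsilon$ in the other. Your caveat is also well taken. The paper's statement of $\closesym$-induction omits the inductive hypotheses in the mixed clauses, and you do need them there; the general principle with $A$ constant at $\unittype$ supplies them.
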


Once this equivalence is established, the explicit characterization of $\approx$
can be transferred back to $\closesym$, along with the roundedness and triangle
properties built into the construction of $\approx$.

\begin{corollary}[{\textcite[Corollary 11.3.33]{univalentfoundationsprogram2013}}]
  Closeness is rounded and satisfies the triangle inequality.
\end{corollary}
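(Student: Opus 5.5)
The plan is to transport the properties already established for $\approx$ in \Cref{theorem:alternative-closeness} along the identification of \Cref{theorem:close-eq-close'}. Since for every $u, v, \varepsilon$ we have $(u \close{\varepsilon} v) = (u \approx_{\varepsilon} v)$, the two relations agree pointwise. Rewriting along these paths, or simply composing with the two directions of the logical equivalence, turns any statement about $\approx$ into the corresponding statement about $\closesym$.

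\emph{Roundedness.} Suppose $u \close{\varepsilon} v$. By \Cref{theorem:close-eq-close'} we get $u \approx_{\varepsilon} v$. Openness of $\approx$ then gives $\exists \theta : \Qp, (\theta < \varepsilon) \times (u \approx_{\varepsilon - \theta} v)$. Applying the backward direction of \Cref{theorem:close-eq-close'} inside the truncation (legitimate, since the goal is itself a mere proposition) yields $\exists \theta, (\theta<\varepsilon) \times (u \close{\varepsilon-\theta} v)$. Monotonicity is the same argument in reverse: from the truncated witness, convert $u \close{\varepsilon-\theta} v$ to $\approx$, apply monotonicity of $\approx$, and convert back. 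The only point of care is that $\varepsilon - \theta$ must be packaged as an element of $\Qp$ using $\theta < \varepsilon$; the same packaging is used on both sides, so the equivalence applies at that index.

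\emph{Triangle inequality.} Given $u \close{\varepsilon} v$ and $v \close{\delta} w$, convert the first hypothesis to $u \approx_{\varepsilon} v$. The first mixed triangle law of \Cref{theorem:alternative-closeness} then gives $u \approx_{\varepsilon+\delta} w$, and \Cref{theorem:close-eq-close'} converts this back to $u \close{\varepsilon+\delta} w$. Only one conversion is needed before the mixed law applies. This is exactly why the mixed laws were stated with one $\closesym$ and one $\approx$ hypothesis.

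There is no real obstacle here: all the substantive work lives in \Cref{theorem:alternative-closeness} and \Cref{theorem:close-eq-close'}. The only bookkeeping is in the Agda development, where positivity witnesses for $\varepsilon - \theta$ must match on both sides. That is harmless, because $\Qp$ carries propositional positivity proofs, so any two such witnesses are equal.
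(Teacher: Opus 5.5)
Your proposal is correct and follows the same route as the paper: roundedness and the triangle inequality for $\closesym$ are transferred from the corresponding properties of $\approx$, using one of the mixed triangle laws, across the equivalence $u \close{\varepsilon} v \iff u \approx_{\varepsilon} v$. The paper treats this transfer as immediate, and your handling of the truncation and of the positivity witness for $\varepsilon - \theta$ is exactly the routine bookkeeping it leaves implicit.
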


The results of this section supply the basic structural properties of the
closeness relation needed in the remainder of the development. In particular,
roundedness and the triangle inequality make closeness workable as a substitute
for metric reasoning, while the auxiliary relation $\approx$ provides the
explicit case analysis needed to extract usable information from closeness
hypotheses. We next use these properties to provide lifts of Lipschitz and
non-expanding maps from the rationals to the reals.

\section{Lifting Lipschitz and Nonexpanding Maps}
\label{section:lifting}

One conceptual advantage of the HoTT book reals is that their construction
forces us to make explicit the principle by which maps on the rationals extend
to the reals. In the classical quotient construction of the Cauchy reals, we
typically define operations on representatives and then prove that they are
well-defined on equivalence classes. This procedure relies on the fact that a
suitably continuous map on the dense subspace $\QQ$ extends to the completion
$\RR$, but the corresponding well-definedness argument can easily recede into
the background as a routine technical step. In the higher-inductive setting,
however, there are no such representatives, so we must formulate and prove
extension principles explicitly using the recursion principle for the reals.

The HoTT book develops extension lemmas for Lipschitz and non-expanding maps
\cite{univalentfoundationsprogram2013}. These hypotheses are strong enough to
lift important operations such as addition and $\min$/$\max$, but they do not
capture every operation we ultimately want to define on the reals. In
particular, neither multiplication nor reciprocal are globally Lipschitz, so
their definitions cannot be obtained from the following lemmas alone and require
additional work in \Cref{section:algebra-order}. Accordingly, the results of
this section should be read as a collection of extension principles tailored to
the later algebraic development, rather than as a final characterization of the
conditions under which maps can be lifted from $\QQ$ to $\RR$. A broader
extension theorem, for example for uniformly continuous maps, would be desirable
but is beyond the scope of this thesis; see \Cref{chapter:future-work}.

The first extension principle concerns Lipschitz maps, whose defining condition
is naturally expressed in terms of the closeness relation.

\begin{definition}[{\textcite[Definition 11.3.14]{univalentfoundationsprogram2013}}]
  \label{definition:lipschitz}
  In the rational case, a map $f : \QQ \to \RR$ is \textbf{Lipschitz} if it
  comes equipped with an element $L : \Qp$, the \textbf{Lipschitz} constant,
  such that, for all $q, r : \QQ$ and $\varepsilon : \Qp$,
  \[
    \lvert q - r \rvert < \varepsilon \implies f(q) \close{L \varepsilon} f(r).
  \]
  Analogously, for the real case, a map $g : \RR \to \RR$ is \textbf{Lipschitz}
  if it comes equipped with an element $L : \Qp$ such that, for all $u, v : \RR$
  and $\varepsilon : \Qp$,
  \[
    u \close{\varepsilon} v \implies g(u) \close{L \varepsilon} g(v).
  \]
\end{definition}

With this definition in place, we can state the extension principle.

\begin{lemma}[{\textcite[Lemma 11.3.15]{univalentfoundationsprogram2013}}]
  \label{lemma:lift-lipschitz}
  Let $f : \QQ \to \RR$ be Lipschitz with constant $L : \Qp$. Then there is a
  map $\overline{f} : \RR \to \RR$ which is Lipschitz with the same constant $L$
  and satisfies
  \[
    \overline{f}(\rational(q)) = f(q)
  \]
  for all $q : \QQ$.
\end{lemma}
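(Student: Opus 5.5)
We construct $\overline{f}$ by $(\RR, \closesym)$-recursion (\Cref{def:enhanced-recursion}) with codomain $A := \RR$. The relation is the rescaled closeness $a \dclose{\varepsilon} b := a \close{L\varepsilon} b$. Our data are as follows.
\begin{itemize}
  \item The rational case is $f_{\rational} := f$.
  \item In the limit case, given a Cauchy approximation $x$ and a map $f' : \Qp \to \RR$ with $f'(\varepsilon) \close{L(\varepsilon+\delta)} f'(\delta)$, we take $f_{\limit}(x, f') := \limit(\varepsilon \mapsto f'(\varepsilon/L))$.
\end{itemize}
The limit is well-formed because $f'(\varepsilon/L) \close{\varepsilon + \delta} f'(\delta/L)$, which is the hypothesis instantiated at $\varepsilon/L, \delta/L$ after rewriting $L(\varepsilon/L + \delta/L) = \varepsilon+\delta$. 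The computation rule $\overline{f}(\rational(q)) = f(q)$ then holds by the first computation rule. Moreover, $\recursion_{\closesym}$ gives $u \close{\varepsilon} v \to \overline{f}(u) \close{L\varepsilon} \overline{f}(v)$, which is exactly the Lipschitz property with constant $L$. So everything reduces to discharging the hypotheses of the recursion principle.

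The structural hypotheses are immediate.
\begin{itemize}
  \item $\dclosesym$ is a mere relation because $\closesym$ is (last constructor of \Cref{def:hott-reals}).
  \item Separatedness holds because if $a \close{L\varepsilon} b$ for all $\varepsilon$, then for any $\theta$ we instantiate at $\theta/L$ to get $a \close{\theta} b$, and apply $\cpath$.
  \item The rational-rational clause is precisely the Lipschitz hypothesis on $f$, since $-\varepsilon<q-r<\varepsilon$ is $\lvert q-r\rvert<\varepsilon$.
\end{itemize}

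The three clauses involving limits are the real work, and I expect the limit-limit clause to be the main obstacle, mostly due to rescaling bookkeeping. For the rational-limit clause we are given $f(q) \close{L(\varepsilon-\delta)} g'(\delta)$ and must show $f(q) \close{L\varepsilon} \limit(\eta \mapsto g'(\eta/L))$. We use the rational-limit constructor of closeness with parameter $L\delta$, noting $L\varepsilon - L\delta > 0$. This requires $f(q) \close{L\varepsilon - L\delta} g'(L\delta/L)$, which is the hypothesis after rewriting $L\delta/L = \delta$ and $L(\varepsilon-\delta) = L\varepsilon - L\delta$. Note that this uses the constructor directly, not any triangle inequality, so it does not depend on the still-unproved corollary. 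The limit-rational clause is symmetric. For the limit-limit clause we are given $f'(\delta) \close{L(\varepsilon-(\delta+\eta))} g'(\eta)$ and apply the limit-limit constructor with parameters $L\delta$ and $L\eta$. We check $L\varepsilon - (L\delta+L\eta) = L(\varepsilon-(\delta+\eta)) > 0$, and again the terms $f'(L\delta/L)$ and $g'(L\eta/L)$ reduce to $f'(\delta)$ and $g'(\eta)$.

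The only subtlety is that these rational identities such as $L\delta/L = \delta$ are propositional, not definitional. So in each clause we must transport along paths in $\Qp$ (a set, so this is harmless) both the indices of $\closesym$ and the arguments of $f'$ and $g'$. This is also why I define the limit case via $f'(\varepsilon/L)$ rather than trying to avoid division. Once these rewrites are in place, each clause is a single application of the corresponding closeness constructor.
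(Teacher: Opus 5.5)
There is a genuine gap in the rational--limit and limit--rational clauses. Your overall setup matches the HoTT book construction that the paper formalizes: enhanced $(\RR,\closesym)$-recursion into $\RR$ with $a \dclose{\varepsilon} b := a \close{L\varepsilon} b$ and $\overline{f}(\limit(x)) := \limit(\varepsilon \mapsto \overline{f}(x_{\varepsilon/L}))$. Your treatment of mereness, separatedness, the rational--rational clause and well-formedness of the limit is fine.

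The problem is this. The rational--limit constructor of $\closesym$ only concludes statements of the form $\rational(q') \close{\varepsilon} \limit(y)$, so its left endpoint must literally be a rational point. In your clause the left endpoint is $f(q)$, and since $f : \QQ \to \RR$ this is an arbitrary real. In the paper the lemma is applied to $q \mapsto m_{q}(v)$, whose values are not rational points in general. So the constructor does not apply, and the same failure occurs with $f(r)$ in the limit--rational clause. The limit--limit clause, which you flagged as the main obstacle, is actually the only one that is a single constructor application (plus transports), because both endpoints there are limits produced by $f_{\limit}$.

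What you need is the generalized statement: for every $u : \RR$, every Cauchy approximation $y$ and all $\alpha, \beta : \Qp$, if $u \close{\alpha} y_{\beta}$ then $u \close{\alpha + \beta} \limit(y)$, together with its mirror image via symmetry. Applying it with $u := f(q)$, $y := \eta \mapsto g'(\eta/L)$, $\beta := L\delta$ and $\alpha := L(\varepsilon - \delta)$ closes the rational--limit clause, and the limit--rational clause is symmetric.

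This statement is not a constructor, and it cannot be obtained without extracting information from closeness hypotheses. It follows from openness (giving $u \close{\alpha - \theta} y_{\beta}$ for some $\theta$), the fact that $y_{\beta} \close{\beta + \theta} \limit(y)$, and the triangle inequality. So, contrary to your remark, the proof does depend on the roundedness and triangle inequality of Section~\ref{section:closeness}, and hence on the alternative closeness relation $\approx$. This is why the paper develops closeness before the lifting lemmas, and why it packages exactly this fact as the \texttt{closeLimit} helper in \texttt{Close.Other}. There is no circularity, since neither the construction of $\approx$ nor its equivalence with $\closesym$ uses Lipschitz extension.
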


For the binary operations needed later, \cite{univalentfoundationsprogram2013}
does not state a fully general two-variable Lipschitz extension lemma. Instead,
it isolates the special case of maps that are non-expanding, that is, Lipschitz
with constant $L \le 1$ in each variable separately. This hypothesis is more
restrictive than allowing an arbitrary Lipschitz constant, but apart from
multiplication and reciprocal it covers the operations we want to extend in
\Cref{section:algebra-order}.

\begin{definition}[{\textcite[Lemma 11.3.40] {univalentfoundationsprogram2013}}]
  A map $f : \QQ \to \QQ$ is \textbf{non-expanding} if, for all $q, r : \QQ$,
  \[
    \lvert f(q) - f(r) \rvert \le \lvert q - r \rvert.
  \]
  Equivalently, if $\lvert q-r\rvert < \varepsilon$, then
  $\lvert f(q)-f(r)\rvert < \varepsilon$ for every $\varepsilon : \Qp$.  This is
  the form that generalizes directly to the reals: a map $f : \RR \to \RR$ is
  non-expanding if, for all $u, v : \RR$ and $\varepsilon : \Qp$,
  \[
    u \close{\varepsilon} v \implies f(u) \close{\varepsilon} f(v).
  \]
  Moreover, a function $\QQ \to \QQ \to \QQ$ is non-expanding if, for all
  $r : \QQ$, the map $f(-, r)$ is non-expanding and, for all $q : \QQ$, the map
  $f(q, -)$ is non-expanding. The two-variable real case is defined analogously.
\end{definition}

\begin{lemma}[{\textcite[Lemma 11.3.40]{univalentfoundationsprogram2013}}]
  \label{lemma:lift-nonexpanding}
  Let $f : \QQ \to \QQ \to \QQ$ be a non-expanding map. Then there is a non-expanding map $\overline{f} : \RR \to \RR \to \RR$ such that
  \[
    \overline{f}(\rational(q), \rational(r)) = \rational(f(q, r)).
  \]
  for all $q, r : \RR$.
\end{lemma}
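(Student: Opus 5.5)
The construction proceeds in two stages, each an application of $(\RR, \closesym)$-recursion (\Cref{def:enhanced-recursion}). Throughout, the target relations are the ones natural for non-expanding maps. We do not lift each variable separately and then glue the results, because the first stage must already produce a non-expanding function of the second variable.

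\textbf{Stage one: a real second argument.} Fix $q : \QQ$. The map $r \mapsto \rational(f(q,r))$ is Lipschitz with constant $1$, since non-expansion gives $\lvert f(q,r) - f(q,r') \rvert < \varepsilon$ whenever $\lvert r - r'\rvert < \varepsilon$. Hence \Cref{lemma:lift-lipschitz} yields a non-expanding $f_q : \RR \to \RR$ with $f_q(\rational(r)) = \rational(f(q,r))$.

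We then need these maps to vary non-expandingly in $q$. The claim is that $\lvert q - q'\rvert < \varepsilon$ implies $f_q(v) \close{\varepsilon} f_{q'}(v)$ for all $v : \RR$. We prove this by $\RR$-induction on $v$, with the motive a proposition so the path case is automatic.
\begin{itemize}
  \item For rational $v$, it is the hypothesis on $f(-,r)$.
  \item For $v = \limit(y)$, use roundedness of $\closesym$ to choose $\theta$ with $\lvert q - q'\rvert < \varepsilon - \theta$. Then combine the induction hypothesis at $y_{\theta/4}$ with the limit-limit constructor. This uses the fact that $f_q(\limit(y))$ is a limit of $f_q(y_\delta)$, which holds by the computation rule of the Lipschitz lift.
\end{itemize}

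\textbf{Stage two: a real first argument.} Let $A$ be the set of non-expanding maps $\RR \to \RR$, and define $g \dclose{\varepsilon} h$ to mean $\forall v,\ g(v) \close{\varepsilon} h(v)$. This is a mere relation because $\closesym$ is propositional. It is separated by the path constructor of $\RR$ together with function extensionality, and the non-expansion witness is a proposition. The recursion data are as follows.
\begin{itemize}
  \item $f_{\rational}(q) := f_q$.
  \item $f_{\limit}(x, g') := v \mapsto \limit(\varepsilon \mapsto g'_\varepsilon(v))$. This is a Cauchy approximation by the Cauchy condition on $g'$. It is non-expanding by the limit-limit constructor, after using roundedness to shave off a $\theta$ and choosing $\delta = \eta = \theta/2$.
  \item The rational-rational clause is exactly the claim proved in stage one.
  \item The three clauses involving limits all follow one pattern. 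Fix $v$, and apply the corresponding closeness constructor to the pointwise hypothesis at $v$. For example, for the rational-limit clause, $f_q(v) \close{\varepsilon-\delta} g'_\delta(v)$ gives $f_q(v) \close{\varepsilon} \limit(g'_{(-)}(v))$ by the second constructor. The first argument's closeness hypothesis is not even needed.
\end{itemize}
Recursion gives $\overline{f}(u) := \recursion_{\RR}(u)$ and $\overline{f}(u,v) := \recursion_{\RR}(u)(v)$. Non-expansion in $v$ holds because each $\recursion_{\RR}(u)$ lies in $A$. Non-expansion in $u$ is exactly $\recursion_{\closesym}$. The computation rule $\overline{f}(\rational(q), \rational(r)) = f_q(\rational(r)) = \rational(f(q,r))$ follows from the two computation rules.

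\textbf{Main obstacle.} The hardest step is stage one's uniform bound $f_q(v) \close{\varepsilon} f_{q'}(v)$ for limits $v$. The hypothesis $\lvert q-q'\rvert<\varepsilon$ has to be split into slack that absorbs the approximation errors on both sides. This needs openness of rational closeness (proved above) together with the triangle inequality for $\closesym$ (Corollary 11.3.33). The fallback plan is to approximate: first show $f_q(\limit(y)) \close{\delta} f_q(y_\delta)$ for every $\delta$, then chain three closeness facts with budgets $\delta$, $\varepsilon - \theta$, and $\delta$, choosing $2\delta < \theta$.
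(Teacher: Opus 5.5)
Your plan has the same shape as the HoTT book proof that the paper cites and formalizes; the thesis itself gives no proof of this lemma. You Lipschitz-extend $r \mapsto \rational(f(q,r))$ with constant $1$, then use $(\RR,\closesym)$-recursion into the set of non-expanding maps with pointwise closeness, taking limits pointwise. Stage one, the mere-relation and separatedness checks, the pointwise limit and its non-expansion, the limit--limit clause, and the final assembly are all sound.

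\textbf{The gap.} You claim that the rational--limit and limit--rational clauses follow by applying the corresponding closeness constructor at each $v$. The second constructor only yields statements $\rational(s) \close{\varepsilon} \limit(y)$ whose left endpoint is literally the point constructor $\rational$. Your left endpoint $f_q(v)$ is an arbitrary real and is not in general of that form; already $f_q(\limit(z))$ computes to a limit. Likewise $f_r(v)$ in the limit--rational clause is not a rational point. Only in the limit--limit clause are both endpoints, $\limit(\varepsilon \mapsto f'_{\varepsilon}(v))$ and $\limit(\varepsilon \mapsto g'_{\varepsilon}(v))$, of the shape a constructor requires.

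\textbf{The repair.} You need a lemma about arbitrary reals: for all $u : \RR$, all Cauchy approximations $z$, and all $\delta < \varepsilon$, if $u \close{\varepsilon - \delta} z_{\delta}$ then $u \close{\varepsilon} \limit(z)$. This is the \texttt{closeLimit} helper the paper mentions. It is not a constructor instance. It follows from openness (giving $u \close{\varepsilon - \delta - \theta} z_{\delta}$), the fact $z_{\delta} \close{\delta + \theta} \limit(z)$, and the triangle inequality. So it rests on the alternative-closeness results, not on the constructors alone. Apply it at each $v$, with symmetry for the limit--rational clause, and both cases close; the hypothesis on the first argument is indeed unused.

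\textbf{A smaller slip.} The fallback in your last paragraph asserts $f_q(\limit(y)) \close{\delta} f_q(y_{\delta})$. This is false in general because closeness is strict: take $f(q,r) := r$ and $y_{\delta} := \rational(\delta)$, whose limit is $\rational(0)$. The correct bound is $\close{\delta + \eta}$ for any $\eta > 0$. Your primary stage-one argument, via the limit computation rule of the Lipschitz lift and the fourth constructor, does not need the fallback.
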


In addition to proving that such extensions exist, we also need to know when
they are uniquely determined. The relevant uniqueness statement can be more
generally formulated in terms of arbitrary continuous functions, so we first
express the standard definition of continuity in terms of the closeness
relation.

\begin{definition}[{\textcite[Lemma 11.3.39]{univalentfoundationsprogram2013}}]
  A map $f : \RR \to \RR$ is \textbf{continuous at a point} $u : \RR$ if for all
  $\varepsilon : \Qp$, there merely exists a $\delta : \Qp$ such that for all
  $v : \RR$,
  \[
    u \close{\delta} v \implies f(u) \close{\varepsilon} f(v).
  \]
  A map $f : \RR \to \RR$ is \textbf{continuous} if it is continuous for every
  point $u : \RR$.
\end{definition}

This definition encompasses the extensions constructed above. Indeed, any
Lipschitz map is continuous: given $\varepsilon$, choose
$\delta := \varepsilon / L$; then $u \close{\delta} v$ implies
$f(u) \close{\varepsilon} f(v)$. As the special case of a Lipschitz map with
constant $L \le 1$, every univariate non-expanding map is continuous. Likewise,
if $f : \RR \to \RR \to \RR$ is non-expanding in each variable separately, then
it will be continuous in each variable separately.

\begin{lemma}[{\textcite[Lemma 11.3.39]{univalentfoundationsprogram2013}}]
  \label{lemma:continuous-extension-unique}
  Let $f, g : \RR \to \RR$ be continuous. If
  $f \circ \rational = g \circ \rational$ then $f = g$.
\end{lemma}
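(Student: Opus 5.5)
By function extensionality it suffices to show $f(u) = g(u)$ for every $u : \RR$. Since $\RR$ is a set by \Cref{corollary:reals-set}, the family $u \mapsto (f(u) = g(u))$ is a family of mere propositions. We can therefore prove it by $\RR$-induction (\Cref{def:rr-induction}), and the path-constructor case is automatic: any two elements of a proposition are equal, including over a path. The rational case is exactly the hypothesis $f \circ \rational = g \circ \rational$, evaluated at $q$. All the work is in the limit case.

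In the limit case we are given a Cauchy approximation $x$ and, by induction, $f(x_{\varepsilon}) = g(x_{\varepsilon})$ for every $\varepsilon : \Qp$. We must show $f(\limit(x)) = g(\limit(x))$. By the path constructor $\cpath$, it suffices to show $f(\limit(x)) \close{\varepsilon} g(\limit(x))$ for every $\varepsilon : \Qp$. Since closeness is a mere proposition, we may eliminate the mere existentials that continuity provides.

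Fix $\varepsilon$ and apply continuity of $f$ and of $g$ at $\limit(x)$ with tolerance $\varepsilon/2$. This yields $\delta_f, \delta_g : \Qp$ such that $\limit(x) \close{\delta_f} v$ implies $f(\limit(x)) \close{\varepsilon/2} f(v)$, and likewise for $g$. Take $\delta := \min(\delta_f, \delta_g)$, and pick $\theta := \delta/2$. The key auxiliary fact is that a Cauchy approximation is close to its limit:
\[
  x_{\theta} \close{\theta + \eta} \limit(x)
\]
for all $\theta, \eta : \Qp$ (HoTT book Lemma 11.3.13). Taking $\eta := \delta/2$ gives $x_{\theta} \close{\delta} \limit(x)$, and symmetry (\Cref{lem:close-symmetric}) turns this into $\limit(x) \close{\delta} x_{\theta}$. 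Monotonicity of closeness from the corollary on roundedness then gives $\limit(x) \close{\delta_f} x_\theta$ and $\limit(x) \close{\delta_g} x_\theta$. Hence
\[
  f(\limit(x)) \close{\varepsilon/2} f(x_\theta) = g(x_\theta) \close{\varepsilon/2} g(\limit(x)),
\]
where the middle equality is the induction hypothesis and the right-hand step uses symmetry. The triangle inequality then yields $f(\limit(x)) \close{\varepsilon} g(\limit(x))$.

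The main obstacle is the auxiliary lemma $x_\theta \close{\theta+\eta} \limit(x)$. It is not among the results stated so far, so I would prove it first. The proof goes by $\RR$-induction on $x_\theta$, or more directly from the alternative characterization: by \Cref{theorem:close-eq-close'} it suffices to show $x_\theta \approx_{\theta + \eta} \limit(x)$. Reflexivity (\Cref{lem:close-reflexive}) gives $x_\theta \close{\eta'} x_\theta$, and the Cauchy condition gives $x_\theta \close{\theta + \theta'} x_{\theta'}$; feeding these through the rational/limit and limit/limit constructors of closeness handles the cases where $x_\theta$ is itself a rational or a limit. This case analysis is the step I expect to need the most care, though the HoTT book supplies it. Everything else is bookkeeping with roundedness and propositional truncation elimination.
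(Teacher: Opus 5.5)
Your argument is correct and is essentially the HoTT book proof that the paper cites without reproducing: $\RR$-induction into the propositions $f(u) = g(u)$, with the limit case settled by separatedness, continuity of $f$ and $g$ at $\limit(x)$ with tolerance $\varepsilon/2$, a single approximant $x_\theta$ within $\min(\delta_f,\delta_g)$ of $\limit(x)$, the induction hypothesis, and the triangle inequality. The auxiliary fact needs none of the case analysis you sketch: it is the paper's \texttt{closeLimit} helper (from $u \close{\varepsilon} y_{\delta}$ conclude $u \close{\varepsilon+\delta} \limit(y)$) applied to reflexivity $x_\theta \close{\eta} x_\theta$.
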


The previous lemma gives us a way to extend identities from the rationals to
identities on the reals. For example, once addition on $\RR$ has been
constructed, the maps
\[
  u \mapsto u + 0 \qquad\text{and}\qquad u \mapsto u
\]
are continuous and agree on rational inputs, since $q + 0 = q$ for all
$q : \QQ$. The previous lemma therefore yields the right unit law for real
addition:
\[
  u + 0 = u
\]
for all $u : \RR$.

The authors of \cite{univalentfoundationsprogram2013} indicate that identities
in several variables can be extended in the same manner. However, the precise
hypotheses needed to make this work are not completely obvious. In the binary
case, for example, to prove commutativity of addition, should we require
addition to be jointly continuous as a function
\[
  + : \RR \times \RR \to \RR,
\]
or does continuity in each variable separately suffice? At first glance, the
latter condition may seem too weak. In fact, separate continuity is sufficient,
because the one-variable uniqueness lemma can be applied one coordinate at a
time. Since this result is used repeatedly later and is not stated explicitly in
\cite{univalentfoundationsprogram2013}, we formulate and prove it here.

\begin{lemma}
  \label{lemma:continuous-extension-unique2}
  Let $f, g : \RR \to \RR \to \RR$ be maps that are continuous in both variables
  separately. That is, for every $u : \RR$, the maps
  \[
    f(u, -), g(u, -) : \RR \to \RR
  \]
  are continuous, and for every $v : \RR$, the maps
  \[
    f(-, v), g(-, v) : \RR \to \RR
  \]
  are continuous.
  Suppose that 
  \[
    f(\rational(q), \rational(r)) = g(\rational(q), \rational(r))
  \]
  for all $q, r : \QQ$. Then
  \[
    f(u, v) = g(u, v)
  \]
  for all $u, v : \RR$, and hence $f = g$.
\end{lemma}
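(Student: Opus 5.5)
The plan is to apply \Cref{lemma:continuous-extension-unique} twice, one coordinate at a time. The first pass extends the hypothesis from pairs of rationals to pairs whose second argument is an arbitrary real; the second pass then frees the first argument.

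\textbf{Step 1 (second coordinate).} Fix $q : \QQ$ and consider the maps
\[
  f(\rational(q), -),\ g(\rational(q), -) : \RR \to \RR .
\]
By hypothesis, instantiated at $u := \rational(q)$, both maps are continuous. They agree after precomposition with $\rational$, since
\[
  f(\rational(q), \rational(r)) = g(\rational(q), \rational(r))
\]
for every $r : \QQ$, and function extensionality turns this into an equality of composites. \Cref{lemma:continuous-extension-unique} then gives $f(\rational(q), -) = g(\rational(q), -)$. Applying this at an arbitrary argument, we obtain
\[
  f(\rational(q), v) = g(\rational(q), v) \quad \text{for all } q : \QQ \text{ and } v : \RR.
\]

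\textbf{Step 2 (first coordinate).} Now fix an arbitrary $v : \RR$ and consider the maps
\[
  f(-, v),\ g(-, v) : \RR \to \RR .
\]
These are continuous by the other half of the separate-continuity hypothesis. By Step 1 their composites with $\rational$ agree, because $f(\rational(q), v) = g(\rational(q), v)$ for every $q$. A second application of \Cref{lemma:continuous-extension-unique} yields $f(-, v) = g(-, v)$, so $f(u, v) = g(u, v)$ for all $u, v : \RR$. Two uses of function extensionality then give $f = g$.

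I do not expect a real obstacle. The one point that needs care is the order of quantification. In Step 1 we use continuity in the second variable only at the rational points $u = \rational(q)$, whereas in Step 2 we need continuity in the first variable at every real $v$. That is exactly why the statement asks for separate continuity at all points of each variable, and no joint continuity is required. The remaining bookkeeping is packaging the pointwise equalities as equalities of functions via function extensionality. In the Agda setting this is immediate, since $\RR$ is a set by \Cref{corollary:reals-set}, so no coherence issues arise.
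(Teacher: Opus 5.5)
Your proof is correct and is essentially the paper's argument: two applications of \Cref{lemma:continuous-extension-unique}, freeing one coordinate at a time. The only difference is that the order is mirrored. The paper fixes an arbitrary $u$ and first frees the first coordinate at rational second arguments $\rational(r)$, whereas you first free the second coordinate at rational first arguments $\rational(q)$, so each of you uses continuity in one variable only at rational values of the other.
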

\begin{proof}
  Fix $u : \RR$. We show that
  \[
    f(u, v) = g(u, v)
  \]
  for all $v : \RR$.

  By the assumed continuity of $f(u, -)$ and $g(u, -)$, the maps
  \[
    v \mapsto f(u, v) \qquad\text{and}\qquad v \mapsto g(u, v)
  \]
  are continuous. Thus, by \Cref{lemma:continuous-extension-unique}, it suffices
  to prove that
  \[
    f(u, \rational(r)) = g(u, \rational(r))
  \]
  for all $r : \QQ$.

  Fix $r : \QQ$. By the assumed continuity of $f(-, \rational(r))$ and
  $g(-, \rational(r))$, the maps
  \[
    w \mapsto f(w, \rational(r)) \qquad\text{and}\qquad
    w \mapsto g(w, \rational(r))
  \]
  are continuous. Moreover, for every $q : \QQ$, the hypothesis gives
  \[
    f(\rational(q), \rational(r)) = g(\rational(q), \rational(r)).
  \]
  Applying \Cref{lemma:continuous-extension-unique}, we conclude that
  \[
    f(w, \rational(r)) = g(w, \rational(r))
  \]
  for all $w : \RR$. In particular, evaluating at $w = u$ yields
  \[
    f(u, \rational(r)) = g(u, \rational(r)).
  \]

  Since $r : \QQ$ was arbitrary, the required agreement on rational inputs
  follows, and therefore
  \[
    f(u, v) = g(u, v)
  \]
  for all $v : \RR$. Since $u : \RR$ was arbitrary, it follows that
  \[
    f(u, v) = g(u, v)
  \]
  for all $u, v : \RR$, and hence $f = g$.
\end{proof}

The previous lemma can be packaged as a convenient extension law for binary
operations. This is the form that used repeatedly in
\Cref{section:algebra-order} to extend identities from the rationals to the
reals.

\begin{corollary}
  \label{corollary:continuous-extension-law}
  Let $f, g : \RR \to \RR \to \RR$ and $f', g' : \QQ \to \QQ \to \QQ$. Suppose
  that
  \[
    f(\rational(q), \rational(r)) = \rational(f'(q, r))
  \]
  and
  \[
    g(\rational(q), \rational(r)) = \rational(g'(q, r))
  \]
  for all $q, r : \QQ$, and that
  \[
    f'(q, r) = g'(q, r)
  \]
  for all $q, r : \QQ$. Assume moreover that $f$ and $g$ are continuous in each
  variable separately. Then
  \[
    f(u, v) = g(u, v)
  \]
  for all $u, v : \RR$, and hence $f = g$.
\end{corollary}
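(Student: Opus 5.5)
The plan is to reduce the corollary directly to \Cref{lemma:continuous-extension-unique2}. That lemma already yields the conclusion $f(u,v) = g(u,v)$ for all $u, v : \RR$ once two things are in place: $f$ and $g$ are continuous in each variable separately, and they agree on pairs of rationals. Separate continuity is assumed outright in the corollary, so the continuity hypotheses of \Cref{lemma:continuous-extension-unique2} are discharged without further work.

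The remaining step is to verify agreement on rational inputs, which is a short chain of equalities. Fix $q, r : \QQ$ and compute
\[
  f(\rational(q), \rational(r)) = \rational(f'(q, r)) = \rational(g'(q, r)) = g(\rational(q), \rational(r)).
\]
The first equality is the assumed computation of $f$ on rationals. The second applies $\rational$ to the rational identity $f'(q,r) = g'(q,r)$, which is just congruence of paths under a function. The third is the assumed computation of $g$, read backwards. Since $\RR$ is a set by \Cref{corollary:reals-set}, we never need to track which path witnesses these equalities; only their mere existence matters, so composing them is harmless.

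With rational agreement established, \Cref{lemma:continuous-extension-unique2} gives $f(u, v) = g(u, v)$ for all $u, v : \RR$. Function extensionality, applied twice since $f$ and $g$ are curried, then upgrades this pointwise equality to $f = g$.

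I do not expect a genuine obstacle here; the corollary is essentially a repackaging of the lemma. The only point needing care is that its hypotheses really are about separate continuity, matching what the lemma consumes, rather than joint continuity. In the intended applications, separate continuity will come from the operations being non-expanding in each variable via \Cref{lemma:lift-nonexpanding}, but that is the caller's responsibility rather than part of this proof.
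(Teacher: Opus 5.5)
Your proposal is correct and matches the paper's proof: you establish agreement on rational pairs via the chain $f(\rational(q),\rational(r)) = \rational(f'(q,r)) = \rational(g'(q,r)) = g(\rational(q),\rational(r))$ and then invoke \Cref{lemma:continuous-extension-unique2}. Your appeal to $\RR$ being a set is unnecessary, since composing paths needs no such hypothesis, but it does no harm.
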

\begin{proof}
  For any $q, r : \QQ$, the assumptions give a path
  \[
    f(\rational(q), \rational(r)) =
    \rational(f'(q, r)) =
    \rational(g'(q, r)) =
    g(\rational(q), \rational(r)).
  \]
  Thus $f$ and $g$ agree on rational pairs. Since both maps are
  continuous in both variables separately, the result follows by
  \Cref{lemma:continuous-extension-unique2}.
\end{proof}

The same pattern applies to functions of three or more variables: we extend
agreement one coordinate at a time, using separate continuity in the relevant
variable at each step. We do not formulate a general $n$-ary version here, since
in \Cref{section:algebra-order} we only extend identities involving at most
three variables, and a fully general statement would require additional
bookkeeping for uncurried $n$-ary functions.

This completes the extension principles needed in the next section. The lifting
lemmas provide operations on $\RR$, while the uniqueness lemmas explain
precisely how identities on $\QQ$ induce identities on $\RR$.

\section{Algebra and Order}
\label{section:algebra-order}

The previous three sections provide the tools needed to equip the HoTT book
reals with their algebraic and order-theoretic structure. Because the
corresponding Agda development for algebra and order spans more than six
thousand lines, we do not reproduce it result by result. Instead, we organize
this section around three case studies showing how the tools from the previous
sections are used in practice. First, we lift negation, addition, minimum, and
maximum from the rationals and illustrate how algebraic identifiers transfer
from $\QQ$ to $\RR$; this also yields the induced non-strict order. Second, we
define strict order and develop the lemmas needed to show that addition
preserves and reflects it. Third, we construct multiplication and reciprocal,
which require extra work beyond the extension principles of
\Cref{section:lifting}.

This organization also reflects the division of labor among our sources. For the
first study, we follow the HoTT book closely. For the development of strict
inequality, and especially for the proof that addition preserves and reflects
strict inequality, we follow Gilbert \cite{gilbert2017}, since the HoTT book
does not provide that argument. For multiplication and reciprocal, we likewise
depart from the HoTT book's squaring-based construction and instead adopt
Gilbert's more direct construction using iterated Lipschitz extension. The
section closes by assembling the results into the statement that the HoTT book
reals form an Archimedean ordered field.

We begin with the first case study, concerning the operations obtained directly
by lifting from the rationals. Negation is obtained by Lipschitz extension,
while addition and $\min$/$\max$ are obtained by non-expanding extension. These
constructions all follow a common pattern. First, we prove that the rational
version of the operation is Lipschitz or non-expanding. We then use this result
to extend the operation to the reals by applying
\Cref{lemma:lift-lipschitz} or \Cref{lemma:lift-nonexpanding}. In this way we
obtain operations
\[
  - : \RR \to \RR,\qquad + : \RR \to \RR \to \RR,\qquad \min, \max : \RR \to \RR \to \RR.
\]
Then, for each basic algebraic law the operation should satisfy, we use or
prove the corresponding result on the rationals and apply
\Cref{corollary:continuous-extension-law} and its analogues for functions of
higher arity.

Here, we walk through the construction of the
$\max$ operation and prove that it satisfies
\begin{equation}
  \label{eq:max-add-left}
  \max(a + u, a + v) = a + \max(u, v)
\end{equation}
for all $a, u, v : \RR$. Once the order relation has been defined, this identity
will be the key to showing that addition preserves and reflects order, that is,
that we have
\[
  u \le v \iff a + u \le a + v
\]
for all $a, u, v : \RR$. In addition to illustrating the common pattern well, this
example is also not covered explicitly in either
\cite{univalentfoundationsprogram2013} or \cite{gilbert2017}.

We begin by showing that the $\max$ operation on rationals is non-expanding in
both arguments separately. To do so, we first record a convenient closed-form
expression for $\max(q,r)$ in terms of addition, subtraction, and absolute
value. This identity will be used twice: first to establish the required
non-expandingness of rational $\max$, and then to prove its translation
invariance on the rationals, which will later be lifted to the reals.

\begin{lemma}
  \label{lemma:max-midpoint-half-distance}
  For all rational $q, r : \QQ$, we have
  \[
    \max(q, r) = \frac{q + r}{2} + \frac{\lvert q - r \rvert}{2}.
  \]
\end{lemma}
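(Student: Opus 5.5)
The plan is a case split on the order of $q$ and $r$. On the rationals, $\le$ is decidable and total, so either $q \le r$ or $r \le q$. In each case we compute both $\max(q,r)$ and $\lvert q - r \rvert$ explicitly and check the identity by ordinary field arithmetic in $\QQ$.

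First take $r \le q$. Then $\max(q,r) = q$. Since $q - r \ge 0$, we also have $\lvert q - r \rvert = q - r$. The right-hand side becomes
\[
  \frac{q + r}{2} + \frac{q - r}{2} = \frac{2q}{2} = q,
\]
which matches.

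Now take $q \le r$. Then $\max(q,r) = r$ and $\lvert q - r \rvert = -(q - r) = r - q$. The right-hand side becomes
\[
  \frac{q + r}{2} + \frac{r - q}{2} = r,
\]
which again matches.

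We could avoid doing this algebra twice. Symmetry of $\max$ and of $\lvert q - r\rvert$ (that is, $\lvert q - r \rvert = \lvert r - q \rvert$) reduces the second case to the first, because the right-hand side is visibly symmetric in $q$ and $r$. In a formal setting, however, the two direct computations are probably just as short.

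The only step needing care is how $\max$ and $\lvert\cdot\rvert$ are actually defined in the rational library. If $\max$ is defined by a decidable comparison, the case split must use the same comparison so that $\max(q,r)$ reduces. If $\lvert x\rvert$ is defined as $\max(x,-x)$, we instead need the lemma that $\lvert x \rvert = x$ for $x \ge 0$ and $\lvert x \rvert = -x$ for $x \le 0$. I expect this unfolding to be the main (mild) obstacle. The remaining steps are commutative-ring rearrangements in $\QQ$, dischargeable by a ring solver.
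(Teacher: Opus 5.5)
Your proposal is correct and takes the same approach as the paper, whose proof is just a case split on $q \le r$ versus $r \le q$ with direct computation in each case. Your two computations check out. The paper's one-line proof leaves the unfolding of $\max$ and $\lvert\cdot\rvert$ implicit, and your note on matching them to the library's definitions is the right thing to watch for when formalizing it.
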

\begin{proof}
  This follows by cases on $q \le r$ and $r \le q$.
\end{proof}

We first use this identity to prove that rational $\max$ is non-expanding in
each argument.

\begin{lemma}
  \label{lemma:max-rational-nonexpanding}
  The max operation on the rationals is non-expanding in both variables
  separately.
\end{lemma}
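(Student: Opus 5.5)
By symmetry of $\max$ it suffices to show that for each fixed $r : \QQ$ the map $q \mapsto \max(q, r)$ is non-expanding. Non-expandingness in the second variable then follows from $\max(q, r) = \max(r, q)$, which is immediate from \Cref{lemma:max-midpoint-half-distance} because both $q + r$ and $\lvert q - r \rvert$ are symmetric in $q$ and $r$. So we fix $r$ and $q, q' : \QQ$, and aim to show
\[
  \lvert \max(q, r) - \max(q', r) \rvert \le \lvert q - q' \rvert .
\]

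We rewrite both sides of the difference using \Cref{lemma:max-midpoint-half-distance}. The $r/2$ terms cancel, leaving
\[
  \max(q, r) - \max(q', r) = \frac{q - q'}{2} + \frac{\lvert q - r \rvert - \lvert q' - r \rvert}{2}.
\]
We then apply the ordinary triangle inequality for rational absolute value. Combined with the reverse triangle inequality $\bigl\lvert \lvert q - r \rvert - \lvert q' - r \rvert \bigr\rvert \le \lvert (q - r) - (q' - r) \rvert = \lvert q - q' \rvert$, this bounds the absolute value of the right-hand side by $\frac{\lvert q - q' \rvert}{2} + \frac{\lvert q - q' \rvert}{2} = \lvert q - q' \rvert$. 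That is exactly the non-expanding condition in the first variable.

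The only step needing care is the reverse triangle inequality on $\QQ$. It follows from two applications of the usual triangle inequality: one gives $\lvert a \rvert \le \lvert a - b \rvert + \lvert b \rvert$, and the other gives the same with $a$ and $b$ swapped. Together these yield the two-sided bound on $\lvert a \rvert - \lvert b \rvert$. The standard library likely already provides it, and otherwise it is a short lemma.

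The main obstacle in the formalization is therefore not conceptual but bookkeeping: rearranging the rational expressions and the halving so that the cancellation is syntactically visible. I would handle this with a ring-solver step for the identity above before applying the absolute-value inequalities. Finally, if the definition is used in its $\varepsilon$-form, we note that $\lvert q - q' \rvert < \varepsilon$ together with the bound gives $\lvert \max(q, r) - \max(q', r) \rvert < \varepsilon$ by transitivity.
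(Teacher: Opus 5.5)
Your proposal is correct and follows essentially the same route as the paper: reduce to the first variable via commutativity of $\max$, rewrite both terms with \Cref{lemma:max-midpoint-half-distance}, and then bound the result using the triangle and reverse triangle inequalities to get $\frac{1}{2}\lvert q - q'\rvert + \frac{1}{2}\lvert q - q'\rvert = \lvert q - q'\rvert$. Your extra remarks are sound details that the paper leaves implicit: deriving commutativity from the midpoint formula, proving the reverse triangle inequality, and passing to the $\varepsilon$-form.
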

\begin{proof}
  Fix $q, r, s : \QQ$. We show
  \begin{align}
    \left\lvert \max(q, s) - \max(r, s) \right\rvert & \le \left\lvert q - r \right\rvert, \label{eq:max-rational-nonexpanding-first}\\
    \left\lvert \max(q, r) - \max(q, s) \right\rvert & \le \left\lvert r - s \right\rvert. \label{eq:max-rational-nonexpanding-second}
  \end{align}
  It suffices to prove \cref{eq:max-rational-nonexpanding-first}, since
  \cref{eq:max-rational-nonexpanding-second} then follows from the commutativity
  of $\max$ and the symmetry of distance. We have
  \begin{align*}
    \left\lvert \max(q, s) - \max(r, s) \right\rvert
      & = \left\lvert \frac{(q + s) + \lvert q - s \rvert}{2} - \frac{(r + s) + \lvert r - s \rvert}{2} \right\rvert \qquad \text{by \Cref{lemma:max-midpoint-half-distance}}\\
      & = \frac{1}{2} \left\lvert (q - r) + (\lvert q - s \rvert - \lvert r - s \rvert) \right\rvert \\
      & \le \frac{1}{2} \left\lvert q - r \right\rvert + \frac{1}{2} \left\lvert \left\lvert q - s \right\rvert - \left\lvert r - s \right\rvert \right\rvert \qquad \text{by the triangle inequality}\\
      & \le \frac{1}{2} \left\lvert q - r \right\rvert + \frac{1}{2} \left\lvert (q - s) - (r - s) \right\rvert \qquad \text{by the reverse triangle inequality}\\
      & = \frac{1}{2} \left\lvert q - r \right\rvert + \frac{1}{2} \left\lvert q - r \right\rvert \\
      & = \left\lvert q - r \right\rvert,
  \end{align*}
  which proves \cref{eq:max-rational-nonexpanding-first}, and hence the lemma.
\end{proof}

Applying the binary non-expanding extension principle
(\Cref{lemma:lift-nonexpanding}) to the previous lemma, we obtain a function
\[
  \max : \RR \to \RR \to \RR
\]
extending rational $\max$. Since the extension is again non-expanding in each
variable separately, $\max$ is continuous in each variable separately as well.
To illustrate how identities on $\QQ$ are transferred to $\RR$, we now prove
that rational $\max$ is translation invariant and then lift that identity to
the reals.

\begin{lemma}
  \label{lemma:max-translation-invariant-rational}
  For all rational $q, r, a : \QQ$, we have
  \[
    \max(a + q, a + r) = a + \max(q, r).
  \]
\end{lemma}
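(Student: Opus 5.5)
The plan is to use the closed form from \Cref{lemma:max-midpoint-half-distance}, exactly as the surrounding text anticipates, rather than splitting into cases on the order of $a+q$ and $a+r$ directly. Applying the lemma to the pair $(a + q, a + r)$ gives
\[
  \max(a + q, a + r) = \frac{(a + q) + (a + r)}{2} + \frac{\lvert (a + q) - (a + r) \rvert}{2}.
\]
The proof then reduces to two routine rational identities. First, $(a + q) - (a + r) = q - r$, which follows from associativity and commutativity of rational addition together with $a - a = 0$. Second, $\frac{(a+q)+(a+r)}{2} = a + \frac{q+r}{2}$.

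With these in hand, the right-hand side becomes
\[
  a + \frac{q + r}{2} + \frac{\lvert q - r \rvert}{2}.
\]
By \Cref{lemma:max-midpoint-half-distance} applied to $(q, r)$, the last two terms are $\max(q, r)$. Regrouping by associativity then yields $a + \max(q, r)$, as required.

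There is no real conceptual obstacle; the only care needed is in the algebraic rearrangement of the midpoint term in the rational field, which a ring-solver handles directly in the formalization. As a fallback, one could instead argue by cases on $q \le r$ or $r \le q$. Since $a + (-)$ preserves $\le$ on $\QQ$, in the first case both sides equal $a + r$, and in the second both equal $a + q$. This avoids absolute values entirely but does not reuse the closed form.
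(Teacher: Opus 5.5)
Your proposal is correct and follows the paper's proof essentially step for step. Like the paper, you apply \Cref{lemma:max-midpoint-half-distance} to $(a+q, a+r)$, reduce the difference to $q - r$ and the midpoint to $a + \frac{q+r}{2}$, then fold back using the closed form for $(q, r)$. Your fallback case split on $q \le r$ or $r \le q$ is also valid, but it is not needed.
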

\begin{proof}
  Using \Cref{lemma:max-midpoint-half-distance}, we compute
  \begin{align*}
    \max(a + q, a + r)
    & = \frac{(a + q) + (a + r)}{2} + \frac{\lvert (a + q) - (a + r) \rvert}{2} \\
    & = a + \frac{q + r}{2} + \frac{\lvert q - r \rvert}{2} \\
    & = a + \max(q, r),
  \end{align*}
  as required.
\end{proof}

To lift \Cref{lemma:max-translation-invariant-rational} to the reals, we must
show that the two sides of the desired identity define separately continuous
maps of three real variables. Most of these continuity claims follow directly
from the fact that addition and $\max$ are non-expanding in each argument
separately. The only slightly nontrivial case is the dependence on the
translation parameter $a$ in expressions such as
\[
  a \mapsto \max(a + u, a + v).
\]
This is a composite of two Lipschitz maps with the binary operation $\max$, and
similar combinations arise repeatedly throughout the formalization. We therefore
record the following binary composition lemma once and use it here as a
representative instance.

\begin{lemma}
  \label{lemma:lipschitz-binary-composition}
  Let $f, g : \RR \to \RR$ and $h : \RR \to \RR \to \RR$. Suppose that
  $f$ is Lipschitz with constant $L$, $g$ is Lipschitz with constant $M$,
  and for every $v : \RR$ the map $u \mapsto h(u,v)$ is Lipschitz with
  constant $N_1$, while for every $u : \RR$ the map $v \mapsto h(u,v)$ is
  Lipschitz with constant $N_2$. Then the composite
  \[
    u \mapsto h(f(u), g(u))
  \]
  is Lipschitz with constant $N_1L + N_2M$.
\end{lemma}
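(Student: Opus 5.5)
We argue directly from the definition of Lipschitz in \Cref{definition:lipschitz}, chaining the closeness witnesses supplied by the hypotheses and combining them with the triangle inequality for $\closesym$. The triangle inequality is available from the corollary of \Cref{theorem:close-eq-close'}, which says closeness is rounded and satisfies the triangle inequality.

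Fix $u, v : \RR$ and $\varepsilon : \Qp$, and assume $u \close{\varepsilon} v$. The Lipschitz hypotheses on $f$ and $g$ give
\[
  f(u) \close{L\varepsilon} f(v) \qquad\text{and}\qquad g(u) \close{M\varepsilon} g(v).
\]
We now pass through the intermediate point $h(f(v), g(u))$, moving one coordinate at a time.

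First, apply the hypothesis on the first variable of $h$ with the second argument fixed at $g(u)$. The map $w \mapsto h(w, g(u))$ is Lipschitz with constant $N_1$, and its inputs are $L\varepsilon$-close, so
\[
  h(f(u), g(u)) \close{N_1 L \varepsilon} h(f(v), g(u)).
\]
Second, apply the hypothesis on the second variable with the first argument fixed at $f(v)$. The map $w \mapsto h(f(v), w)$ is Lipschitz with constant $N_2$, so
\[
  h(f(v), g(u)) \close{N_2 M \varepsilon} h(f(v), g(v)).
\]
Here it matters that the constants $N_1, N_2$ are uniform in the fixed argument. This is exactly what lets us freeze one coordinate at a value that depends on $u$ or $v$.

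Combining the two steps by the triangle inequality gives
\[
  h(f(u), g(u)) \close{N_1 L\varepsilon + N_2 M\varepsilon} h(f(v), g(v)).
\]
Finally, rewrite the index $N_1 L\varepsilon + N_2 M\varepsilon$ as $(N_1 L + N_2 M)\varepsilon$ using distributivity in $\Qp$. Transporting the closeness witness along this equality of indices yields the required Lipschitz bound.

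There is no genuine obstacle in this argument. The only substantive ingredient is the triangle inequality for $\closesym$, which rests on \Cref{theorem:close-eq-close'}. In the formalization, the main nuisance is bookkeeping: the positive-rational index arithmetic has to be transported along the distributivity equation, since $N_1 L\varepsilon + N_2 M\varepsilon$ and $(N_1L+N_2M)\varepsilon$ are not definitionally equal.
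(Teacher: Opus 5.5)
Your proposal is correct and matches the paper's proof step for step. It uses the same intermediate point $h(f(v), g(u))$, the same two applications of the separate Lipschitz hypotheses with one coordinate frozen, the triangle inequality for closeness, and the same final rewriting of $N_1 L\varepsilon + N_2 M\varepsilon$ as $(N_1L + N_2M)\varepsilon$.
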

\begin{proof}
  Let $u, v : \RR$, let $\varepsilon : \Qp$, and suppose that
  \[
    u \close{\varepsilon} v.
  \]
  Since $f$ and $g$ are Lipschitz with constants $L$ and $M$ respectively, we
  have
  \[
    f(u) \close{L\varepsilon} f(v) \qquad\text{and}\qquad
    g(u) \close{M\varepsilon} g(v).
  \]

  Fixing the second argument, the Lipschitz hypothesis on $w \mapsto h(w, g(u))$
  gives
  \[
    h(f(u),g(u)) \close{N_1 L \varepsilon} h(f(v),g(u)).
  \]
  Similarly, fixing the first argument, the Lipschitz hypothesis on
  $z \mapsto h(f(v), z)$ yields
  \[
    h(f(v),g(u)) \close{N_2 M \varepsilon} h(f(v),g(v)).
  \]

  Applying the triangle inequality for closeness, we obtain
  \[
    h(f(u),g(u))
    \close{N_1 L \varepsilon + N_2 M \varepsilon}
    h(f(v),g(v)).
  \]
  Since
  \[
    N_1 L \varepsilon + N_2 M \varepsilon
    = (N_1L + N_2M)\varepsilon,
  \]
  this shows that $u \mapsto h(f(u),g(u))$ is Lipschitz with constant
  $N_1L + N_2M$.
\end{proof}

With this auxiliary lemma in hand, we can now transfer translation invariance
of $\max$ from the rationals to the reals.

\begin{lemma}
  \label{lemma:max-translation-invariant-real}
  For all real $a, u, v : \RR$, we have
  \[
    \max(a + u, a + v) = a + \max(u, v).
  \]
\end{lemma}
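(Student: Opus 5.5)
We will prove the identity by a three-variable version of \Cref{corollary:continuous-extension-law}, extending agreement one coordinate at a time. Define
\[
  F(a, u, v) := \max(a + u, a + v), \qquad G(a, u, v) := a + \max(u, v).
\]
On rational inputs, the computation rules of the non-expanding extensions give
\[
  F(\rational(a), \rational(q), \rational(r)) = \rational(\max(a + q, a + r))
\]
and
\[
  G(\rational(a), \rational(q), \rational(r)) = \rational(a + \max(q, r)).
\]
These two rationals are equal by \Cref{lemma:max-translation-invariant-rational}, so $F$ and $G$ agree on rational triples.

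Next we check that $F$ and $G$ are continuous in each variable separately. Since Lipschitz maps are continuous, it suffices to show each partial map is Lipschitz.

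\begin{itemize}
  \item \textbf{The variable $u$ (and symmetrically $v$).} The map $u \mapsto \max(a + u, a + v)$ is a composite of the non-expanding map $u \mapsto a + u$ with the non-expanding map $\max(-, a + v)$, hence is non-expanding. The map $u \mapsto a + \max(u, v)$ is non-expanding for the same reason.
  \item \textbf{The variable $a$ in $G$.} Here $a \mapsto a + \max(u, v)$ is non-expanding directly, since addition is non-expanding in its first argument.
  \item \textbf{The variable $a$ in $F$.} The only nontrivial case is $a \mapsto \max(a + u, a + v)$. We apply \Cref{lemma:lipschitz-binary-composition} with $f = (-) + u$, $g = (-) + v$ and $h = \max$, all with constants $1$. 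This gives Lipschitz constant $1 \cdot 1 + 1 \cdot 1 = 2$.
\end{itemize}

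Finally we peel off variables using \Cref{lemma:continuous-extension-unique} three times.

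\begin{enumerate}
  \item Fix $a, u : \RR$. The maps $v \mapsto F(a, u, v)$ and $v \mapsto G(a, u, v)$ are continuous, so it suffices to prove $F(a, u, \rational(r)) = G(a, u, \rational(r))$ for every $r : \QQ$.
  \item Fix $r$ and $a$. Using continuity in $u$, it suffices to treat $u = \rational(q)$.
  \item Fix $q$ and $r$. Using continuity in $a$, it suffices to treat $a = \rational(a')$. This is exactly the rational agreement established above.
\end{enumerate}

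This is precisely the procedure of \Cref{lemma:continuous-extension-unique2} with one more coordinate added.

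The main obstacle is bookkeeping rather than mathematics. In the step fixing $r$ and $a$, the continuity hypothesis must hold for every fixed value of the other variables, including arbitrary real $a$. This is why we establish separate continuity for all real values of the frozen arguments, not just rational ones. The continuity in $a$ for $F$ is the one place needing \Cref{lemma:lipschitz-binary-composition}, and we expect it to go through with constant $2$.
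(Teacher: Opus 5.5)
Your proposal is correct and follows essentially the same route as the paper. It establishes rational agreement via \Cref{lemma:max-translation-invariant-rational}, proves separate continuity in each variable using \Cref{lemma:lipschitz-binary-composition} to get constant $2$ for $a \mapsto \max(a+u, a+v)$, and then extends agreement one coordinate at a time; your explicit three-step use of \Cref{lemma:continuous-extension-unique} just spells out what the paper calls the three-variable analogue of \Cref{corollary:continuous-extension-law}.
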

\begin{proof}
  Define maps $f, g : \RR \to \RR \to \RR \to \RR$ by
  \begin{align*}
    f(a,u,v) & := \max(a + u, a + v), \\
    g(a,u,v) & := a + \max(u,v).
  \end{align*}
  Likewise, define $f', g' : \QQ \to \QQ \to \QQ \to \QQ$ by
  \begin{align*}
    f'(a,q,r) & := \max(a + q, a + r), \\
    g'(a,q,r) & := a + \max(q,r).
  \end{align*}

  By construction of addition and $\max$ on $\RR$ as non-expanding lifts of the
  corresponding rational operations, $f$ and $g$ agree with $f'$ and $g'$ on
  rational inputs. \Cref{lemma:max-translation-invariant-rational} gives
  \[
    f'(a, q, r) = g'(a, q, r)
  \]
  for all $a, q, r : \QQ$. It therefore remains to check that $f$ and $g$ are
  continuous in each variable separately, so that the three-variable extension
  argument described after \Cref{corollary:continuous-extension-law} applies.

  For $f$, the maps
  \begin{align*}
    u & \mapsto \max(a + u, a + v), \\
    v & \mapsto \max(a + u, a + v)
  \end{align*}
  are composites of continuous maps, since addition and $\max$ are non-expanding
  (and hence continuous) in both variables separately. For fixed $u$ and $v$,
  the maps
  \[
    a \mapsto a + u\qquad\text{and}\qquad
    a \mapsto a + v
  \]
  are Lipschitz with constant $1$, while $\max$ is Lipschitz with constant $1$
  in both variables separately. Hence \Cref{lemma:lipschitz-binary-composition}
  shows that
  \[
    a \mapsto \max(a + u, a + v)
  \]
  is Lipschitz with constant $2$, and therefore continuous.

  For $g$, the maps
  \begin{align*}
    u \mapsto a + \max(u, v), \\
    v \mapsto a + \max(u, v)
  \end{align*}
  are again composites of continuous maps. For fixed $u$ and $v$, the map
  \[
    a \mapsto a + \max(u, v)
  \]
  is Lipschitz with constant $1$, since addition on the left is Lipschitz with
  constant $1$. Thus $g$ is also continuous in each variable separately.

  The three-variable analogue of \Cref{corollary:continuous-extension-law}
  therefore yields
  \[
    f(a, u, v) = g(a, u, v)
  \]
  for all $a, u, v : \RR$, which is exactly the desired identity.
\end{proof}

The same transfer argument applies to the remaining algebraic laws for addition,
negation, $\min$, and $\max$. Since the corresponding identities hold on $\QQ$,
they lift to $\RR$ by the uniqueness of continuous extensions. In particular,
addition and negation endow $\RR$ with the structure of an Abelian group, while
$\min$ and $\max$ satisfy the usual associative, commutative, idempotent, and
absorption laws.

The $\max$ operation in turn induces a partial order on $\RR$ by defining
\[
  u \le v := \max(u, v) = v.
\]
Reflexivity, antisymmetry, and transitivity then follow from idempotence,
commutativity, and associativity of $\max$, respectively. With the definition of
$\le$ in place, \Cref{lemma:max-translation-invariant-real} has the expected
consequence that addition preserves and reflects order.

\begin{lemma}
  \label{lemma:addition-monotone-reflective}
  For each real $a : \RR$, the maps
  \[
    u \mapsto a + u
    \qquad\text{and}\qquad
    u \mapsto u + a
  \]
  are monotone and order-reflecting with respect to $\le$. Explicitly,
  \[
    u \le v \iff a + u \le a + v
  \]
  and hence also
  \[
    u \le v \iff u + a \le v + a
  \]
  for all $a, u, v : \RR$.
\end{lemma}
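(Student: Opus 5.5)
The plan is to unfold the definition $u \le v := \max(u, v) = v$ and reduce both directions to \Cref{lemma:max-translation-invariant-real}. The second equivalence will then follow from the first using commutativity of addition on $\RR$.

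For the forward direction of $u \le v \iff a + u \le a + v$, assume $\max(u, v) = v$. By \Cref{lemma:max-translation-invariant-real},
\[
  \max(a + u, a + v) = a + \max(u, v) = a + v,
\]
which is exactly $a + u \le a + v$.

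For the reverse direction, I will not invert the equation directly. Instead I will reuse the forward direction with the translation parameter $-a$. From $a + u \le a + v$, the forward direction gives $(-a) + (a + u) \le (-a) + (a + v)$. The Abelian group laws on $\RR$ (associativity, left inverse, left unit, all obtained earlier by transfer from $\QQ$) then rewrite this as $u \le v$. Transporting the order statement along these paths is harmless, because $\le$ is defined as an identity in the set $\RR$ (\Cref{corollary:reals-set}) and is therefore a mere proposition. Alternatively, I could apply $(-a) + {-}$ to the equation $\max(a+u,a+v) = a+v$ and use \Cref{lemma:max-translation-invariant-real} again. Both routes come to the same thing, and I will choose the first for modularity.

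For the right-translation statement, I will rewrite $u + a = a + u$ and $v + a = a + v$ by commutativity of real addition and then apply the left-translation equivalence.

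I do not expect a substantial obstacle. The only step that needs care is the reverse direction: there one must remember that the group laws for real addition are already available as transferred identities, and that rewriting inside $\le$ is legitimate because $\le$ is propositional.
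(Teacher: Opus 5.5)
Your proposal is correct and follows the paper's proof. It reduces to left translation by commutativity and gets the forward direction directly from \Cref{lemma:max-translation-invariant-real}. It obtains the reverse direction by translating by $-a$. The paper inlines that last step as the computation $\max(u,v) = \max((-a)+(a+u),(-a)+(a+v)) = (-a)+\max(a+u,a+v) = (-a)+(a+v) = v$, which is your stated alternative route. Your packaging via the forward direction is exactly how the paper later handles the strict case in \Cref{lemma:addition-strict-monotone-reflective}.

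One small remark: the appeal to propositionality of $\le$ is unnecessary, since transport along the group-law paths works for any type family.
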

\begin{proof}
  By commutativity of addition, it suffices to prove the claim for addition on
  the left.

  ($\Longrightarrow$) Suppose $u \le v$. Then by definition, $\max(u, v) =
  v$. It follows by \Cref{lemma:max-translation-invariant-real} that
  \[
    \max(a + u, a + v) = a + \max(u, v) = a + v,
  \]
  and hence $a + u \le a + v$, again by the definition of $\le$.

  ($\Longleftarrow$) Suppose $a + u \le a + v$. Then by definition
  \[
    \max(a + u, a + v) = a + v.
  \]
  Applying \Cref{lemma:max-translation-invariant-real} with $-a$ in place of
  $a$, we have
  \begin{align*}
    \max(u, v)
      & = \max((-a) + (a + u), (-a) + (a + v)) \\
      & = (-a) + \max(a + u, a + v) \\
      & = (-a) + (a + v) \\
      & = v,
  \end{align*}
  and so $u \le v$.
\end{proof}

We now turn to the second case study, concerning strict order. The non-strict
order relation $\le$ was induced from the $\max$ operation, lifted from the
corresponding operation on rationals. In contrast, strict inequality cannot be
obtained directly from the lift of some algebraic operation. As a result, we
will not be able to immediately derive properties of the strict order relation
as a consequence of the uniqueness of continuous extensions; proving properties
of the strict order relation will require a bit more attention. Therefore, our
goal in this case study is not only to define $<$, but to develop enough theory
around it to use it in later algebraic arguments. The HoTT book proves several
lemmas about order and gives a characterization of closeness by distance, but it
does not for instance show that addition preserves or reflects strict inequality
\cite{univalentfoundationsprogram2013}. For that result, we follow
\textcite{gilbert2017} in deriving an alternative characterization of strict
inequality: namely, that $u < v$ holds exactly when there merely exists a
positive rational $\varepsilon$ such that
\[
  u + \rational(\varepsilon) \le v.
\]
The results below are organized around this characterization. We begin with the
definition of $<$ and its immediate consequences, then prove perturbation lemmas
relating $<$ to closeness, and finally derive Gilbert's characterization to show
that addition preserves and reflects strict inequality.

The definition of strict inequality given in
\cite{univalentfoundationsprogram2013} makes use of our intuition that the reals
should satisfy the Archimedean property. We can utilize our expectation that the
rationals are dense in the reals\footnote{This is equivalent to the more typical
  characterization that for any $u : \RR$ there exists an integer $k$ with
  $u < k$. See \cite[Exercise 11.7]{univalentfoundationsprogram2013}.} to
express strict inequality as
\[
  u < v := \exists q, r : \QQ, (u \le \rational(q)) \times (q < r) \times (\rational(r) \le v)
\]
where $q < r$ is strict inequality on the rationals
\cite[\S11.3.3]{univalentfoundationsprogram2013}.

The irreflexivity and transitivity of $<$ follow immediately from showing that
the $\rational$ constructor preserves and reflects $\le$ and then in turn
$<$. We also obtain proofs that $\le$ and $<$ satisfy
\begin{align*}
  u < v & \to v \le w \to u < w, \\
  u \le v & \to v < w \to u < w
\end{align*}
for all $u, v, w : \RR$. As indicated, the Archimedean property follows
immediately from the definition.

\begin{theorem}[Archimedean principle; {\textcite[Theorem 11.3.41]{univalentfoundationsprogram2013}}]
  \label{theorem:archimedean}
  For every $u, v : \RR$ with $u < v$, there merely exists $q : \QQ$ such that
  $u < \rational(q) < v$.
\end{theorem}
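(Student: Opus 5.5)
Since the goal $\exists q : \QQ,\ (u < \rational(q)) \times (\rational(q) < v)$ is a mere proposition, I will first eliminate the propositional truncation in the hypothesis $u < v$. Unfolding the definition of strict order gives rationals $q, r : \QQ$ with
\[
  u \le \rational(q), \qquad q < r, \qquad \rational(r) \le v.
\]
The witness will be the rational midpoint $m := (q + r)/2$. Working on $\QQ$, the strict order is dense, so $q < m < r$.

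Next I check the two strict inequalities directly from the definition of $<$.

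For $u < \rational(m)$, I use the pair $(q, m)$. We already have $u \le \rational(q)$ and $q < m$. The remaining condition $\rational(m) \le \rational(m)$ follows from reflexivity of $\le$, which in turn comes from idempotence of $\max$.

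For $\rational(m) < v$, I use the pair $(m, r)$. Here $\rational(m) \le \rational(m)$ again holds by reflexivity, $m < r$ was established above, and $\rational(r) \le v$ is one of the hypotheses.

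Packaging both witnesses with the truncation constructor gives the two strict inequalities. Pairing them with $m$ yields an element of the truncated existential, as required.

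The only step needing any real care is the elimination into a proposition. The target type must be recognized as a mere proposition: it is a propositional truncation by definition, so the recursion principle for truncations applies. Beyond that, the argument needs only the density of $\QQ$ and reflexivity of $\le$, both already available. So I expect no genuine obstacle. The midpoint computation $q < (q+r)/2 < r$ is the one place where rational arithmetic lemmas must be invoked, and it is routine.
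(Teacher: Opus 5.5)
Your proposal is correct and matches the paper's proof: unfold $u < v$ to obtain rationals $q < r$ with $u \le \rational(q)$ and $\rational(r) \le v$, take the midpoint, and check both strict inequalities directly from the definition using reflexivity of $\le$. Your explicit elimination of the truncation into the propositional goal is a detail the paper leaves implicit, and you handle it correctly.
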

\begin{proof}
  Suppose $u < v$. Then by definition there exist $r, s : \QQ$ such that
  \[
    u \le \rational(r), \qquad r < s, \qquad \rational(s) \le v.
  \]
  Choose $q := \frac{r + s}{2}$, the midpoint of $r$ and $s$. Then $r < q < s$.

  First, we have
  \[
    u \le \rational(r), \qquad r < q, \qquad \rational(q) \le \rational(q),
  \]
  where the last inequality is by reflexivity of $\le$. Thus we obtain
  $u < \rational(q)$ by definition.
  
  Similarly, we have
  \[
    \rational(q) \le \rational(q), \qquad q < s, \qquad \rational(s) \le v,
  \]
  so again by definition of $<$ we obtain $\rational(q) < v$.
  
  Hence $u < \rational(q) < v$.
\end{proof}

This gives us the immediate consequences of the definition of strict
inequality. To make further use of $<$, however, we need to understand how it
interacts with the closeness relation. The next two lemmas from the HoTT book
describe how small perturbations interact with non-strict and strict rational
bounds. They are followed by an important characterization of closeness in terms
of distance, confirming that the inductively defined closeness relation
$\closesym$ is equivalent to the more familiar notion of closeness induced by
the distance metric $d(u, v) := \lvert u - v \rvert$ on the reals.

\begin{lemma}[{\textcite[Lemma 11.3.42]{univalentfoundationsprogram2013}}]
  \label{lemma:less-equal-rational-close-less-equal-rational-plus}
  Let $u, v : \RR$, $q : \QQ$ and suppose $u \le \rational(q)$. If
  $u \close{\varepsilon} v$ for some $\varepsilon : \Qp$ then
  $v \le \rational(q + \varepsilon)$.
\end{lemma}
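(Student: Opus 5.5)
The plan is to reduce the statement to a fact about $\max$ with a fixed rational argument and then finish with the triangle inequality and separatedness. Write $\le$ as defined above: $u \le \rational(q)$ means $\max(u, \rational(q)) = \rational(q)$, and the goal $v \le \rational(q+\varepsilon)$ means $\max(v, \rational(q+\varepsilon)) = \rational(q+\varepsilon)$.

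The first step is to recall that $\max$ is non-expanding in each variable. Applied in the first argument to $u \close{\varepsilon} v$, this gives
\[
  \rational(q) = \max(u, \rational(q)) \close{\varepsilon} \max(v, \rational(q)).
\]
So $w := \max(v, \rational(q))$ is $\varepsilon$-close to $\rational(q)$. It also satisfies $\rational(q) \le w$ and $v \le w$, by the absorption and associativity laws for $\max$ already lifted to $\RR$.

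It then suffices to prove $w \le \rational(q+\varepsilon)$, since transitivity of $\le$ with $v \le w$ gives the goal. So the statement reduces to the special case where $\rational(q) \close{\varepsilon} w$, and we want $w \le \rational(q+\varepsilon)$. I would prove this sublemma directly for all $w$: if $\rational(q) \close{\varepsilon} w$ then $\max(w, \rational(q+\varepsilon)) = \rational(q+\varepsilon)$. Consider the continuous map $h(w) := \max(w, \rational(q+\varepsilon))$, which is non-expanding. At rational $w = \rational(s)$, \Cref{theorem:close-eq-close'} turns the hypothesis into $-\varepsilon < q - s < \varepsilon$, so $s < q + \varepsilon$ and $h(\rational(s)) = \rational(\max(s, q+\varepsilon)) = \rational(q+\varepsilon)$.

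The main obstacle is passing from rational $w$ to arbitrary $w$, because the hypothesis $\rational(q)\close{\varepsilon} w$ is not a closed condition in $w$, so \Cref{lemma:continuous-extension-unique} does not apply verbatim. My approach uses separatedness instead: I show $h(w) \close{\theta} \rational(q+\varepsilon)$ for every $\theta : \Qp$.
\begin{itemize}
\item By roundedness of closeness, obtain $\theta_0$ with $\rational(q) \close{\varepsilon - \theta_0} w$.
\item Given $\theta$, set $\eta := \min(\theta, \theta_0)/2$. Approximate $w$ by a rational: every real is a limit of rationals, so by $\RR$-induction (or directly, since $\limit(x) \close{\eta} x_{\eta/2}$ and one can recurse) there merely exists $s$ with $w \close{\eta} \rational(s)$. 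Since the goal $\close{\theta}$ is a proposition, mere existence is enough.
\item By the triangle inequality, $\rational(q) \close{\varepsilon - \theta_0 + \eta} \rational(s)$, and $\varepsilon - \theta_0 + \eta < \varepsilon$.
\item The rational case then gives $h(\rational(s)) = \rational(q+\varepsilon)$, and non-expansion gives $h(w) \close{\eta} h(\rational(s))$, hence $h(w) \close{\theta} \rational(q+\varepsilon)$ by monotonicity.
\end{itemize}
Separatedness then yields $h(w) = \rational(q+\varepsilon)$.

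Actually the rational approximation step subsumes the reduction to $w$: one can run the same argument directly on $v$. Approximate $u$ and $v$ simultaneously by rationals and use continuity of $\max(-, \rational(q+\varepsilon))$. I would keep the reduction only if it simplifies bookkeeping. The key auxiliary fact to check is density of rationals in the sense of closeness, proved by $\RR$-induction into the proposition $\exists s, w \close{\eta} \rational(s)$, quantified over all $\eta$.
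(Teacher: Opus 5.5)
Your argument is correct. The thesis gives no proof of its own for this lemma (it defers to the HoTT book), so there is nothing in the paper to match step by step; judged on its own, your proof goes through.

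The key move is sound. Non-expansion of $\max(-,\rational(q))$ turns the two hypotheses into a single one, $\rational(q)\close{\varepsilon} w$ with $v\le w$, whose left endpoint is rational. The sublemma is then closed correctly, using in order:
\begin{itemize}
\item openness, to obtain the slack $\theta_0$;
\item the rational-approximation fact;
\item the rational--rational characterization of closeness, which gives $s<q+\varepsilon$;
\item non-expansion of $h$;
\item monotonicity;
\item the path constructor.
\end{itemize}
The approximation fact is fine as you set it up. It goes by $\RR$-induction into a proposition, with $\eta$ universally quantified in the motive. In the limit case it uses $x_{\delta}\close{\delta+\gamma}\limit(x)$ together with the triangle inequality.

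Compare this with proving the sublemma by a direct $\RR$-induction on $w$. That route forces you to unpack $\rational(q)\close{\varepsilon}\limit(y)$ through the limit clause of $\approx$, and you must still push a bound back through the limit by a separatedness argument. Your version puts all the induction into one reusable density lemma and treats arbitrary $w$ uniformly.

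Two remarks.
\begin{itemize}
\item Openness is dispensable. Without it you get $\rational(q)\close{\varepsilon+\eta}\rational(s)$. Hence $\rational(\max(s,q+\varepsilon))\close{\eta}\rational(q+\varepsilon)$, and so $h(w)\close{2\eta}\rational(q+\varepsilon)$, which is still enough for separatedness.
\item Your closing aside is not right as stated. Approximating $u$ by a rational $t$ with $u\close{\eta}\rational(t)$ tells you nothing about $t$ relative to $q$. To learn anything, you must again use non-expansion of $\max(-,\rational(q))$ to get $\rational(\max(t,q))\close{\eta}\rational(q)$. The reduction step is only relocated, not subsumed, so keep it.
\end{itemize}
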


\begin{lemma}[{\textcite[Lemma 11.3.43(i)]{univalentfoundationsprogram2013}}]
  \label{lemma:less-than-rational-close-less-than-rational-plus}
  Let $u, v : \RR$, $q : \QQ$ and assume $u < \rational(q)$. If
  $u \close{\varepsilon} v$ for some $\varepsilon : \Qp$ then
  $v < \rational(q + \varepsilon)$.
\end{lemma}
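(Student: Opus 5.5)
Unfold the definition of strict inequality and reduce to \Cref{lemma:less-equal-rational-close-less-equal-rational-plus}. Since the goal $v < \rational(q + \varepsilon)$ is a mere proposition, we may eliminate the propositional truncation in the hypothesis $u < \rational(q)$. This gives rationals $s, t : \QQ$ with
\[
  u \le \rational(s), \qquad s < t, \qquad \rational(t) \le \rational(q).
\]

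First, apply \Cref{lemma:less-equal-rational-close-less-equal-rational-plus} to $u \le \rational(s)$ and $u \close{\varepsilon} v$, which yields $v \le \rational(s + \varepsilon)$. Next, since $\rational$ reflects $\le$ (recorded just before \Cref{theorem:archimedean}), the bound $\rational(t) \le \rational(q)$ gives $t \le q$ on the rationals, and hence $t + \varepsilon \le q + \varepsilon$. Together with $s < t$ this gives $s + \varepsilon < t + \varepsilon$. Because $\rational$ preserves $\le$, we also get $\rational(t + \varepsilon) \le \rational(q + \varepsilon)$.

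Now take the witnesses $s + \varepsilon$ and $t + \varepsilon$ in the definition of $<$:
\[
  v \le \rational(s + \varepsilon), \qquad s + \varepsilon < t + \varepsilon, \qquad \rational(t + \varepsilon) \le \rational(q + \varepsilon).
\]
These three facts are exactly what the definition requires, so they establish $v < \rational(q + \varepsilon)$.

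Alternatively, one can avoid reflecting $\le$ through $\rational$. Combine $v \le \rational(s+\varepsilon)$ with $\rational(s + \varepsilon) < \rational(t + \varepsilon)$, and then use the mixed transitivity law $u' < v' \to v' \le w' \to u' < w'$ together with $\rational(t+\varepsilon) \le \rational(q+\varepsilon)$. The latter is obtained by translating $\rational(t) \le \rational(q)$ using \Cref{lemma:addition-monotone-reflective} and the fact that $\rational$ commutes with addition.

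The only real content is \Cref{lemma:less-equal-rational-close-less-equal-rational-plus}; everything else is rational arithmetic and bookkeeping. The main point needing care is the step relating $\rational(t) \le \rational(q)$ to $t \le q$, or equivalently translating it by $\varepsilon$. This should follow because real $\max$ computes on rational inputs, via the lifting equation of \Cref{lemma:lift-nonexpanding}, and because $\rational$ is injective. Injectivity in turn follows from \Cref{theorem:close-eq-close'}: if $\rational(a) = \rational(b)$, then $\rational(a) \approx_{\delta} \rational(b)$ for every $\delta$, so $\lvert a - b \rvert < \delta$ for all $\delta : \Qp$, and therefore $a = b$.
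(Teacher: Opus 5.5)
Your proposal is correct and matches the argument the paper relies on. The paper does not reprove this lemma but cites HoTT book Lemma 11.3.43(i), and that proof likewise unfolds $u < \rational(q)$ to a bound $u \le \rational(s)$ with $s$ strictly below $q$, transfers it across the closeness via \Cref{lemma:less-equal-rational-close-less-equal-rational-plus}, and concludes from $s + \varepsilon < q + \varepsilon$; your treatment of the bookkeeping step (reflecting $\le$ through $\rational$ because $\max$ computes on rationals and $\rational$ is injective) is sound and uses facts the paper records.
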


\begin{theorem}[{\textcite[Theorem 11.3.44]{univalentfoundationsprogram2013}}]
  \label{theorem:close-iff-distance-less-than}
  For all $u, v : \RR$ and all $\varepsilon : \Qp$, we have
  \[
    \left(u \close{\varepsilon} v\right) \iff \left(\left\lvert u - v \right\rvert < \rational(\varepsilon)\right).
  \]
\end{theorem}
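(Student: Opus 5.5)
The forward direction reduces to the perturbation lemmas by way of the non-expanding operations. The converse approximates $u$ and $v$ by rationals and uses the rational clause of closeness together with the triangle inequality. Throughout I will use that subtraction and $\lvert\cdot\rvert$ on $\RR$ are obtained from non-expanding rational maps via \Cref{lemma:lift-nonexpanding} and \Cref{lemma:lift-lipschitz}. Hence $w \mapsto \lvert w - v\rvert$ is non-expanding, and the usual identities (such as $\lvert v - v\rvert = 0$ and the triangle inequality for $\lvert\cdot\rvert$) transfer from $\QQ$ by \Cref{corollary:continuous-extension-law}.

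($\Longrightarrow$) Suppose $u \close{\varepsilon} v$.
\begin{itemize}
\item Applying the non-expanding map $w \mapsto \lvert w - v\rvert$ gives $\lvert u - v\rvert \close{\varepsilon} \lvert v - v\rvert = \rational(0)$.
\item By openness of closeness, there merely exists $\theta : \Qp$ with $\theta < \varepsilon$ and $\rational(0) \close{\varepsilon-\theta} \lvert u-v\rvert$, using \Cref{lem:close-symmetric}. Since the goal is a mere proposition, we may extract $\theta$.
\item We have $\rational(0) < \rational(\theta)$, because $\rational$ preserves $<$.
\item \Cref{lemma:less-than-rational-close-less-than-rational-plus}, with $q := \theta$ and perturbation $\varepsilon - \theta$, then yields $\lvert u - v\rvert < \rational(\theta + (\varepsilon - \theta)) = \rational(\varepsilon)$.
\end{itemize}

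($\Longleftarrow$) Suppose $\lvert u - v\rvert < \rational(\varepsilon)$.
\begin{itemize}
\item By \Cref{theorem:archimedean}, merely pick $q : \QQ$ with $\lvert u-v\rvert < \rational(q) < \rational(\varepsilon)$. Set $\eta := (\varepsilon - q)/4 > 0$.
\item I need a density lemma: for every $w : \RR$ and $\delta : \Qp$ there merely exists $r : \QQ$ with $w \close{\delta} \rational(r)$. I would prove it by $\RR$-induction into a mere proposition, so the path case is automatic.
  \begin{itemize}
  \item For $\rational(r)$, take $r$ itself and use \Cref{lem:close-reflexive}.
  \item For $\limit(x)$, apply the hypothesis to $x_{\delta/3}$ at precision $\delta/3$. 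Combine this with $\limit(x) \close{2\delta/3} x_{\delta/3}$, which is the standard closeness of a limit to its approximations, derivable from the constructors and roundedness. Then apply the triangle inequality.
  \end{itemize}
\item Obtain rationals $r, s$ with $u \close{\eta} \rational(r)$ and $v \close{\eta} \rational(s)$. By the forward direction, $\lvert u - \rational(r)\rvert < \rational(\eta)$ and likewise for $v$ and $s$.
\item The real triangle inequality for $\lvert\cdot\rvert$ gives $\lvert \rational(r) - \rational(s)\rvert \le \lvert u - v\rvert + \lvert u - \rational(r)\rvert + \lvert v - \rational(s)\rvert$.
\item Combining the strict bounds via Gilbert's characterization of $<$ (addition preserves strict inequality, and $<$ composes with $\le$) gives $\rational(\lvert r - s\rvert) < \rational(q + 2\eta)$. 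Since $\rational$ reflects $<$, this means $-(q+2\eta) < r - s < q + 2\eta$ in $\QQ$.
\item The rational-rational constructor then gives $\rational(r) \close{q+2\eta} \rational(s)$.
\item Two applications of the triangle inequality give $u \close{q + 4\eta} v$, that is, $u \close{\varepsilon} v$.
\end{itemize}

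I expect the main obstacle to be the bookkeeping of strict inequalities in the converse direction. Adding two strict bounds and one strict bound to a non-strict one needs the lemmas that addition preserves $<$ and that $\le$ and $<$ compose. I would organize this step entirely through Gilbert's characterization $u < v \iff \exists \varepsilon,\, u + \rational(\varepsilon) \le v$, which reduces it to monotonicity of addition for $\le$ (\Cref{lemma:addition-monotone-reflective}). The density lemma is routine by comparison, since its motive is a mere proposition.
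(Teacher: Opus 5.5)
Your proof is correct. The paper does not prove this theorem itself; it defers to the HoTT book. Your forward direction, via the non-expanding map $w \mapsto \lvert w - v\rvert$, openness, and \Cref{lemma:less-than-rational-close-less-than-rational-plus}, is the standard argument. Your converse is a sound route that confines all inductive content to the density lemma and is otherwise a purely metric approximation argument.

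The one thing you must make explicit is the dependency order. In the paper, \Cref{theorem:less-than-iff-perturb-less-equal} and \Cref{lemma:addition-strict-monotone-reflective} are proved \emph{after} and \emph{from} the theorem you are proving. \Cref{lemma:close-perturb-less-equal} invokes it, and \Cref{lemma:less-than-perturb} depends on that lemma. This is not circular, because that whole chain uses only the forward implication, which you establish first without it. Still, your write-up has to be ordered as: forward direction, then Gilbert's chain, then the converse.

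You can also sidestep the issue, and with it what you call the main obstacle, because your slack $\eta$ makes strict inequalities unnecessary.
\begin{itemize}
\item Unfolding the definition of $\lvert u - v\rvert < \rational(\varepsilon)$ already gives a rational $q < \varepsilon$ with $\lvert u - v\rvert \le \rational(q)$, so the Archimedean step is not needed.
\item From $u \close{\eta} \rational(r)$, apply the non-expanding map $w \mapsto \lvert w - \rational(r)\rvert$, then \Cref{lemma:less-equal-rational-close-less-equal-rational-plus} starting from $\rational(0) \le \rational(0)$. This gives $\lvert u - \rational(r)\rvert \le \rational(\eta)$ with no openness needed, and similarly for $v$ and $s$.
\item The triangle inequality for $\lvert\cdot\rvert$ and monotonicity of addition for $\le$ (\Cref{lemma:addition-monotone-reflective}) then yield $\rational(\lvert r - s\rvert) \le \rational(q + 2\eta)$. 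Hence $\lvert r - s\rvert < q + 3\eta$ in $\QQ$.
\item Take $\eta := (\varepsilon - q)/5$. The rational-rational constructor and two uses of the triangle inequality for closeness give $u \close{\varepsilon} v$.
\end{itemize}
This version uses only results stated before the theorem and does not need the forward direction at all.
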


Up to this point, we have followed the development for strict order given in
\cite{univalentfoundationsprogram2013}. We now turn to Gilbert's strategy for
proving that addition preserves and reflects strict inequality. The key idea is
to derive a second characterization of $<$ involving addition by a positive
rational
\[
  u < v \iff \exists \varepsilon : \Qp, u + \rational(\varepsilon) \le v.
\]
This reformulation is useful because it lets us build on the algebraic structure
already developed for non-strict inequality. Once $<$ is expressed in terms of
addition and $\le$, we can use the fact that addition already preserves and
reflects $\le$ to prove the corresponding result for strict inequality.

The route to this characterization proceeds in four steps. We first prove a
helper lemma showing that $\varepsilon$-closeness yields a non-strict order
bound of the form
\[
  v \le u + \rational(\varepsilon).
\]
From this we derive two perturbation lemmas for strict inequality, showing how
an inequality $u < v$ is affected when either endpoint is replaced by a
sufficiently close real. We then establish the weak linearity of $<$, which is
needed to prove that every real lies strictly below any positive rational
perturbation of itself. With these results in place, Gilbert's alternative
characterization of strict inequality follows.

We begin with the promised non-strict order bound arising from closeness.

\begin{lemma}[{\textcite[Lemma 4.2]{gilbert2017}}]
  \label{lemma:close-perturb-less-equal}
  For all $u, v : \RR$ and $\varepsilon : \Qp$, if $u \close{\varepsilon} v$
  then $v \le u + \rational(\varepsilon)$.
\end{lemma}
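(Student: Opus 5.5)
Our plan is to prove the inequality by transferring a rational fact through the continuity of the relevant maps, rather than by direct induction on closeness. Unfolding the definition of $\le$, the goal is
\[
  \max(v, u + \rational(\varepsilon)) = u + \rational(\varepsilon).
\]
The difficulty is that this identity does not hold for all $u, v$; it holds only under the hypothesis $u \close{\varepsilon} v$. So \Cref{lemma:continuous-extension-unique2} cannot be applied directly.

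Instead, we route through \Cref{theorem:close-iff-distance-less-than}. From $u \close{\varepsilon} v$ we obtain $\lvert u - v\rvert < \rational(\varepsilon)$, and hence, by strict-to-nonstrict weakening (from the definition of $<$), $\lvert u - v \rvert \le \rational(\varepsilon)$. Here $\lvert w \rvert := \max(w, -w)$.

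The first step is the real-number identity
\[
  v \le u + \lvert v - u \rvert,
\]
that is, $\max(v, u + \lvert v - u\rvert) = u + \lvert v - u\rvert$ for all $u, v$. This is an unconditional two-variable identity between maps built from $+$, $-$ and $\max$. Each side is Lipschitz in each variable separately, which follows from \Cref{lemma:lipschitz-binary-composition} and the non-expanding extensions. On rationals it reduces to $\max(r, q + \lvert r - q \rvert) = q + \lvert r - q\rvert$, which follows by cases on $q \le r$ or $r \le q$. \Cref{corollary:continuous-extension-law} then lifts it.

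The second step uses \Cref{lemma:addition-monotone-reflective}: from $\lvert v - u \rvert \le \rational(\varepsilon)$ we get
\[
  u + \lvert v - u\rvert \le u + \rational(\varepsilon).
\]
Transitivity of $\le$ then gives $v \le u + \rational(\varepsilon)$. The bound from \Cref{theorem:close-iff-distance-less-than} concerns $\lvert u - v\rvert$, so we also need $\lvert u - v\rvert = \lvert v - u\rvert$. This is again a two-variable identity lifted by \Cref{corollary:continuous-extension-law}; alternatively, we can use \Cref{lem:close-symmetric} before applying the theorem.

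The main obstacle is extracting $\le$ from $<$, that is, showing $w < z \to w \le z$. We plan to unfold $<$: there merely exist $q < r$ with $w \le \rational(q)$ and $\rational(r) \le z$. Since $\le$ is a mere proposition ($\RR$ is a set by \Cref{corollary:reals-set}), we may eliminate the truncation. We then chain $w \le \rational(q) \le \rational(r) \le z$, using that $\rational$ preserves $\le$, as stated in the text.

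If the detour through \Cref{theorem:close-iff-distance-less-than} turns out to be circular in the development, our fallback is a direct proof. The predicate $v \le u + \rational(\varepsilon)$ is a proposition, so we would use $\closesym$-induction. The rational case is immediate. The limit cases would use \Cref{lemma:less-equal-rational-close-less-equal-rational-plus} together with roundedness, approximating $u + \rational(\varepsilon)$ by the lift of $v$ through a limit, which is a considerably heavier argument.
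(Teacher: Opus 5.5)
Your proposal is correct and follows essentially the same route as the paper: apply \Cref{theorem:close-iff-distance-less-than}, weaken $\lvert u - v\rvert < \rational(\varepsilon)$ to $\le$, and finish with monotonicity of addition (\Cref{lemma:addition-monotone-reflective}) and transitivity. The paper's version is a little leaner. It uses $-(u - v) \le \lvert u - v \rvert$ to get $-u + v \le \rational(\varepsilon)$ and then adds $u$ to both sides, which avoids both your separately lifted identity $v \le u + \lvert v - u\rvert$ and the $\lvert u - v\rvert = \lvert v - u\rvert$ bookkeeping.
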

\begin{proof}
  Suppose $u \close{\varepsilon} v$. By
  \Cref{theorem:close-iff-distance-less-than}, this implies
  \[
    \lvert u - v \rvert < \rational(\varepsilon),
  \]
  which we can weaken to
  \[
    \lvert u - v \rvert \le \rational(\varepsilon).
  \]
  Since $- (u - v) \le \lvert u - v \rvert$, we have
  \[
    - u + v = - (u - v) \le \lvert u - v \rvert \le \rational(\varepsilon).
  \]
  By the monotonicity of addition (\Cref{lemma:addition-monotone-reflective}),
  adding $u$ to both sides yields
  \[
    v \le u + \rational(\varepsilon)
  \]
  which is the desired inequality.
\end{proof}

The next step is to promote this non-strict control to corresponding
perturbation results for strict inequality.

\begin{lemma}[{\textcite[Lemma 4.3]{gilbert2017}}]
  \label{lemma:less-than-close-less-than-add}
  For all $u, v, w : \RR$ and $\varepsilon : \Qp$, if $u < v$ and
  $u \close{\varepsilon} w$ then $w < v + \rational(\varepsilon)$.
\end{lemma}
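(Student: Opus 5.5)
The natural approach is to unfold the definition of $<$ and push the rational bounds through the perturbation results already available. Since the goal $w < v + \rational(\varepsilon)$ is a mere proposition, we may eliminate the truncation in the hypothesis $u < v$. This gives rationals $q, r : \QQ$ with
\[
  u \le \rational(q), \qquad q < r, \qquad \rational(r) \le v .
\]
The plan is to move $w$ under a rational bound shifted by $\varepsilon$, and $v + \rational(\varepsilon)$ over a shifted rational bound.

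First, apply \Cref{lemma:less-equal-rational-close-less-equal-rational-plus} to $u \le \rational(q)$ and $u \close{\varepsilon} w$. This yields $w \le \rational(q + \varepsilon)$.

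Second, add $\rational(\varepsilon)$ to both sides of $\rational(r) \le v$. By \Cref{lemma:addition-monotone-reflective} this gives
\[
  \rational(r) + \rational(\varepsilon) \le v + \rational(\varepsilon).
\]
Since addition on $\RR$ is the non-expanding lift of rational addition (\Cref{lemma:lift-nonexpanding}), we have $\rational(r) + \rational(\varepsilon) = \rational(r + \varepsilon)$. Hence $\rational(r + \varepsilon) \le v + \rational(\varepsilon)$.

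Third, since $q < r$ implies $q + \varepsilon < r + \varepsilon$ in $\QQ$, the triple
\[
  w \le \rational(q+\varepsilon), \qquad q+\varepsilon < r+\varepsilon, \qquad \rational(r+\varepsilon) \le v + \rational(\varepsilon)
\]
witnesses $w < v + \rational(\varepsilon)$ directly by the definition of $<$.

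I expect no serious obstacle. The only points needing care are the computation rule identifying $\rational(r) + \rational(\varepsilon)$ with $\rational(r+\varepsilon)$, and the fact that the truncation can be eliminated because the target is a proposition. Neither is difficult.

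Alternatively, one could use \Cref{lemma:less-than-rational-close-less-than-rational-plus}. By \Cref{theorem:archimedean} there is a $q$ with $u < \rational(q) < v$, so $w < \rational(q+\varepsilon)$. It then remains to show $\rational(q + \varepsilon) \le v + \rational(\varepsilon)$, using the transitivity lemmas mixing $<$ and $\le$. The first route is more direct, so I would follow it.
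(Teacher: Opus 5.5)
Your argument is correct, but your main route is not the one the paper takes. The paper's proof is the alternative you sketch at the end. It proceeds in four steps:
\begin{enumerate}
\item The Archimedean principle (\Cref{theorem:archimedean}) gives $q$ with $u < \rational(q) < v$.
\item The strict perturbation lemma \Cref{lemma:less-than-rational-close-less-than-rational-plus} turns $u < \rational(q)$ and $u \close{\varepsilon} w$ into $w < \rational(q + \varepsilon)$.
\item The inequality $\rational(q) < v$ is weakened to $\rational(q) \le v$ and shifted by $\rational(\varepsilon)$ using \Cref{lemma:addition-monotone-reflective}.
\item The mixed transitivity law applied to $w < \rational(q+\varepsilon) \le v + \rational(\varepsilon)$ gives $w < v + \rational(\varepsilon)$.
\end{enumerate}

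You instead unpack the definition of $u < v$ once. You push the lower witness through the non-strict lemma \Cref{lemma:less-equal-rational-close-less-equal-rational-plus} and the upper witness through translation. You then assemble the witness $(q+\varepsilon,\, r+\varepsilon)$ for the conclusion directly from the definition of $<$.

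Your version needs fewer ingredients. It uses neither the Archimedean step nor the strict perturbation lemma; in effect it inlines the passage from the non-strict lemma to the strict one. It also avoids the weakening of $<$ to $\le$ and the mixed transitivity law. The paper's version, in exchange, never looks inside the definition of $<$. It is phrased entirely in terms of strict-order facts already on record, which matches how the subsequent lemmas in this part of the development are argued.
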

\begin{proof}
  Suppose $u < v$ and that $u \close{\varepsilon} w$ for some
  $\varepsilon : \Qp$. By the Archimedean property (\Cref{theorem:archimedean}),
  there merely exists some $q : \QQ$ such that
  \[
    u < \rational(q) < v.
  \]
  Applying \Cref{lemma:less-than-rational-close-less-than-rational-plus} to
  $u < \rational(q)$ and $u \close{\varepsilon} w$, we obtain
  \[
    w < \rational(q + \varepsilon).
  \]
  Weakening $\rational(q) < v$ to $\rational(q) \le v$ and adding $\varepsilon$
  to both sides (\Cref{lemma:addition-monotone-reflective}) yields
  \[
    \rational(q + \varepsilon) \le v + \rational(\varepsilon).
  \]
  By the transitivity of $\le$ and $<$, it follows that
  \[
    w < v + \rational(\varepsilon)
  \]
  as needed.
\end{proof}

Although Gilbert does not formulate the next corollary separately, recording it
explicitly makes the subsequent proof of weak linearity a bit cleaner.

\begin{corollary}
  \label{corollary:less-than-close-subtract-less-than}
  For all $u, v, w : \RR$ and $\varepsilon : \Qp$, if $u < v$ and
  $v \close{\varepsilon} w$ then $u - \rational(\varepsilon) < w$.
\end{corollary}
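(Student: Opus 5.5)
The plan is to reduce the corollary to \Cref{lemma:less-than-close-less-than-add} by negating everything, so that the perturbation at the upper endpoint becomes a perturbation at the lower endpoint. Suppose $u < v$ and $v \close{\varepsilon} w$. First, negation turns strict inequality around: $u < v$ gives $-v < -u$. To see this, I would unfold the definition of $<$. A witness $u \le \rational(q)$, $q < r$, $\rational(r) \le v$ becomes $-v \le \rational(-r)$, $-r < -q$, $\rational(-q) \le -u$. This uses that negation reverses $\le$, which follows from \Cref{lemma:addition-monotone-reflective}: add $-u-v$ to both sides of $u \le v$ and simplify with the Abelian group laws. It also uses that negation commutes with $\rational$, which is its defining computation rule from the Lipschitz lift. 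Since the goal is a mere proposition, I can eliminate the truncation in the definition of $<$.

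Second, negation is non-expanding, i.e.\ Lipschitz with constant $1$. So $v \close{\varepsilon} w$ gives $-v \close{\varepsilon} -w$. Applying \Cref{lemma:less-than-close-less-than-add} to $-v < -u$ and $-v \close{\varepsilon} -w$ then yields
\[
  -w < -u + \rational(\varepsilon).
\]
Negating once more, by the first step, gives $-(-u + \rational(\varepsilon)) < -(-w)$. Rewriting with the group identities $-(-w) = w$ and $-(-u + \rational(\varepsilon)) = u - \rational(\varepsilon)$, which were already transferred from $\QQ$, produces $u - \rational(\varepsilon) < w$.

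There is also a more direct route that avoids negation entirely. Use \Cref{theorem:archimedean} to pick $q$ with $u < \rational(q) < v$. Apply symmetry (\Cref{lem:close-symmetric}) and then the Lemma~11.3.43 analogue for lower bounds to get $\rational(q - \varepsilon) < w$. Finally combine this with $u - \rational(\varepsilon) \le \rational(q - \varepsilon)$, which comes from monotonicity of addition. The catch is that the paper only states part (i) of Lemma~11.3.43, so this route needs the lower-bound version that the paper does not give.

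For that reason I would follow the negation route. The main obstacle there is the first step, showing that negation reverses $<$. It is routine but requires carefully unpacking the truncated existential and checking that negation reverses $\le$. The rest is bookkeeping with the group laws.
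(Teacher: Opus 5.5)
Your proposal is correct and follows the paper's proof: negate to obtain $-v < -u$ and $-v \close{\varepsilon} -w$, apply \Cref{lemma:less-than-close-less-than-add} to get $-w < -u + \rational(\varepsilon)$, then negate back using the group laws. Your extra argument that negation reverses $<$, by unpacking the truncated rational witnesses and using \Cref{lemma:addition-monotone-reflective}, spells out a step the paper only asserts.
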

\begin{proof}
  Assume $u < v$ and that $v \close{\varepsilon} w$ for some
  $\varepsilon : \Qp$. Negation is antitone (order reversing) with
  respect to $<$, so we have $- v < - u$. Negation is also non-expanding, so
  from $v \close{\varepsilon} w$ it follows that $- v \close{\varepsilon} -
  w$. Applying the previous lemma yields
  \[
    - w < - u + \rational(\varepsilon).
  \]
  Negating both sides gives
  \[
    u - \rational(\varepsilon) < w.
  \]
\end{proof}

The next result is a constructive substitute for the classical trichotomy law
for linear orders. Classically, for any two reals $u$ and $v$, exactly one of
\[
  u < v, \qquad u = v, \qquad v < u
\]
holds. Constructively, however, this is too strong: strict comparison of
arbitrary real numbers is not decidable in general. Weak
linearity\footnote{There seems to be some variation in terminology here. The
  HoTT book refers to the property
  $\forall x, y, z,\; R(x, y) \to R(x, z) \lor R(z, y)$ as \emph{weak
    linearity}, while it reserves \emph{cotransitivity} for the variant
  $\forall x, y, z,\; R(x, y) \to R(x, z) \lor R(y, z)$ (the difference being in
  the order of $R(z, y)$ versus $R(y, z)$). \textcite{gilbert2017} uses the
  latter term for the former property, while \textcite{booij2020} treats both
  terms as synonyms for the first property. The standard library for Cubical
  Agda follows the HoTT book, so we keep that convention here.} captures the
part of this classical intuition that remains available constructively. It
asserts that whenever $u < v$, any third real $w$ must lie on one side or the
other of that strict interval, in the sense that
\[
  u < w \lor w < v.
\]
As explained in \cite[\S11.2.1]{univalentfoundationsprogram2013}, taking
$u := z - \rational(\varepsilon)$ and $v := z + \rational(\varepsilon)$ yields
\[
  z - \rational(\varepsilon) < w \lor w < z + \rational(\varepsilon),
\]
and so we can view weak linearity as a form of linearity up to a small
error. Besides being an important structural property of $<$ in its own right,
it is also the key to proving \Cref{lemma:less-than-perturb} immediately
afterwards.

\begin{lemma}[{\textcite[Lemma 4.4]{gilbert2017}}]
  \label{lemma:less-than-weakly-linear}
  Strict inequality is weakly linear: for all $u, v, w : \RR$,
  \[
    u < v \implies u < w \lor w < v.
  \]
\end{lemma}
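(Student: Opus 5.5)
Since the goal $u < w \lor w < v$ is a mere proposition, we may eliminate the truncation in the hypothesis $u < v$. By the Archimedean principle (\Cref{theorem:archimedean}) applied twice, we obtain rationals $q < r$ with $u < \rational(q) < \rational(r) < v$. Alternatively, we unfold the definition of $<$ to get $u \le \rational(q)$, $q<r$, $\rational(r)\le v$, and then refine to strict bounds using the midpoint trick from the proof of \Cref{theorem:archimedean}. Set $\varepsilon := (r - q)/2 \in \Qp$ and let $m := q + \varepsilon = r - \varepsilon$ be the midpoint. The plan is to approximate $w$ by a rational and decide on which side of $m$ that rational lies, using decidability of the order on $\QQ$.

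To approximate $w$, I would use that every real is merely close to some rational. For any $\theta : \Qp$ there merely exists $s : \QQ$ with $w \close{\theta} \rational(s)$. I would prove this by $\RR$-induction into the proposition $\exists s,\, w\close{\theta}\rational(s)$. In the rational case we take $s$ itself, using \Cref{lem:close-reflexive}. In the limit case $w = \limit(x)$ we apply the inductive hypothesis to $x_{\theta/3}$ with tolerance $\theta/3$. We then combine the resulting closeness with $x_{\theta/3}\close{\theta/3 + \theta/3}\limit(x)$, which is the standard closeness of a limit to its approximants (derivable from the limit constructors together with roundedness), and conclude by the triangle inequality of the corollary to \Cref{theorem:close-eq-close'}. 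The path case is trivial because the motive is a proposition. If this lemma proves awkward, an alternative is to work with $\lvert w - \rational(m)\rvert$ via \Cref{theorem:close-iff-distance-less-than}, but the rational approximation seems cleanest.

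Now take $\theta := \varepsilon/2$ and $s$ with $w \close{\theta} \rational(s)$, and compare $s$ with $m$ decidably. If $s \le m$, then $\rational(s) \le \rational(m)$, and by \Cref{lemma:less-equal-rational-close-less-equal-rational-plus} (using symmetry, \Cref{lem:close-symmetric}) we get $w \le \rational(m + \theta)$. Since $m + \theta < r$, this gives $w < \rational(r)$ by definition of $<$. Transitivity with $\rational(r) < v$ (or $\rational(r)\le v$) then yields $w < v$. If $m < s$, the symmetric argument applies. Using \Cref{corollary:less-than-close-subtract-less-than}, or the negated form of \Cref{lemma:less-equal-rational-close-less-equal-rational-plus} obtained via antitonicity of negation, we get $\rational(m - \theta) \le w$. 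Since $q < m - \theta$ and $u \le \rational(q)$, the definition of $<$ gives $u < w$.

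The main obstacle I expect is the lower-bound half of the case split. \Cref{lemma:less-equal-rational-close-less-equal-rational-plus} is only stated for upper bounds, so I will transport it through negation. This requires that negation is non-expanding, that it reverses $\le$ (which follows from $\max$/$\min$ duality and \Cref{lemma:addition-monotone-reflective}), and that $-\rational(q) = \rational(-q)$. The secondary risk is the rational-approximation lemma, specifically the bookkeeping of tolerances in the limit case. That bookkeeping is routine once closeness of a limit to its approximants is available.
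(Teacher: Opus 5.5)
Your argument is correct, but it is organized differently from the paper's.

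\textbf{The paper's route.} The paper also reduces to rationals $q<r$, but then runs $\RR$-induction on $w$ for the order statement itself. In the limit case it shrinks $[q,r]$ to its middle third $[s,t]$ and applies the inductive hypothesis to $x_\delta$ for a small $\delta$. It then pushes each disjunct from $x_\delta$ to $\limit(x)$ using $x_\delta \close{\delta_i} \limit(x)$, via \Cref{corollary:less-than-close-subtract-less-than} in one branch and \Cref{lemma:less-than-close-less-than-add} in the other.

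\textbf{Your route.} You confine the induction to a purely metric density lemma: every real is merely $\theta$-close to a rational. That lemma uses only reflexivity, symmetry, the triangle inequality and closeness of a limit to its approximants. You then do all the order reasoning once, non-inductively, by a decidable comparison of the approximating rational with the midpoint. Both branches go through as you describe.

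\textbf{Trade-offs.} Your decomposition gives a reusable approximation lemma and a top-level argument that never interleaves order with induction. The paper's version needs no auxiliary lemma and treats both branches symmetrically with the two perturbation lemmas it has just proved.

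\textbf{The anticipated obstacle.} The lower-bound half is not actually a problem. \Cref{corollary:less-than-close-subtract-less-than}, applied to $\rational(m) < \rational(s)$ and $\rational(s) \close{\theta} w$, directly gives $\rational(m)-\rational(\theta) < w$, that is, $\rational(m-\theta) < w$. It is proved before this lemma without using weak linearity. So you need no negated form of \Cref{lemma:less-equal-rational-close-less-equal-rational-plus} and no separate argument that negation reverses $\le$.

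\textbf{The fallback.} Your fallback via $\lvert w-\rational(m)\rvert$ would not obviously help. Reading off a side from it is itself a real-versus-rational comparison of the kind being proved. You don't need it, though.
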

\begin{proof}
  It suffices to prove
  \[
    q < r \implies \rational(q) < w \lor w < \rational(r)
  \]
  for $q, r : \QQ$. Indeed, if $u < v$ then by definition there merely exist
  some $q, r : \QQ$ such that
  \[
    u \le \rational(q), \quad q < r, \quad \rational(r) \le v.
  \]
  Thus, if $\rational(q) < w$ or $w < \rational(r)$, then $u < w$ or $w < v$
  follows by transitivity.

  Let $q, r : \QQ$ with $q < r$. We proceed by $\RR$-induction on $w$.

  The rational case follows immediately from the corresponding property for
  rational strict inequality.

  For the limit case, let $x : \Qp \to \RR$ be a Cauchy approximation and assume
  inductively that
  \[
    s < t \implies \rational(s) < x_{\varepsilon} \lor x_{\varepsilon} < \rational(t).
  \]
  for all $s, t : \QQ$ with $s < t$ and all $\varepsilon : \Qp$.

  Define
  \begin{align*}
    s & := \left(1 - \frac{1}{3}\right) q + \frac{1}{3} r, \\
    t & := \left(1 - \frac{2}{3}\right) q + \frac{2}{3} r,
  \end{align*}
  so that $q < s < t < r$. Next define
  \begin{align*}
    \delta_{1} & := s - q, \\
    \delta_{2} & := r - t, \\
    \delta & := \frac{\min(\delta_{1}, \delta_{2})}{2}.
  \end{align*}
  Then $0 < \delta < \delta_{1}, \delta_{2}$. Applying the inductive
  hypothesis to $s < t$ and $x_{\delta}$ yields
  \[
    \rational(s) < x_{\delta} \lor x_{\delta} < \rational(t),
  \]
  and hence there are two cases.

  \begin{enumerate}
  \item Suppose $\rational(s) < x_{\delta}$. Since
    $x_{\delta} \close{(\delta_{1} - \delta) + \delta} \limit(x)$,
    \Cref{corollary:less-than-close-subtract-less-than} gives
    \[
      \rational(q) = \rational(s - \delta_{1}) < \limit(x).
    \]
  \item Suppose $x_{\delta} < \rational(t)$. Since
    $x_{\delta} \close{(\delta_{2} - \delta) + \delta} \limit(x)$, it follows by
    \Cref{lemma:less-than-close-less-than-add} that
    \[
      \limit(x) < \rational(t + \delta_{2}) = \rational(r).
    \]
  \end{enumerate}

  In either case, we obtain
  \[
    \rational(q) < \limit(x) \lor \limit(x) < \rational(r),
  \]
  as required.
\end{proof}

Weak linearity now allows us to prove that every real is strictly smaller than
any positive rational perturbation of itself, which is the final result needed
before the characterization theorem.

\begin{lemma}[{\textcite[Lemma 4.5]{gilbert2017}}]
  \label{lemma:less-than-perturb}
  For every $u : \RR$ and every $\varepsilon : \Qp$, we have
  $u < u + \rational(\varepsilon)$.
\end{lemma}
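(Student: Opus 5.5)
The plan is to prove $u < u + \rational(\varepsilon)$ by combining weak linearity (\Cref{lemma:less-than-weakly-linear}) with the perturbation results \Cref{lemma:less-than-close-less-than-add} and \Cref{corollary:less-than-close-subtract-less-than}, using the rational-approximation structure of $u$. First, set $\theta := \varepsilon/3$. By reflexivity of closeness (\Cref{lem:close-reflexive}) and non-expandingness of addition in its second argument, $u \close{\theta} u$ gives $u + \rational(\theta) \close{\theta} u + \rational(\theta)$. We will need the closeness $u \close{\theta} u + \rational(\theta)$. We obtain it from \Cref{theorem:close-iff-distance-less-than}: since $\lvert u - (u + \rational(\theta))\rvert = \lvert -\rational(\theta)\rvert = \rational(\theta)$ by the abelian group laws, and $\rational(\theta) < \rational(2\theta)$ holds rationally, we get $u \close{2\theta} u + \rational(\theta)$. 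We therefore use $2\theta$-closeness throughout.

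The core step is to find a rational strictly between $u$ and $u+\rational(\varepsilon)$. Pick any rationals $q<r$ with $r - q = \theta$, for instance $q := 0$ and $r := \theta$, and apply weak linearity. The cleaner route is to work with $u - u = 0$ and translate. Write $v := u + \rational(\varepsilon)$. Since $\rational(0) < \rational(\theta)$, weak linearity applied to the real $u$ is of no help by itself, because we need bounds relative to $u$. Instead we use the shifted statement: $\rational(0) < \rational(\varepsilon)$ is a rational fact. We transport it through addition of $u$ and conclude using the following observation.

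We show directly that there merely exists $s : \QQ$ with $u \le \rational(s)$ and $\rational(s + \varepsilon/2) \le u + \rational(\varepsilon)$, after which the definition of $<$ finishes with $q := s$ and $r := s + \varepsilon/2$. To get $s$, apply weak linearity to $\rational(0) < \rational(\theta)$ and the real $w := u - \rational(t)$, where $t$ is a rational approximation of $u$. Such a $t$ exists by $\RR$-induction: every real is $\theta$-close to some rational. For limits this holds because $x_\eta \close{\cdot} \limit(x)$ and the triangle inequality let us use the inductive hypothesis at $x_{\theta/2}$. With $u \close{\theta} \rational(t)$, \Cref{lemma:close-perturb-less-equal} gives $u \le \rational(t + \theta)$. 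Symmetrically, $\rational(t) \le u + \rational(\theta)$, hence $\rational(t+2\theta) \le u + \rational(3\theta) = u + \rational(\varepsilon)$ by \Cref{lemma:addition-monotone-reflective}. Taking $s := t + \theta$ and using $s < s + \theta = t + 2\theta$, the definition of $<$ yields $u < u + \rational(\varepsilon)$. Weak linearity thus turns out to be unnecessary in this route, though it could replace the approximation lemma.

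The main obstacle is the auxiliary fact that every real is merely $\theta$-close to a rational. I would prove it by $\RR$-induction into the proposition $\exists t : \QQ,\, u \close{\theta} \rational(t)$, quantified over all $\theta$. Propositions make the path case trivial. The limit case uses roundedness and the triangle inequality, which we have by the corollary to \Cref{theorem:close-eq-close'}, together with the closeness $x_{\eta} \close{\eta + \zeta} \limit(x)$. That closeness would follow from the limit constructors, or from \Cref{theorem:alternative-closeness}, and it is the one place needing care with the $\varepsilon$ bookkeeping.
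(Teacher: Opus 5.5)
Your final argument (the last two paragraphs) is correct, but it takes a genuinely different route from the paper's.

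The paper follows Gilbert and runs $\RR$-induction on the goal itself. In the limit case it takes $\delta := \varepsilon/5$ and combines the inductive hypothesis $x_\delta < x_\delta + \rational(\delta)$ with \Cref{lemma:less-than-close-less-than-add} to get $\limit(x) < x_\delta + \rational(3\delta)$. It then splits this with weak linearity (\Cref{lemma:less-than-weakly-linear}) and refutes the unwanted branch $\limit(x) + \rational(\varepsilon) < x_\delta + \rational(3\delta)$ using \Cref{lemma:close-perturb-less-equal} and irreflexivity.

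You instead push the induction into the density fact that every real is merely $\theta$-close to some rational $t$. You then read off the separating rationals $t+\theta < t+2\theta$ directly from \Cref{lemma:close-perturb-less-equal} (applied in both directions via symmetry), monotonicity of addition, and the definition of $<$. Your route is shorter. It needs neither weak linearity nor \Cref{lemma:less-than-close-less-than-add} nor the Archimedean property, and it has no case split or refutation. In particular, it shows that weak linearity is not actually needed for this lemma. The paper's route costs nothing extra in context, since weak linearity and \Cref{lemma:less-than-close-less-than-add} must be developed anyway for the ordered-field axioms, and it keeps the formalization aligned with Gilbert's published proof.

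Two cleanups are needed. First, the announced bound $\rational(s + \varepsilon/2) \le u + \rational(\varepsilon)$ does not follow for your choice $s := t + \theta$. It can fail when $u$ is near $\rational(t - \varepsilon/3)$. What you actually derive and use is $r := s + \varepsilon/3$, so state that. Also drop the unused first two paragraphs and the weak-linearity detour.

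Second, $x_\eta \close{\eta+\zeta} \limit(x)$ for an arbitrary real $x_\eta$ does not follow from the constructors alone. It needs the \texttt{closeLimit}-style consequence of \Cref{theorem:close-eq-close'}, which the paper's proof also uses. You can avoid it entirely by stating the auxiliary fact as $\forall \theta.\, \exists t : \QQ.\, \rational(t) \close{\theta} u$. Then:
\begin{itemize}
\item the rational case is reflexivity;
\item the path case is automatic, since the motive is a proposition;
\item the limit case is a single application of the rational--limit constructor to the inductive hypothesis at $x_{\theta/2}$ with tolerance $\theta/2$.
\end{itemize}
No triangle inequality or roundedness is required.
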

\begin{proof}
  We proceed by $\RR$-induction on $u$.

  For the rational case, note that rational addition respects strict inequality,
  so
  \[
    q = q + 0 < q + \varepsilon
  \]
  and hence the desired result
  \[
    \rational(q) < \rational(q) + \rational(\varepsilon)
  \]
  follows by the strict monotonicity of the $\rational$ constructor.

  For the limit case, let $x : \Qp \to \RR$ be a Cauchy approximation and
  suppose inductively
  \[
    x_{\eta} < x_{\eta} + \rational(\zeta)
  \]
  for all $\eta, \zeta : \Qp$.

  Let $\varepsilon : \Qp$. Define $\delta := \frac{\varepsilon}{5}$. By the
  inductive hypothesis, we have
  \[
    x_{\delta} < x_{\delta} + \rational(\delta).
  \]
  Since $x_{\delta} \close{\delta + \delta} \limit(x)$,
  \Cref{lemma:less-than-close-less-than-add} yields
  \begin{equation}
    \label{eq:or1}
    \limit(x) < (x_{\delta} + \rational(\delta)) + \rational(\delta + \delta) = x_{\delta} + \rational(3 \delta).
  \end{equation}
  Then, applying the weak linearity of $<$
  (\Cref{lemma:less-than-weakly-linear}) to \labelcref{eq:or1}, we obtain either
  \[
    \limit(x) < \limit(x) + \rational(\varepsilon),
  \]
  the desired result, or
  \begin{equation}
    \label{eq:foo}
    \limit(x) + \rational(\varepsilon) < x_{\delta} + \rational(3 \delta).
  \end{equation}

  We show that \labelcref{eq:foo} leads to a contradiction. Since
  $x_{\delta} \close{\delta + \delta} \limit(x)$,
  \Cref{lemma:close-perturb-less-equal} yields
  \[
    x_{\delta} \le \limit(x) + \rational(\delta + \delta).
  \]
  By \Cref{lemma:addition-monotone-reflective}, adding $3 \delta$ to both sides,
  we obtain
  \begin{align*}
    x_{\delta} + \rational(3 \delta)
    & \le \limit(x) + \rational(\delta + \delta) + \rational(3 \delta) \\
    & = \limit(x) + \rational(5 \delta) \\
    & = \limit(x) + \rational(\varepsilon).
  \end{align*}
  Since we assumed
  $\limit(x) + \rational(\varepsilon) < x_{\delta} + \rational(3 \delta)$, this
  implies
  \[
    \limit(x) + \rational(\varepsilon) < \limit(x) + \rational(\varepsilon),
  \]
  contradicting the irreflexivity of $<$.
\end{proof}

We can now assemble the preceding lemmas into the desired alternative
characterization of strict inequality.

\begin{theorem}[{\textcite[Lemma 4.1]{gilbert2017}}]
  \label{theorem:less-than-iff-perturb-less-equal}
  Let $u, v : \RR$. Then $u < v$ if and only if there merely exists some
  $\varepsilon : \Qp$ such that $u + \rational(\varepsilon) \le v$.
\end{theorem}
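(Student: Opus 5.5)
We prove the two implications separately. Since both sides are mere propositions (strict inequality is defined with a mere existential, and so is the right-hand side), we may freely eliminate truncations when proving either direction.

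For the backward implication, suppose there merely exists $\varepsilon : \Qp$ with $u + \rational(\varepsilon) \le v$. By \Cref{lemma:less-than-perturb} we have $u < u + \rational(\varepsilon)$. The mixed transitivity law $u < v \to v \le w \to u < w$, recorded just after the definition of $<$, then yields $u < v$. This direction is immediate.

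For the forward implication, suppose $u < v$. By definition there merely exist $q, r : \QQ$ with $u \le \rational(q)$, $q < r$, and $\rational(r) \le v$. We take $\varepsilon := r - q$, which is a positive rational.

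\begin{itemize}
\item By \Cref{lemma:addition-monotone-reflective}, adding $\rational(\varepsilon)$ on the right of $u \le \rational(q)$ gives $u + \rational(\varepsilon) \le \rational(q) + \rational(\varepsilon)$.
\item Since addition on $\RR$ is the non-expanding lift of rational addition, $\rational(q) + \rational(r - q) = \rational(q + (r - q)) = \rational(r)$.
\item Transitivity of $\le$ with $\rational(r) \le v$ then gives $u + \rational(\varepsilon) \le v$.
\end{itemize}

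The only point needing care is the computation of the lifted addition on rational inputs. This follows from the defining equation of \Cref{lemma:lift-nonexpanding}, or definitionally in the formalization.

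I do not expect a serious obstacle here. The substantive work is already contained in \Cref{lemma:less-than-perturb}, which relied on weak linearity. The remaining care is bookkeeping: eliminating the truncated existential into the propositional goal, and rewriting $\rational(q) + \rational(r-q)$ to $\rational(r)$.
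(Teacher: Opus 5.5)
Your proposal is correct and follows essentially the same route as the paper's proof: the forward direction takes $\varepsilon := r - q$ from the witnesses of $u < v$ and uses monotonicity of addition with $\rational(q) + \rational(r - q) = \rational(r)$, and the backward direction combines \Cref{lemma:less-than-perturb} with the mixed transitivity of $<$ and $\le$. Your remark about eliminating the truncations into propositional goals is a harmless extra detail that the paper leaves implicit.
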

\begin{proof}
  ($\Longrightarrow$) Suppose $u < v$. By definition, there merely exist some
  $q, r : \QQ$ such that
  \[
    u \le \rational(q), \quad q < r, \quad \rational(r) \le v.
  \]
  Let $\varepsilon := r - q$. We have $\varepsilon > 0$ since $q < r$. By the
  monotonicity of addition with respect to $\le$
  (\Cref{lemma:addition-monotone-reflective}), we have
  \[
    u + \rational(\varepsilon) \le \rational(q) + \rational(\varepsilon) =
    \rational(r) \le v.
  \]

  ($\Longleftarrow$) Suppose there merely exists some $\varepsilon : \Qp$ such
  that $u + \rational(\varepsilon) \le v$. By \Cref{lemma:less-than-perturb}, we
  obtain
  \[
    u < u + \rational(\varepsilon) \le v,
  \]
  so the result follows by the transitivity of $<$ and $\le$.
\end{proof}

This alternative characterization is the key result supplied by Gilbert's
strategy. The original definition of $u < v$ is expressed using the existence of
rationals separating $u$ and $v$, but
\Cref{theorem:less-than-iff-perturb-less-equal} shows that this is equivalent to
a formulation involving only addition and the non-strict order $\le$. Since we
previously established that addition preserves and reflects $\le$ in
\Cref{lemma:addition-monotone-reflective}, the desired compatibility of addition
with strict inequality now follows formally.

\begin{lemma}
  \label{lemma:addition-strict-monotone-reflective}
  For each real $a : \RR$, the maps
  \[
    u \mapsto a + u
    \qquad\text{and}\qquad
    u \mapsto u + a
  \]
  are monotone and order-reflecting with respect to $<$. Explicitly,
  \[
    u < v \iff a + u < a + v
  \]
  and hence also
  \[
    u < v \iff u + a < v + a
  \]
  for all $a, u, v : \RR$.
\end{lemma}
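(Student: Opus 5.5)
The plan is to reduce everything to \Cref{theorem:less-than-iff-perturb-less-equal} and \Cref{lemma:addition-monotone-reflective}, treating the left translation $u \mapsto a + u$ first. Commutativity of addition, already obtained by transfer from $\QQ$ via \Cref{corollary:continuous-extension-law}, then gives the right-translation version: $u + a = a + u$ and $v + a = a + v$, so the second equivalence is a rewriting of the first.

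For the forward direction, suppose $u < v$. By \Cref{theorem:less-than-iff-perturb-less-equal} there merely exists $\varepsilon : \Qp$ with $u + \rational(\varepsilon) \le v$. Our goal is a proposition, since $<$ is defined as a propositional truncation, so we may eliminate this truncation and fix such an $\varepsilon$. Applying \Cref{lemma:addition-monotone-reflective} with $a$ gives
\[
  a + (u + \rational(\varepsilon)) \le a + v.
\]
Associativity of real addition, again a transferred rational identity, rewrites the left side as $(a + u) + \rational(\varepsilon)$. Hence the same $\varepsilon$ witnesses the right-hand side of \Cref{theorem:less-than-iff-perturb-less-equal} for the pair $a + u$, $a + v$, and therefore $a + u < a + v$.

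For the converse, I would avoid redoing the argument and instead apply the forward direction to $a + u < a + v$ with $-a$ in place of $a$. This gives
\[
  (-a) + (a + u) < (-a) + (a + v).
\]
The abelian group laws (associativity, left inverse, left unit) then simplify both sides to $u$ and $v$, using transport along these equalities in $\RR$. This mirrors the reflection step in the proof of \Cref{lemma:addition-monotone-reflective}.

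No step is a genuine obstacle, since the substantive work lives in \Cref{lemma:less-than-perturb} and \Cref{theorem:less-than-iff-perturb-less-equal}. The only points needing care are bookkeeping. First, the truncation eliminations must land in a proposition, which holds because $<$ is truncated. Second, the algebraic rewrites must be justified by the group structure already lifted to $\RR$, not by rational reasoning.
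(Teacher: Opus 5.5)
Your proposal is correct and follows essentially the same route as the paper: reduce to left translation by commutativity, prove the forward direction by passing through \Cref{theorem:less-than-iff-perturb-less-equal} and the $\le$-monotonicity of \Cref{lemma:addition-monotone-reflective}, and obtain reflection by applying the forward direction with $-a$ and simplifying. The only difference is that you spell out the associativity rewrite and the truncation elimination, which the paper leaves implicit.
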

\begin{proof}
  By commutativity of addition, it suffices to prove the claim for addition on
  the left.

  ($\Longrightarrow$) Suppose $u < v$. Then by
  \Cref{theorem:less-than-iff-perturb-less-equal}, there merely exists some
  $\varepsilon : \Qp$ with $u + \rational(\varepsilon) \le v$. It follows by the
  monotonicity of addition with respect to $\le$ that
  \[
    (a + u) + \rational(\varepsilon) \le a + v,
  \]
  and hence $a + u < a + v$, again by
  \Cref{theorem:less-than-iff-perturb-less-equal}.

  ($\Longleftarrow$) Conversely, suppose $a + u < a + v$. Applying the
  implication just proved by adding $-a$ to both sides, we get
  \[
    (-a) + (a + u) < (-a) + (a + v).
  \]
  Simplifying both sides gives $u < v$.
\end{proof}

Before turning to the construction of multiplication and reciprocal, we record
one further consequence of the theory for strict order developed above. For
reciprocal, it is not enough to know merely that a real is not equal to zero;
rather, we need positive information that it lies on one side of zero or the
other. This is expressed by an apartness relation of $\RR$, defined as the
symmetric closure of $<$:
\[
  u \apart v := (u < v) + (v < u).
\]
Here we use the ordinary coproduct $+$, rather than propositionally truncated
logical disjunction $\lor$. The coproduct itself is already a proposition, since
the two injections are mutually exclusive and both propositionally-valued. Thus
a witness of $x \apart 0$ can be analyzed by cases, giving exactly the
computational content needed to define reciprocal separately for positive and
negative reals. Because $<$ is a strict order, its symmetric closure is
irreflexive, symmetric, and cotransitive, so $\apart$ is an apartness relation
in the standard constructive sense \cite{minesrichmanruitenburg2012}.

For the third case study, we focus on the construction of multiplication and
reciprocal on $\RR$. Unlike addition, negation, $\min$, and $\max$, these
operations cannot be obtained by a direct application of the extension
principles of \Cref{section:lifting}. The difficulty is that neither operation
is globally Lipschitz: multiplication admits no uniform Lipschitz constant in
one variable without a bound on the other, while reciprocal fails to be
Lipschitz near zero. The HoTT book addresses this problem indirectly by first
constructing the squaring operation $u \mapsto u^{2}$ and then defining
multiplication using the identity
\[
  u \cdot v = \frac{(u + v)^{2} - u^{2} - v^{2}}{2}
\]
\cite{univalentfoundationsprogram2013}. We instead follow a more direct strategy
due to Gilbert \cite{gilbert2017}. The basic idea is to define each operation
first on restricted domains where the missing Lipschitz bounds become available
again. For multiplication, this means fixing a bound on one factor, while for
reciprocal it means restricting to reals bounded below by a positive
rational. The main work is then to show that these locally defined maps agree
whenever two such bounds overlap. In other words, the output of the local map
must be independent of the particular chosen bound. Combined with the mere
existence of suitable bounds, we can pass to the desired global maps. To
accomplish this formally, Gilbert appeals to a principle of definition by
surjection, which may be viewed as a consequence of the universal property of
set quotients. We follow the same mathematical strategy, but adapt its
implementation to the infrastructure already available in the Cubical Agda
standard library. In particular, we make use of Kraus's characterization of maps
out of propositional truncations into sets \cite{kraus2015}. We describe the
construction of multiplication in some detail, and then summarize the
construction of reciprocal, since it follows the same general pattern.

We begin with multiplication on the left by a fixed rational. In general, a
Lipschitz constant bounds the factor by which a map can expand distances, so for
multiplication by a fixed scalar $q : \QQ$ we expect multiplication by $q$ to be
Lipschitz with constant $\lvert q \rvert$. We take the absolute value because
distances are nonnegative, and to fit the HoTT book's convention that
Lipschitz constants be positive rationals (see \Cref{definition:lipschitz}), we
replace it with the positive bound $\max(\lvert q \rvert, 1)$.

\begin{lemma}
  \label{lemma:left-multiply-rational-lipschitz}
  Let $q : \QQ$. Then the map
  \[
    r \mapsto \rational(q \cdot r) : \QQ \to \RR
  \]
  is Lipschitz with constant $\max(\lvert q \rvert, 1)$.
\end{lemma}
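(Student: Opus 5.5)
Let $L := \max(\lvert q \rvert, 1)$. We unfold \Cref{definition:lipschitz} in the rational case: given $r, s : \QQ$ and $\varepsilon : \Qp$ with $\lvert r - s \rvert < \varepsilon$, we must produce a witness of $\rational(q \cdot r) \close{L\varepsilon} \rational(q \cdot s)$. Since both points are rational, the only available tool is the rational-rational constructor of closeness from \Cref{def:hott-reals}. So the whole proof reduces to a rational inequality:
\[
  -L\varepsilon < q\cdot r - q\cdot s < L\varepsilon,
\]
which is equivalent to $\lvert q\cdot r - q\cdot s\rvert < L\varepsilon$. We do not need the alternative closeness relation $\approx$ or any real arithmetic.

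For the rational inequality we first use distributivity and multiplicativity of the rational absolute value:
\[
  \lvert q r - q s\rvert = \lvert q\rvert\,\lvert r - s\rvert .
\]
We then bound $\lvert q \rvert \le L$ by the defining property of $\max$, giving $\lvert q\rvert\,\lvert r-s\rvert \le L\,\lvert r-s\rvert$, since $\lvert r - s \rvert \ge 0$. Finally $L\,\lvert r-s\rvert < L\varepsilon$, because $L \ge 1 > 0$ and strict inequality is preserved by multiplication by a positive rational. Chaining a $\le$ with a $<$ gives the strict bound. Unpacking the absolute value inequality $\lvert z\rvert < c$ into $-c < z < c$ is then a standard rational lemma.

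The only delicate point is the step $\lvert q\rvert\,\lvert r - s\rvert < L\varepsilon$. We cannot simply use $\lvert q \rvert$ as the constant, because the strict inequality $\lvert q\rvert\,\lvert r-s\rvert < \lvert q\rvert\varepsilon$ fails when $q = 0$. This is exactly why the statement uses $\max(\lvert q\rvert,1)$: positivity of $L$ lets us route the strictness through the factor $L$ rather than $\lvert q\rvert$, by splitting the bound into the non-strict step $\lvert q\rvert \le L$ followed by the strict step. In the formalization, the main obstacle is bookkeeping rather than mathematics. We must package $L$ as an element of $\Qp$ (positivity from $1 \le L$), and use the appropriate monotonicity lemmas for rational multiplication with the correct side and strictness.
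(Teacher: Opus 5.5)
Your proposal is correct and follows the paper's proof essentially step for step: the chain $\lvert q r - q s\rvert = \lvert q\rvert\,\lvert r-s\rvert \le \max(\lvert q\rvert,1)\,\lvert r-s\rvert < \max(\lvert q\rvert,1)\,\varepsilon$, followed by the rational-rational closeness constructor. Your explanation of why the constant is $\max(\lvert q\rvert,1)$ rather than $\lvert q\rvert$ matches the paper's reason, namely that Lipschitz constants must be positive rationals, which fails for $\lvert q\rvert$ when $q = 0$.
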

\begin{proof}
  Suppose $r, s : \QQ$ satisfy $\lvert r - s \vert < \varepsilon$ for some
  $\varepsilon : \Qp$. Then
  \[
    \lvert q \cdot r - q \cdot s \rvert
    = \lvert q \rvert \cdot \lvert r - s \rvert
    \le \max(\lvert q \rvert, 1)\cdot \lvert r - s \rvert
    < \max(\lvert q \rvert, 1)\cdot \varepsilon.
  \]
  Hence
  \[
    \rational(q \cdot r)
    \close{\max(\lvert q \rvert, 1) \cdot \varepsilon}
    \rational(q \cdot s),
  \]
  so the displayed map is Lipschitz with constant $\max(\lvert q \rvert, 1)$.
\end{proof}

Applying the Lipschitz extension principle (\Cref{lemma:lift-lipschitz}) to
\Cref{lemma:left-multiply-rational-lipschitz}, we obtain a function
\[
  m : \QQ \to \RR \to \RR
\]
where for each $q : \QQ$ the map $m_{q} : \RR \to \RR$ extends left
multiplication by $q$. In particular,
\[
  m_{q}(\rational(r)) = \rational(q \cdot r)
\]
for all $q, r : \QQ$. Since each $m_{q}$ is obtained by Lipschitz extension, it
is again Lipschitz with constant $\max(\lvert q \rvert, 1)$. This completes the
first extension step, extending multiplication by a fixed rational from $\QQ$ to
$\RR$. To extend in the remaining argument, we now fix a real $v : \RR$ and ask
whether the map
\[
  q \mapsto m_{q}(v)
\]
is Lipschitz in the rational parameter $q$. This is not true uniformly in $v$,
but it does hold once $\lvert v \rvert$ is bounded by a positive rational.

\begin{lemma}
  \label{lemma:m-lipschitz1}
  Let $L : \QQ$ be positive, and let $v : \RR$ satisfy
  $\lvert v \rvert \le \rational(L)$. Then the map
  \[
    q \mapsto m_{q}(v) : \QQ \to \RR
  \]
  is $L$-Lipschitz.
\end{lemma}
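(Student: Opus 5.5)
Fix $q, r : \QQ$ and $\varepsilon : \Qp$ with $\lvert q - r \rvert < \varepsilon$. The goal is
\[
  m_{q}(v) \close{L\varepsilon} m_{r}(v).
\]
My plan is to rewrite the difference $m_{q}(v) - m_{r}(v)$ as $m_{q-r}(v)$, bound that in absolute value by $\lvert q - r\rvert \cdot \rational(L)$, and then convert back to closeness using \Cref{theorem:close-iff-distance-less-than}. The desired closeness is equivalent to $\lvert m_{q}(v) - m_{r}(v) \rvert < \rational(L\varepsilon)$, so it suffices to establish this distance bound.

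The first step is algebraic identities for $m$ in the rational parameter, each extended from $\QQ$ by \Cref{lemma:continuous-extension-unique}. Each $m_{q}$ is Lipschitz, negation and addition are non-expanding, and so both sides of each identity are continuous maps $\RR \to \RR$ that agree on rational inputs. I will use this to obtain
\begin{align*}
  m_{q}(v) - m_{r}(v) & = m_{q - r}(v), \\
  \lvert m_{s}(v) \rvert & = m_{\lvert s \rvert}(\lvert v \rvert).
\end{align*}
For the second identity I also need that the real absolute value $\lvert - \rvert = \max(-, -(-))$ is non-expanding, which follows from the non-expanding extension of $\max$ composed with negation. With $s := q - r$, these give $\lvert m_{q}(v) - m_{r}(v) \rvert = m_{\lvert s \rvert}(\lvert v \rvert)$.

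The second step is a monotonicity lemma: for $t \ge 0$ rational, $m_{t}$ is monotone with respect to $\le$. Since $u \le w$ means $\max(u,w) = w$, it suffices to prove $m_{t}(\max(u,w)) = \max(m_{t}(u), m_{t}(w))$. This holds on rationals because $t \ge 0$, and it lifts to reals by \Cref{corollary:continuous-extension-law}, using separate continuity of both sides. Applying monotonicity to $\lvert v \rvert \le \rational(L)$ yields
\[
  m_{\lvert s \rvert}(\lvert v \rvert) \le m_{\lvert s \rvert}(\rational(L)) = \rational(\lvert s \rvert L).
\]

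Finally, $\lvert s \rvert < \varepsilon$ and $L > 0$ give $\lvert s \rvert L < L\varepsilon$ in $\QQ$, so $\rational(\lvert s \rvert L) < \rational(L\varepsilon)$ because $\rational$ preserves strict order. Combining with the previous inequality via the mixed transitivity $u \le v \to v < w \to u < w$ gives $\lvert m_{q}(v) - m_{r}(v)\rvert < \rational(L\varepsilon)$. \Cref{theorem:close-iff-distance-less-than} then gives the required closeness.

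I expect the main obstacle to be the identity $\lvert m_{s}(v) \rvert = m_{\lvert s \rvert}(\lvert v \rvert)$ and the monotonicity of $m_{t}$. Neither is an identity in two real variables, since the parameter is rational and only the real variable is extended. Each therefore needs a separate single-variable continuity argument, and the monotonicity proof needs a case split on the sign of $t$ at the rational level. If this route becomes awkward, I would instead try $\RR$-induction on $v$, with the closeness goal as the (propositional) motive; but I expect the continuity-based route above to be cleaner.
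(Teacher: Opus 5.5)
Your proposal follows the paper's proof: the paper lifts the same identity $\lvert m_q(u) - m_r(u)\rvert = m_{\lvert q-r\rvert}(\lvert u\rvert)$ (you split it into two lifted identities) and the same $\max$-distributivity identity for nonnegative $s$ to get monotonicity of $m_s$, then bounds the distance by $\rational(\lvert q-r\rvert L)$ and concludes with \Cref{theorem:close-iff-distance-less-than}; you also spell out the final strict inequality, which the paper leaves implicit. One small correction: composing the separately non-expanding $\max$ with negation gives $\lvert\cdot\rvert$ only Lipschitz constant $2$ via \Cref{lemma:lipschitz-binary-composition}, not non-expansiveness, but continuity is all your argument needs, so nothing breaks.
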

\begin{proof}
  We use two properties of the family $m_{q}$, both obtained by proving the
  corresponding rational identities and then lifting them to the reals by
  uniqueness of continuous extensions. First, for all $q, r : \QQ$ and
  $u : \RR$,
  \[
    \lvert m_{q}(u) - m_{r}(u) \rvert = m_{\lvert q - r \rvert}(\lvert u \rvert).
  \]
  Second, for every nonnegative rational $s$ and all $u, w : \RR$,
  \[
    \max(m_{s}(u), m_{s}(w)) = m_{s}(\max(u, w)).
  \]
  The second identity implies that $m_{s}$ is monotone whenever $s \ge 0$.

  Now let $q, r : \QQ$, let $\varepsilon : \Qp$, and assume
  \[
    \lvert q-r\rvert < \varepsilon.
  \]
  Then
  \[
    \lvert m_{q}(v) - m_{r}(v) \rvert =
    m_{\lvert q - r \rvert}(\lvert v \rvert) \le
    m_{\lvert q - r \rvert}(\rational(L)) =
    \rational(\lvert q - r \rvert L)
  \]
  by the first property, monotonicity of $m_{\lvert q - r \rvert}$, the bound
  $\lvert v \rvert \le \rational(L)$, and the fact that $m$ computes on rational
  inputs.

  By \Cref{theorem:close-iff-distance-less-than}, it follows that the map
  $q \mapsto m_{q}(v)$ is $L$-Lipschitz.
\end{proof}

Applying \Cref{lemma:lift-lipschitz} once more, we obtain that whenever
$L : \Qp$ and $v : \RR$ satisfy $\lvert v \rvert \le \rational(L)$, the map
\[
  q \mapsto m_{q}(v)
\]
extends from $\QQ$ to a map
\[
  b_{L, v} : \RR \to \RR
\]
As a consequence of the Lipschitz extension principle, this bounded
multiplication map satisfies
\[
  b_{L, v}(\rational(q)) = m_{q}(v)
\]
for all $q : \QQ$ and is Lipschitz with constant $L$. Thus, once a
rational bound on $\lvert v \rvert$ has been chosen, right multiplication by $v$
is defined on all real inputs.

This completes the local part of the construction of multiplication. To pass
from the bound-indexed local maps $b_{L, v}$ to genuine multiplication defined
on all of $\RR$, two requirements must be met. First, suitable rational bounds
must merely exist for every real. Second, whenever two such bounds are both
valid, the corresponding bounded multiplication maps must agree. The next lemma
supplies the first of these requirements. 

\begin{lemma}[{\textcite[Lemma 4.13]{gilbert2017}}]
  \label{lemma:bounded-by-rational}
  For every $u : \RR$, there merely exists a $q : \Qp$ such that $\lvert u \rvert < \rational(q)$.
\end{lemma}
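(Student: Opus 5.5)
We prove this by $\RR$-induction on $u$. The target type $\exists q : \Qp, \lvert u \rvert < \rational(q)$ is a proposition. So the path case of \Cref{def:rr-induction} is automatic: any two elements of a family of propositions are equal over any path. Only the rational and limit cases remain.

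\textbf{Rational case.} Given $r : \QQ$, we must bound $\lvert \rational(r) \rvert$. Absolute value on $\RR$ is obtained by lifting the non-expanding rational map $\lvert - \rvert$, so $\lvert \rational(r) \rvert = \rational(\lvert r \rvert)$. Take $q := \lvert r \rvert + 1$, which is positive. Since $\rational$ preserves strict inequality, $\lvert r \rvert < \lvert r \rvert + 1$ gives $\rational(\lvert r \rvert) < \rational(q)$.

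\textbf{Limit case.} Let $x$ be a Cauchy approximation, and assume inductively that for each $\varepsilon$ there merely exists $q_{\varepsilon}$ with $\lvert x_{\varepsilon} \rvert < \rational(q_{\varepsilon})$. We only need a single index, so take $\varepsilon := 1$ and obtain $q_{1}$. Since the goal is a proposition, we may eliminate the truncation.
\begin{enumerate}
\item From $x_{1} \close{1 + 1} \limit(x)$, which holds by the standard closeness of an approximation to its limit (HoTT book Lemma 11.3.29, available through \Cref{theorem:close-eq-close'}), the absolute value map, being non-expanding, gives $\lvert x_{1} \rvert \close{2} \lvert \limit(x) \rvert$.
\item \Cref{lemma:less-than-rational-close-less-than-rational-plus} applied to $\lvert x_{1} \rvert < \rational(q_{1})$ then yields $\lvert \limit(x) \rvert < \rational(q_{1} + 2)$.
\item Setting $q := q_{1} + 2 : \Qp$ finishes the limit case.
\end{enumerate}

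\textbf{Alternative without induction.} Use \Cref{theorem:close-iff-distance-less-than} directly. For any $u$, the closeness $x_{1} \close{2} u$ type of fact is not available in general, so this route does not work. However, reflexivity $u \close{1} u$ is not helpful either, so the induction above is the cleanest route.

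\textbf{Main obstacle.} The main point of care is bookkeeping about which earlier facts about absolute value are available. We need:
\begin{itemize}
\item that $\lvert - \rvert$ on $\RR$ computes on rationals;
\item that $\lvert - \rvert$ is non-expanding;
\item that $x_{\delta} \close{\delta + \eta} \limit(x)$ holds.
\end{itemize}
The first two follow from its construction as a non-expanding lift. The third is the HoTT book's approximation lemma, which we would cite as used in the proof of \Cref{lemma:less-than-weakly-linear}.
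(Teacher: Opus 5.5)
Your proof is correct, but it takes a genuinely different route from the paper's. The paper does no new induction. It applies \Cref{lemma:less-than-perturb} to $\lvert u \rvert$ with $\varepsilon := 1$, then uses the Archimedean property (\Cref{theorem:archimedean}) to merely obtain $q : \QQ$ with $\lvert u \rvert < \rational(q) < \lvert u \rvert + 1$. It gets $q > 0$ from $\lvert u \rvert \ge 0$ and the fact that $\rational$ reflects $<$. The induction there is hidden inside \Cref{lemma:less-than-perturb}, which itself depends on weak linearity and the rest of Gilbert's perturbation chain.

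Your argument inducts on $u$ directly. It needs only \Cref{lemma:less-than-rational-close-less-than-rational-plus}, the approximation closeness $x_{\delta} \close{\delta + \eta} \limit(x)$, and basic facts about real absolute value: that it computes on rationals and is Lipschitz. Positivity is built in by your choices $\lvert r \rvert + 1$ and $q_{1} + 2$, rather than derived from $\lvert u \rvert \ge 0$. This makes your version more elementary. It could sit right after \Cref{lemma:less-than-rational-close-less-than-rational-plus}, before Gilbert's strict-order lemmas.

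It also does not really need non-expandingness. If $\lvert \cdot \rvert$ were only $L$-Lipschitz, you would simply take $q_{1} + 2L$. This is worth noting, since the paper never states that real absolute value is non-expanding; defining it as $\max(u, -u)$ and using \Cref{lemma:lipschitz-binary-composition} only gives constant $2$. The paper's version, in exchange, is a three-line corollary once the Gilbert machinery is in place.

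Your ``alternative without induction'' paragraph is inconclusive and can be dropped. A non-inductive argument does exist, namely the paper's.
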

\begin{proof}
  Applying \Cref{lemma:less-than-perturb} to $\lvert u \rvert$ with
  $\varepsilon := 1$, we obtain
  \[
    \lvert u \rvert < \lvert u \rvert + 1.
  \]
  By the Archimedean property (\Cref{theorem:archimedean}), there merely
  exists some $q : \QQ$ such that
  \[
    \lvert u \rvert < \rational(q) < \lvert u \rvert + 1.
  \]
  Since $\lvert u \rvert \ge 0$, it follows that $0 < \rational(q)$. Therefore
  there merely exists $q : \Qp$ such that
  \[
    \lvert u \rvert < \rational(q).
  \]
\end{proof}

Our remaining goal is to fulfill the second requirement by eliminating the
dependence on the chosen bound. We briefly compare two ways of making this
precise: Gilbert's principle of definition by surjection and the
characterization of maps of the form $\bracket{A} \to B$ where $B$ is a set due
to Kraus, which is what we use in the formalization.

Following Kraus, a map $f : A \to B$ is \textbf{weakly constant} if it comes
equipped with an element of type
\[
  \mathsf{IsWeaklyConstant}(f) := \prd{x, y : A} f(x) = f(y).
\]
In general, weak constancy is not enough to define a map out of a proposition
truncation, since the paths
\[
  f(x) = f(y)
\]
may themselves need to satisfy higher coherence conditions. When the codomain
$B$ is a set, however, all such higher conditions are automatic, so weak
constancy is already sufficient.

\begin{theorem}[{\textcite[Proposition 2.2]{kraus2015}}]
  \label{theorem:kraus}
  If $A$ is a type and $B$ is a set, there is an equivalence
  \[
    (\bracket{A} \to B) \quad \simeq \quad \sm{f : A \to B} \mathsf{IsWeaklyConstant}(f).
  \]
\end{theorem}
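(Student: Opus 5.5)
We prove \Cref{theorem:kraus} by exhibiting the forward map and an inverse, and then checking both round trips. The forward map sends $g : \bracket{A} \to B$ to the pair $(g \circ |{-}|, \kappa)$, where $|{-}| : A \to \bracket{A}$ is the truncation constructor. The witness $\kappa(x, y) := \mathsf{ap}_{g}(p)$ comes from the path $p : |x| = |y|$ supplied by the fact that $\bracket{A}$ is a proposition. This direction needs no hypothesis on $B$.

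For the backward direction we start from a weakly constant $f : A \to B$, with witness $c$, and must build a map $\bracket{A} \to B$. The universal property of truncation only gives maps into propositions, so we cannot eliminate into $B$ directly. Instead we eliminate into the auxiliary type
\[
  \mathsf{Im}(f) := \sm{b : B} \prd{x : A} f(x) = b .
\]
Two facts are needed about $\mathsf{Im}(f)$.
\begin{itemize}
\item It is inhabited by $(f(a), \lambda x.\, c(x, a))$ for each $a : A$.
\item It is a proposition. Given $(b, h)$ and $(b', h')$ together with some $a : A$, we have $b = b'$ via $h(a)^{-1} \cdot h'(a)$. The second components then agree because each $f(x) = b'$ is a proposition, $B$ being a set.
\end{itemize}
The second fact needs an element of $A$, whereas being a proposition only provides two elements of $\mathsf{Im}(f)$. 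We obtain one by currying: show $\mathsf{Im}(f) \to \mathsf{IsProp}(\mathsf{Im}(f))$, which suffices since $h$ can only be used once $A$ is inhabited, but for this direction we first need some $a$.

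The cleaner route is to eliminate $\bracket{A}$ into the proposition $\mathsf{IsContr}(\mathsf{Im}(f))$, or equivalently into $\mathsf{Im}(f)$ under an inhabitant. On $|a|$ we produce the centre above, and contraction follows from the argument just given, which now has $a$ available. Since $\mathsf{IsContr}$ of any type is a proposition, truncation elimination applies. Composing with the first projection gives $\bar f : \bracket{A} \to B$, and by construction $\bar f(|a|) \equiv f(a)$ judgmentally.

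It remains to check the two round trips.
\begin{itemize}
\item Starting from $(f, c)$, the composite $\bar f \circ |{-}|$ is $f$ definitionally. The weak-constancy witnesses agree because $\mathsf{IsWeaklyConstant}(f)$ is a proposition, again since $B$ is a set.
\item Starting from $g$, the maps $\overline{g \circ |{-}|}$ and $g$ are equal by function extensionality. Their values at any $t : \bracket{A}$ agree by truncation induction, since the goal $\overline{g\circ|{-}|}(t) = g(t)$ is a proposition, $B$ being a set, and on $|a|$ it holds by reflexivity.
\end{itemize}
So the two maps are mutually inverse, and this gives the equivalence.

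The main obstacle is the backward construction: we must identify a proposition to eliminate into and show that it is one. The sethood of $B$ is used exactly there, to make the $\prd{}$-component of $\mathsf{Im}(f)$ propositional. The rest is routine.
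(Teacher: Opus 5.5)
The paper gives no proof of this statement. It cites Kraus, motivates the result only by the remark that higher coherences are automatic when $B$ is a set, and in the formalization takes it from the Cubical library's \texttt{SetElim} module. Your argument is a correct proof and makes that remark precise. Eliminating $\bracket{A}$ into the proposition $\mathsf{IsContr}(\mathsf{Im}(f))$, with centre $(f(a), \lambda x.\, c(x,a))$ at $|a|$, gives $\bar f$ with $\bar f(|a|) \equiv f(a)$, and both round trips go through as you describe.

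The write-up needs two repairs.
\begin{itemize}
\item Delete the ``currying'' paragraph. The claim $\mathsf{Im}(f) \to \mathsf{IsProp}(\mathsf{Im}(f))$ is false in general: for empty $A$ the $\prod$-component is contractible, so $\mathsf{Im}(f) \simeq B$, which need not be a proposition. For the same reason the name is misleading. $\mathsf{Im}(f)$ is the type of values that $f$ is constantly equal to, not the image, and that is exactly why it needs an inhabitant.
\item Correct your closing remark. Sethood of $B$ is not used only for the $\prod$-component; you also use it in both round trips. The use in the first round trip, propositionality of $\mathsf{IsWeaklyConstant}(f)$, is precisely what breaks for non-sets, where $c$ carries information beyond $f$.
\end{itemize}

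If you prefer to avoid the inhabitant detour, eliminate instead into the genuine image $\sm{b : B}{\bracket{\sm{a : A}{f(a) = b}}}$, which is a proposition outright. For two elements $(b, s)$ and $(b', s')$, the goal $b = b'$ is a proposition because $B$ is a set. So both truncations may be opened and the witnesses joined by $c$.

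Your formulation buys something in exchange: it scales to higher truncation levels. Take a $1$-type $B$, with constancy strengthened by the coherence $c(a_1,a_2)\cdot c(a_2,a_3) = c(a_1,a_3)$. Adding a matching coherence component to $\mathsf{Im}(f)$ keeps it contractible under an inhabitant. The image, by contrast, is then no longer a proposition, already for $A$ the unit type and $B$ the circle.
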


The consequence of \Cref{theorem:kraus} that we use here is that every weakly
constant map $f : A \to B$ into a set $B$ extends uniquely to a map
$\overline{f} : \bracket{A} \to B$. This is summarized by the commutative
diagram
\[
  \begin{tikzcd}[column sep=huge, row sep=huge]
    \bracket{A} \arrow[r, dashed, "\overline{f}"] & B \\
    A \arrow[u, "\eta"] \arrow[ur, "f"']
  \end{tikzcd}
\]
where $\eta : A \to \bracket{A}$ is the canonical map into the propositional
truncation. For this approach, the relevant data consists of a map
$f : A \to B$, the required invariance is weak constancy, and the output is the
extension $\overline{f} : \bracket{A} \to B$.

Gilbert packages the same basic idea in a different way, using a principle of
\textbf{definition by surjection} \cite[Definition 4.9]{gilbert2017}. Suppose
$f : A \to C$ and $g : A \to B$ are maps such that $g$ is surjective and $f$
respects the equivalence relation on $A$ induced by $g$, defined by
\[
  a_{1} \sim_{g} a_{2} := g(a_{1}) = g(a_{2}).
\]
Under these hypotheses, the universal property of the set quotient by $\sim_{g}$
yields a map
\[
  \overline{f} : A / {\sim_{g}} \to C
\]
extending $f$. The same universal property also gives a map
\[
  \overline{g} : A / {\sim_{g}} \to B
\]
extending $g$, and the surjectivity of $g$ implies that $\overline{g}$ is an
equivalence. We may therefore define
\[
  h := \overline{f} \circ \overline{g}^{-1} : B \to C,
\]
which satisfies $h \circ g = f$. This setup is represented by the commutative
diagram
\[
  \begin{tikzcd}[column sep=huge, row sep=huge]
    B & A \arrow[l, "g"'] \arrow[d, "\pi"] \arrow[r, "f"] & C \\
    & A / {\sim_{g}} \arrow[ul, dashed, "\overline{g}"] \arrow[ur,
    dashed, "\overline{f}"'] &
  \end{tikzcd}
\]
where $\pi : A \to A / {\sim_{g}}$ is the canonical projection onto the
quotient. Here the relevant data are the maps $f : A \to C$ and $g : A \to B$,
the required invariance is that $f$ respects the equivalence relation induced by
$g$, and the output is the induced map $h : B \to C$.

We now show how both abstract principles can be instantiated in the construction
of multiplication, thereby eliminating the dependence on the chosen bound. To
separate the real inputs from the chosen bound on the second argument, for each
$u, v : \RR$, we define a map
\begin{align*}
  & h_{u, v} : \left(\sm{L : \Qp} \lvert v \rvert \le \rational(L)\right) \to \RR \\
  & h_{u, v}(L, \varphi, \psi) := b_{L, v}(u).
\end{align*}

Since $\RR$ is a set by \Cref{corollary:reals-set}, Kraus's theorem applies
directly to this situation. Once $h_{u, v}$ is shown to be weakly constant, it
extends uniquely to a map
\[
  \overline{h}_{u, v} : \bracket{\sm{L : \Qp} \lvert v \rvert \le \rational(L)} \to \RR
\]
as summarized by the commutative diagram:
\[
  \begin{tikzcd}[column sep=huge, row sep=huge]
    \bracket{\sm{L : \Qp} \lvert v \rvert \le \rational(L)}
    \arrow[r, dashed, "\overline{h}_{u,v}"] & \RR \\
    \sm{L : \Qp} \lvert v \rvert \le \rational(L)
    \arrow[u, "\eta"]
    \arrow[ur, "h_{u,v}"']
  \end{tikzcd}
\]
By \Cref{lemma:bounded-by-rational}, the propositional truncation
\[
  \bracket{\sm{L : \Qp} \lvert v \rvert \le \rational(L)}
\]
is always inhabited. Evaluating $\overline{h}_{u, v}$ at the truncated witness
therefore yields multiplication.

We can achieve the same result using Gilbert's approach. Fix $u : \RR$ and
define
\begin{align*}
  & A_u := \sm{v : \RR} \sm{L : \Qp} \lvert v \rvert \le \rational(L), \\
  & f_u  : A_u \to \RR, \\
  & f_u(v, L, \varphi, \psi) := b_{L, v}(u), \\
  & g_u  : A_u \to \RR, \\
  & g_u(v, L, \varphi, \psi) := v.
\end{align*}
By \Cref{lemma:bounded-by-rational}, the map $g_{u}$ is surjective. Since
$g_{u}$ forgets the chosen bound, proving that $f_{u}$ respects $\sim_{g_{u}}$
amounts to showing that $f_{u}$ is invariant with respect to the choice of
bound. Once this is known, we obtain a map
\[
  \mu_{u} : \RR \to \RR
\]
as summarized by the commutative diagram
\[
  \begin{tikzcd}[column sep=huge, row sep=huge]
    \RR &
    A_u \arrow[l, "g_u"'] \arrow[d, "\pi"] \arrow[r, "f_u"] & \RR \\
    & A_u / {\sim_{g_u}}
    \arrow[ul, dashed, "\overline{g}_u"]
    \arrow[ur, dashed, "\overline{f}_u"'] &
  \end{tikzcd}
\]
Multiplication is then given by
\[
  u \cdot v := \mu_{u}(v).
\]

Both principles therefore depend exactly on the two requirements stated above:
first, that every real is merely bounded by a positive rational, and second,
that bounded multiplication is invariant with respect to the chosen bound. Since
Kraus's formulation is more direct and is already available in the Cubical Agda
library, it is the one we adopt here. It therefore remains to establish the
required invariance.

\begin{lemma}
  \label{lemma:h-weakly-constant}
  For every $u, v : \RR$, the map
  \[
    h_{u, v} : \left( \sm{L : \Qp} \lvert v \rvert \le \rational(L) \right) \to \RR
  \]
  is weakly constant. Explicitly, if
  $(L, \varphi, \psi), (M, \omega, \chi) : \sm{L : \Qp} \lvert v \rvert \le
  \rational(L)$ then
  \[
    h_{u, v}(L, \varphi, \psi) = b_{L, v}(u) =
    b_{M, v}(u) = h_{u, v}(M, \omega, \chi).
  \]
\end{lemma}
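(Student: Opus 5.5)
Fix $u, v : \RR$ and two bounds $(L, \varphi, \psi)$ and $(M, \omega, \chi)$ with $\lvert v \rvert \le \rational(L)$ and $\lvert v \rvert \le \rational(M)$. The plan is to show $b_{L, v}(u) = b_{M, v}(u)$ by applying \Cref{lemma:continuous-extension-unique} to the two maps $b_{L, v}, b_{M, v} : \RR \to \RR$, viewed as functions of the argument $u$. This works for every $u$ simultaneously, so it gives the required equality $h_{u,v}(L,\varphi,\psi) = h_{u,v}(M,\omega,\chi)$. Notably, we never compare $L$ and $M$, and we need no case split on which bound is smaller.

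The first step is continuity. By the Lipschitz extension principle (\Cref{lemma:lift-lipschitz}), $b_{L, v}$ is Lipschitz with constant $L$ and $b_{M, v}$ is Lipschitz with constant $M$. As observed after the definition of continuity, every Lipschitz map is continuous: given $\varepsilon$, take $\delta := \varepsilon / L$, respectively $\delta := \varepsilon / M$. Hence both maps are continuous.

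The second step is agreement on rationals. By the computation rule recorded right after the construction of $b_{L,v}$, for every $q : \QQ$ we have
\[
  b_{L, v}(\rational(q)) = m_{q}(v) = b_{M, v}(\rational(q)).
\]
So $b_{L, v} \circ \rational = b_{M, v} \circ \rational$ by function extensionality. \Cref{lemma:continuous-extension-unique} then yields $b_{L, v} = b_{M, v}$, and evaluating at $u$ finishes the proof.

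I expect no serious obstacle. The only point needing care is bookkeeping in the formalization. The extension $b_{L,v}$ must depend on the bound only through the proof that $q \mapsto m_q(v)$ is $L$-Lipschitz (\Cref{lemma:m-lipschitz1}). Its equation on rationals must hold propositionally, or definitionally, independently of that proof. Since $\RR$ is a set (\Cref{corollary:reals-set}), the path we produce is unique, so weak constancy requires no further coherence.
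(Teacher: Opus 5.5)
Your proposal is correct and matches the paper's proof: $b_{L,v}$ and $b_{M,v}$ are continuous because they are Lipschitz extensions, and they agree on rationals since both send $\rational(q)$ to $m_q(v)$. Uniqueness of continuous extensions then gives $b_{L,v} = b_{M,v}$, and evaluating at $u$ finishes the proof.
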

\begin{proof}
  Both $b_{L, v}$ and $b_{M, v}$ are continuous maps $\RR \to \RR$, since each
  is obtained by Lipschitz extension. Moreover, they agree on rational inputs:
  for every $q : \QQ$,
  \[
    b_{L,v}(\rational(q)) = m_q(v) = b_{M,v}(\rational(q)).
  \]
  By \Cref{lemma:continuous-extension-unique}, it follows that
  \[
    b_{L,v} = b_{M,v}
  \]
  as functions $\RR \to \RR$. Evaluating both sides at $u$ yields
  \[
    b_{L,v}(u)=b_{M,v}(u),
  \]
  so $h_{u,v}$ is weakly constant.
\end{proof}

Combining \Cref{lemma:bounded-by-rational} with the previous lemma, and using
that $\RR$ is a set, applying Kraus's theorem to the map $h_{u, v}$ yields a
global multiplication operation
\[
  \cdot : \RR \to \RR \to \RR.
\]
For any $L : \Qp$ and $u, v : \RR$ such that $\lvert v \rvert \le \rational(L)$,
multiplication satisfies
\[
  u \cdot v = b_{L, v}(u).
\]
In particular, multiplication computes on rational inputs as expected: for all
$q, r : \QQ$,
\[
  \rational(q) \cdot \rational(r) = \rational(q \cdot r).
\]
The agreement with the bounded multiplication map also transfers the Lipschitz
estimate established above for the map $u \mapsto b_{L, v}(u)$.

\begin{lemma}[{\textcite[Lemma 4.18]{gilbert2017}}]
  \label{lemma:multiply-lipschitz1}
  Let $L : \Qp$ and let $v : \RR$ satisfy
  \[
    \lvert v \rvert \le \rational(L).
  \]
  Then the map
  \[
    u \mapsto u \cdot v
  \]
  is Lipschitz with constant $L$.
\end{lemma}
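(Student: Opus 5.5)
The plan is to reduce the statement to the Lipschitz property of the bounded multiplication map $b_{L,v}$, which is already available. The key fact recorded just before the lemma is that for any $L : \Qp$ and $u, v : \RR$ with $\lvert v \rvert \le \rational(L)$, we have $u \cdot v = b_{L, v}(u)$. I would first make sure this equation holds for the given $L$ and $v$ and for \emph{every} $u$, not only for a particular one.

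The multiplication was defined by evaluating $\overline{h}_{u,v}$ at the truncated witness supplied by \Cref{lemma:bounded-by-rational}. The truncation $\bracket{\sm{L : \Qp} \lvert v \rvert \le \rational(L)}$ is a proposition, so that witness is equal to $\eta(L, \varphi, \psi)$, where $\varphi, \psi$ are the positivity witness and the bound at hand. The computation rule from \Cref{theorem:kraus}, namely $\overline{h}_{u,v} \circ \eta = h_{u,v}$, then gives
\[
u \cdot v = \overline{h}_{u,v}(\eta(L,\varphi,\psi)) = h_{u,v}(L,\varphi,\psi) = b_{L,v}(u).
\]
This holds pointwise for every $u : \RR$. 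By function extensionality, the map $u \mapsto u \cdot v$ is therefore equal to $b_{L,v}$.

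Next I would recall that $b_{L,v}$ was produced by \Cref{lemma:lift-lipschitz}, applied to the $L$-Lipschitz map $q \mapsto m_q(v)$ from \Cref{lemma:m-lipschitz1}. That lemma yields an extension that is Lipschitz with the same constant $L$. So for all $u, w : \RR$ and $\varepsilon : \Qp$, if $u \close{\varepsilon} w$ then $b_{L,v}(u) \close{L\varepsilon} b_{L,v}(w)$. Rewriting both sides along $u \cdot v = b_{L,v}(u)$ and $w \cdot v = b_{L,v}(w)$ (transport in the closeness family) gives $u \cdot v \close{L\varepsilon} w \cdot v$, which is exactly the claim.

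The only step needing care is the first one: ensuring that the equation $u \cdot v = b_{L,v}(u)$ is really available for an arbitrary bound $L$, rather than only for the bound hidden inside the truncated witness. This is where the propositionality of the truncation and the computation rule of Kraus's extension enter. Weak constancy (\Cref{lemma:h-weakly-constant}) is what makes the value independent of the choice of bound. Everything after that is transport of an existing Lipschitz estimate.
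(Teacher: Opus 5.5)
Your proposal is correct and follows the paper's proof: you identify $u \mapsto u \cdot v$ with $b_{L,v}$ and inherit the $L$-Lipschitz bound from the Lipschitz extension of $q \mapsto m_q(v)$. The only difference is that you spell out the step the paper calls ``by construction'': propositionality of the truncation together with the computation rule $\overline{h}_{u,v} \circ \eta = h_{u,v}$ gives $u \cdot v = b_{L,v}(u)$ for any valid bound $L$.
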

\begin{proof}
  By construction, the map $u \mapsto u \cdot v$ agrees with $b_{L,v}$. Since
  $b_{L,v}$ was obtained by Lipschitz extension of the $L$-Lipschitz map
  $q \mapsto m_q(v)$, it is itself $L$-Lipschitz.  Therefore
  $u \mapsto u \cdot v$ is $L$-Lipschitz as well.
\end{proof}

We next show that multiplication is continuous in each argument
separately. Continuity in the first argument is immediate from the previous
lemma.

\begin{lemma}[{\textcite[Lemma 4.19]{gilbert2017}}]
  \label{lemma:multiply-continuous1}
  For all $v : \RR$, the map
  \[
    u \mapsto u \cdot v : \RR \to \RR
  \]
  is continuous.
\end{lemma}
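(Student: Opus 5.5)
Fix $v : \RR$. The plan is to reduce continuity of $u \mapsto u \cdot v$ to the Lipschitz estimate of \Cref{lemma:multiply-lipschitz1}, after obtaining a suitable rational bound on $\lvert v \rvert$. Since continuity at a point is itself a mere proposition (its conclusion is a merely-existential statement), we may eliminate out of propositional truncations freely while proving it.

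First, by \Cref{lemma:bounded-by-rational} there merely exists $L : \Qp$ with $\lvert v \rvert < \rational(L)$. Because the goal is a proposition, we may assume such an $L$ is given. We then weaken the strict bound to $\lvert v \rvert \le \rational(L)$, using that $<$ implies $\le$. Concretely, from the definition of $<$ we get $\lvert v \rvert \le \rational(q)$ with $q < r$ and $\rational(r) \le \rational(L)$, and transitivity of $\le$ finishes.

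Next, \Cref{lemma:multiply-lipschitz1} gives that $u \mapsto u \cdot v$ is Lipschitz with constant $L$. As remarked after the definition of continuity, every Lipschitz map is continuous. Given $u : \RR$ and $\varepsilon : \Qp$, take $\delta := \varepsilon / L$. Then $u \close{\delta} w$ implies $u \cdot v \close{L \delta} w \cdot v$, and $L\delta = \varepsilon$. We then transport the closeness index along this rational equality and wrap $\delta$ in the truncation.

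The only step needing any care is the first: we must pass from the merely existing bound to an actual $L$. This is justified precisely because continuity is a mere proposition. It is a $\Pi$-type over $u$ and $\varepsilon$ whose body is a truncated existential, hence a proposition. Everything else is a direct application of earlier results.
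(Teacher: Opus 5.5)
Your proof is correct and follows the paper's argument: obtain a merely-existing rational bound on $\lvert v \rvert$ from \Cref{lemma:bounded-by-rational}, apply \Cref{lemma:multiply-lipschitz1}, and conclude because Lipschitz maps are continuous. You also make explicit two steps the paper leaves tacit: weakening the strict bound $\lvert v \rvert < \rational(L)$ to the non-strict bound that \Cref{lemma:multiply-lipschitz1} requires, and justifying the truncation elimination by the fact that continuity is a proposition.
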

\begin{proof}
  By \Cref{lemma:bounded-by-rational}, there merely exists a positive rational
  $L$ such that
  \[
    \lvert v \rvert \le \rational(L).
  \]
  For any such $L$, \Cref{lemma:multiply-lipschitz1} shows that the map
  $u \mapsto u \cdot v$ is $L$-Lipschitz, and therefore continuous.
\end{proof}

Continuity in the second argument is less direct. The construction of
multiplication proceeded by fixing a bound on the second factor and extending in
the first, so a separate argument is needed for continuity in the second
variable.

\begin{lemma}
  \label{lemma:multiply-continuous2}
  For every $u : \RR$, the map
  \[
    v \mapsto u \cdot v : \RR \to \RR
  \]
  is continuous.
\end{lemma}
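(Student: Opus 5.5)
Fix $u : \RR$. The plan is to reduce continuity of $v \mapsto u \cdot v$ to a Lipschitz estimate in $v$, with the constant determined by a bound on $\lvert u \rvert$.

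The first step is to establish commutativity of multiplication. We cannot appeal to \Cref{corollary:continuous-extension-law} directly, because that requires continuity in the second argument, which is what we are proving. Instead we argue in two steps using only \Cref{lemma:multiply-continuous1} and \Cref{lemma:continuous-extension-unique}.
\begin{itemize}
\item Fix $r : \QQ$. The maps $u \mapsto u \cdot \rational(r)$ and $u \mapsto m_r(u)$ are both continuous: the first by \Cref{lemma:multiply-continuous1}, the second because $m_r$ is Lipschitz. They agree on rationals, since $\rational(q)\cdot\rational(r) = \rational(qr) = \rational(rq) = m_r(\rational(q))$. Hence $u \cdot \rational(r) = m_r(u)$ for all $u$.
\item Now fix $u$. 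For each $q$, the first item applied to $u$ gives $u \cdot \rational(q) = m_q(u)$. Also, since $\lvert u \rvert$ is merely bounded, we have $b_{L,u}(\rational(q)) = m_q(u)$, so $\rational(q) \cdot u = b_{L,u}(\rational(q)) = m_q(u)$. Thus $u \cdot \rational(q) = \rational(q) \cdot u$.
\end{itemize}
To pass from rational $q$ to arbitrary $v$, I would compare $v \mapsto v \cdot u$ with $v \mapsto u \cdot v$. The first is continuous by \Cref{lemma:multiply-continuous1}. The second is the map whose continuity is in question, so this route is circular.

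To avoid the circularity, I would instead prove directly that $v \mapsto u \cdot v$ is Lipschitz. By \Cref{lemma:bounded-by-rational} (the goal is a proposition, so we may take a witness), choose $L$ with $\lvert u \rvert \le \rational(L)$, and then show $u \cdot v \close{L\varepsilon} u \cdot w$ whenever $v \close{\varepsilon} w$. The key identity needed is
\[
  \lvert u \cdot v - u \cdot w \rvert = \lvert u \rvert \cdot \lvert v - w \rvert ,
\]
or at least the corresponding inequality. Here $\lvert v - w\rvert < \rational(\varepsilon)$ by \Cref{theorem:close-iff-distance-less-than}. Multiplying by $\lvert u \rvert \le \rational(L)$ should give a bound of $\rational(L\varepsilon)$, but this needs care: it requires strictness and monotonicity of multiplication by nonnegatives, which in turn need their own lifted identities. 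For example, I would show $\max(s\cdot u, s \cdot w) = s \cdot \max(u,w)$ for a fixed real multiplier $s$ acting on the left. These identities in three variables only ever need continuity in the first argument of each product, provided they are arranged so that the varying variable always sits in the first factor.

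A cleaner route avoids the strict multiplicative bound altogether. Use openness of closeness to get $v \close{\varepsilon - \theta} w$, so that the distance is at most $\varepsilon - \theta$. Then show $\lvert u\cdot v - u\cdot w\rvert \le \rational(L(\varepsilon-\theta)) < \rational(L\varepsilon)$.

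The main obstacle is proving the distance identity (or inequality) above without presupposing continuity in the second argument. I would first try to obtain it as $\lvert (v - w)\cdot u\rvert = \lvert v-w\rvert \cdot \lvert u \rvert$ combined with right distributivity $(v - w)\cdot u = v\cdot u - w \cdot u$. Both identities are extended with $u$ held fixed in the second slot, so only \Cref{lemma:multiply-continuous1} and \Cref{lemma:multiply-lipschitz1} are needed. This gives Lipschitz continuity of $v \mapsto v\cdot u$ with the bound multiplied on the right, which is not yet the map in the statement. Transporting the result to $u \cdot v$ then requires commutativity. I would establish commutativity in the variable $u$ by writing $u = \limit(x)$-style approximations: by the Lipschitz estimate in the first argument, $u\cdot v$ is the limit of $\rational(q)\cdot v$ as $q \to u$, and each term equals $v \cdot \rational(q) = m_q(v)$. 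Closing this loop is where I expect most of the work to be.
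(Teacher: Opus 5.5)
\textbf{There is a genuine gap: the proposal never establishes $\lvert u\cdot v - u\cdot w\rvert = \lvert u\rvert\cdot\lvert v-w\rvert$, and its detour through commutativity is circular.}

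Proving $\lvert v\cdot u - w\cdot u\rvert = \lvert v-w\rvert\cdot\lvert u\rvert$ only re-derives \Cref{lemma:multiply-lipschitz1}, so everything rests on $u\cdot v = v\cdot u$ for arbitrary reals. Your approximation sketch gives $u\cdot v = \lim_{q\to u}\rational(q)\cdot v = \lim_{q\to u} v\cdot\rational(q)$. Identifying the last limit with $v\cdot u$ is exactly continuity of $w\mapsto v\cdot w$ at $u$, which is the lemma you are proving (with $v$ in place of $u$). This is not a matter of presentation. Commutativity is out of reach of the coordinatewise extension argument using first-argument continuity alone: whichever variable you extend in last occurs in the second slot of a product whose first factor is an arbitrary real. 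So the key identity you correctly name is never actually established.

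\textbf{The missing idea is that the order of extension is what matters, not where the variable sits.} Your heuristic ``keep the varying variable in the first factor'' is stronger than necessary. A variable may sit in the second factor as long as you extend in it while the first factor is still rational. This works because $\rational(q)\cdot v = b_{L,v}(\rational(q)) = m_q(v)$, which you derived yourself in your second bullet, and $m_q$ is Lipschitz.

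In $\lvert u\cdot v - u\cdot w\rvert = \lvert u\rvert\cdot\lvert v-w\rvert$ the variable $u$ occurs only as a first factor. So proceed in two stages:
\begin{enumerate}
\item Fix $u = \rational(q)$. Both sides become composites of $m_q$, $m_{\lvert q\rvert}$, subtraction and $\lvert\cdot\rvert$, hence continuous in $v$ and in $w$. They agree on rationals, so by \Cref{lemma:continuous-extension-unique} they agree for all real $v,w$.
\item Fix real $v,w$ and extend in $u$ using only \Cref{lemma:multiply-continuous1}.
\end{enumerate}
This is what the paper means by saying the lifting uses only continuity of multiplication in the first argument. With the identity in hand, $\lvert u\rvert\le\rational(\eta)$ and $\lvert v-w\rvert<\rational(\varepsilon/(2\eta))$ finish as you planned; your openness variant works equally well.

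The same ordering repairs your $\max(s\cdot u, s\cdot w) = s\cdot\max(u,w)$. For a fixed real $s$, extending in $u,w$ would otherwise need the very continuity in question. Instead, extend in $u,w$ with $s$ rational (writing $\lvert s\rvert$ to encode nonnegativity), then extend in $s$.

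\textbf{Alternatively, your approximation idea can be salvaged if you aim it at continuity rather than commutativity.} Choose $q$ with $\rational(q)\close{\theta} u$, so that $\max(\lvert q\rvert,1)\le\eta+1$ when $\theta\le 1$. Take a rational bound $M$ on $\lvert v\rvert$ and on $\lvert w\rvert$ for all $w$ near $v$. Then \Cref{lemma:multiply-lipschitz1} and the Lipschitz bound on $m_q$ give $u\cdot v \close{M\theta} m_q(v) \close{(\eta+1)\varepsilon} m_q(w) \close{M\theta} u\cdot w$. Taking $\theta$ small relative to $M$ then yields continuity directly, without ever needing commutativity.
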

\begin{proof}
  Our argument rests on the identity
  \[
    \lvert a \cdot u - a \cdot v \rvert = \lvert a \rvert \cdot \lvert u - v \rvert,
  \]
  which holds for all $a, x, y : \RR$ and is obtained by lifting the
  corresponding rational formula using the uniqueness of continuous
  extensions. Note that, in verifying the continuity hypotheses needed for that
  lifting step, we only use the continuity of multiplication in the first
  argument.

  Fix $v : \RR$ and $\varepsilon : \Qp$. By \Cref{lemma:bounded-by-rational},
  there merely exists a positive rational $\eta$ such that
  $ \lvert u \rvert \le \rational(\eta)$. Choose
  $\delta := \frac{\varepsilon}{2 \eta}$. If $v \close{\delta} w$ then
  \Cref{theorem:close-iff-distance-less-than} gives
  \[
    \lvert v - w \rvert < \rational(\delta).
  \]
  Hence
  \[
    \lvert u \cdot v - u \cdot w \rvert
    = \lvert u \rvert \cdot \lvert v - w \rvert
    \le \rational(\eta)\cdot \rational(\delta)
    = \rational(\varepsilon / 2)
    < \rational(\varepsilon).
  \]
  Applying \Cref{theorem:close-iff-distance-less-than} again, we conclude
  $u \cdot v \close{\varepsilon} u \cdot w$. Therefore $v \mapsto u \cdot v$ is
  continuous.
\end{proof}

Because multiplication agrees with rational multiplication on rational inputs
and is continuous in each variable separately, commutativity, associativity, the
unit laws, and distributivity over addition lift from $\QQ$ to $\RR$. Hence
$\RR$ carries the structure of a commutative ring.

Reciprocal is constructed by the same local-to-global strategy, so we only
sketch the main ideas. The difference is that the relevant local domains are no
longer determined by upper bounds on magnitude, but by positive lower bounds
away from zero. Fix $\delta : \Qp$ and consider the rational map
\[
  q \mapsto \frac{1}{\max(q, \delta)}.
\]
Because the denominator is bounded below by $\delta$, this map is Lipschitz with
constant $\frac{1}{\delta^{2}}$, and therefore extends to a map
\[
  r_{\delta} : \RR \to \RR.
\]
Intuitively, $r_{\delta}$ is the reciprocal function clamped to the interval
$[\delta, \infty)$.

As with multiplication, the key step is to verify that these bounded maps agree
whenever two lower bounds are both valid. Specifically, if $\delta_{1}$ and
$\delta_{2}$ are two positive rational lower bounds that are both valid for the
same real $u$, then the corresponding local maps agree at $u$. Since every
positive real is merely bounded below by some positive rational, the same
truncation argument used for multiplication yields a globally defined reciprocal
on the positive reals.

Finally, the apartness relation introduced at the end of the previous case study
allows us to extend this positive reciprocal to all reals apart from zero. A
witness of $u \apart 0$ provides a case split into $0 < u$ or $u < 0$. In the
positive case we use the construction above; in the negative case we reduce to
the positive reciprocal of $-u$ and then negate. The resulting operation can be
shown to satisfy the property that every real is apart from zero if and only if
it is invertible.

We can now assemble the algebraic and order-theoretic structure developed in the
three case studies into the constructive definition of an Archimedean ordered
field used in the HoTT book.

\begin{definition}[{\textcite[Definition 11.2.7]{univalentfoundationsprogram2013}}]
  \label{definition:ordered-field}
  An \textbf{ordered field} consists of a set $F$ together with constants $0$,
  $1$, operations $-$, $+$, $\cdot$, $\min$, $\max$, and mere relations $\le$,
  $<$, $\apart$ such that:
  \begin{itemize}
    \item $(F, 0, 1, +, -, \cdot)$ is a commutative ring.
    \item $(F, \le, \min, \max)$ is a lattice.
    \item The relation $<$ is a strict order, meaning it is irreflexive, transitive, and weakly linear.
    \item The relation $\apart$ is an apartness relation, meaning it is irreflexive, symmetric, and cotransitive.
    \item Every element $x : F$ is invertible if and only if $x \apart 0$.
    \item For every $x, y, z : F$:
      \[
        \begin{array}{r c l @{\qquad} r c l}
          x \le y      & \iff     & \neg (y < x), &
                                                    x < y \le z  & \implies & x < z, \\
          x \apart y   & \iff     & (x < y) + (y < x), &
                                                         x \le y < z  & \implies & x < z, \\
          x \le y      & \iff     & x + z \le y + z, &
                                                       (x \le y) \times (0 \le z) & \implies & x z \le y z, \\
          x < y        & \iff     & x + z < y + z, &
                                                     0 < z        & \implies & (x < y \iff x z < y z), \\
          0 < x + y    & \implies & 0 < x \lor 0 < y, &
                                                        0            & <        & 1.
        \end{array}
      \]
  \end{itemize}
  Every ordered field comes equipped with a canonical embedding $r : \QQ \to
  F$. An ordered field is \textbf{Archimedean} if for every $x, y : F$ with
  $x < y$ there exists a $q : \QQ$ such that $x < r(q) < y$.
\end{definition}

\begin{theorem}
  \label{theorem:reals-archimedean-ordered-field}
  The HoTT book reals form an Archimedean ordered field.
\end{theorem}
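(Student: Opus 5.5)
The plan is to verify the clauses of \Cref{definition:ordered-field} one at a time, using the three case studies as the source for each. Several clauses are already in hand. $\RR$ is a set by \Cref{corollary:reals-set}. The ring axioms for $(\RR, 0, 1, +, -, \cdot)$ and the lattice laws for $\min, \max$ come from \Cref{corollary:continuous-extension-law} and its three-variable analogue. The strict-order axioms (irreflexivity, transitivity, weak linearity) come from the discussion after the definition of $<$ together with \Cref{lemma:less-than-weakly-linear}. The apartness axioms, and the clause $x \apart y \iff (x < y) + (y < x)$, hold by definition of $\apart$. The mixed transitivity laws were recorded right after the definition of $<$. Compatibility of addition with $\le$ and $<$ is \Cref{lemma:addition-monotone-reflective} and \Cref{lemma:addition-strict-monotone-reflective}. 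Finally, $0 < 1$ holds because $\rational$ preserves $<$, and the Archimedean property is exactly \Cref{theorem:archimedean} with $r := \rational$.

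The first remaining clause is $x \le y \iff \neg(y < x)$. For the forward direction, suppose $x \le y$ and $y < x$; then $y < y$ by mixed transitivity, contradicting irreflexivity. The converse is the real work. I would prove $\neg(y < x) \to \max(x,y) = y$ using the path constructor: it suffices to show $\max(x,y) \close{\varepsilon} y$ for every $\varepsilon$. By \Cref{theorem:close-iff-distance-less-than}, this amounts to showing $\lvert \max(x,y) - y\rvert < \rational(\varepsilon)$. Lifting \Cref{lemma:max-midpoint-half-distance} to the reals gives
\[
\max(x,y) - y = \frac{\lvert x-y\rvert + (x - y)}{2}.
\]
Now apply weak linearity to $y < y + \rational(\varepsilon/2)$ (\Cref{lemma:less-than-perturb}) with the third point $x$. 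This yields either $y < x$, which is absurd, or $x < y + \rational(\varepsilon/2)$. In the second case, $x - y < \rational(\varepsilon/2)$ bounds $\max(x,y) - y$ strictly by $\rational(\varepsilon)$, using the lifted identity $\max(x,y) - y = \max(x-y,0)$. I expect this step to be the main obstacle, because it is the one place where a negative hypothesis must be converted into an equation. The closeness path constructor is what makes that conversion possible.

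The clause $0 < x + y \implies 0 < x \lor 0 < y$ follows from weak linearity. Applying it to $0 < x + y$ with the third point $x$ gives either $0 < x$ or $x < x + y$, and in the latter case $0 < y$ by \Cref{lemma:addition-strict-monotone-reflective}.

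The multiplicative order clauses reduce to the statement that positivity and nonnegativity are closed under products. First, I would lift $\max(q,0)\cdot\max(r,0) = \max(\max(q,0)\cdot\max(r,0),0)$ from the rationals via \Cref{corollary:continuous-extension-law}, using \Cref{lemma:multiply-continuous1} and \Cref{lemma:multiply-continuous2}. This gives $0 \le x,\, 0\le z \Rightarrow 0 \le xz$. For strict positivity, use \Cref{theorem:less-than-iff-perturb-less-equal}: from $\rational(\varepsilon) \le x$ and $\rational(\eta) \le z$, monotonicity gives $\rational(\varepsilon\eta) \le xz$. Distributivity then converts these facts into $x \le y \Rightarrow xz \le yz$ and $x<y \iff xz<yz$ for $0<z$. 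For the reflection direction, multiply by $z^{-1}$, which is positive because $z \cdot z^{-1} = 1 > 0$ and $z > 0$.

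The last clause, invertible $\iff x \apart 0$, is the property already stated for the reciprocal construction. I would cite it, noting that for the converse, if $xy = 1$ then $0 < xy$ or $xy < 0$ fails in a way that forces one sign. Concretely, I would write $xy = 1$ as a difference of products of positive and negative parts and apply the $0 < a + b$ clause above to obtain $x \apart 0$.
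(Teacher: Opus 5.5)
Your plan is the same as the paper's. The paper's proof runs through the clauses of \Cref{definition:ordered-field}, points to the three case studies, and defers the remaining compatibility laws to the Agda code. Your explicit arguments go beyond what the paper writes, and they are sound: $x \le y \iff \neg(y<x)$ via the path constructor, \Cref{theorem:close-iff-distance-less-than} and weak linearity; $0<x+y \implies 0<x \lor 0<y$; and closure of nonnegativity and positivity under products.

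One step fails as written: the reflection half of $0<z \implies (x<y \iff xz<yz)$. You multiply by $z^{-1}$ and justify $0<z^{-1}$ from $z\cdot z^{-1}=1>0$ and $0<z$. That inference is exactly the reflection law you are proving, instantiated at $x:=0$, $y:=z^{-1}$. The obvious sign argument does not repair it constructively. Assuming $z^{-1}\le 0$ gives $1 = z\,z^{-1} \le 0$, so you only obtain $\neg(z^{-1}\le 0)$, i.e.\ $\neg\neg(0<z^{-1})$. Removing the double negation is not available constructively; it is a form of Markov's principle.

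What is missing is a sign-extraction lemma with positive content: if $0\le b\le\rational(M)$ with $M:\Qp$ and $0<ab$, then $0<a$.
\begin{itemize}
\item Pick $\varepsilon$ with $\rational(\varepsilon)\le ab$ by \Cref{theorem:less-than-iff-perturb-less-equal}.
\item Apply weak linearity to $0<\rational(\varepsilon/(2M))$ with third point $a$.
\item In the branch $a<\rational(\varepsilon/(2M))$, your nonstrict monotonicity gives $ab\le\rational(\varepsilon/(2M))\,b\le\rational(\varepsilon/2)$, contradicting $\rational(\varepsilon)\le ab$.
\end{itemize}
With $a:=y-x$ and $b:=z$ (bounded via \Cref{lemma:bounded-by-rational}), this proves reflection directly, with no reciprocal at all.

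The same lemma supplies the step your invertible-implies-apart sketch leaves implicit. From $0<x^{+}y^{+}$, where $x^{+}:=\max(x,0)$, it yields $0<x^{+}$. Then $0<x$ follows by weak linearity against a rational strictly between $0$ and $x^{+}$: the other branch would force $x^{+}$ below that rational. Your remark that $0<xy$ or $xy<0$ ``fails in a way that forces one sign'' should be replaced by this argument.

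Alternatively, you can read positivity of $z^{-1}$ off the reciprocal's construction. The clamped map $q\mapsto 1/\max(q,\delta)$ is antitone, so after lifting that property it is bounded below by $\rational(1/\max(L,\delta))$ at any input $z\le\rational(L)$.
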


The preceding case studies establish the main requirements of this theorem. The
operations of negation, addition, multiplication, together with their algebraic
laws, give the set $\RR$ the structure of a commutative ring, while $\min$ and
$\max$ induce the non-strict order relation $\le$ and satisfy the lattice
laws. The relation $<$ was shown to be a strict order satisfying the Archimedean
property, and both $\le$ and $<$ were shown to be preserved and reflected by
addition. Apartness was defined as the symmetric closure of $<$, and reciprocal
was constructed on exactly those reals apart from zero. The remaining
compatibility laws not discussed explicitly are verified in the Agda
formalization and are established using the same main proof patterns illustrated
in this chapter.

\chapter{Real numbers in Cubical Agda}
\label{chapter:reals-agda}

This chapter describes the Cubical Agda implementation of the mathematics
from~\Cref{chapter:reals-math}.
The implementation is the key technical contribution of this thesis: it
formalizes all definitions, theorems, and lemmas from
the previous chapter, as well as their proofs, as machine-checked code (over
13,000 lines).
The code is open source
and available at \url{https://github.com/utahplt/hott-reals}.

The chapter is organized as follows. \Cref{section:computation} gives a brief
example showing that arithmetic on the reals computes definitionally when
applied to rational inputs. \Cref{section:organization} surveys the structure of
the codebase, following the organization of \Cref{chapter:reals-math}.
\Cref{section:claude-code} describes our use of Claude Code during the
development. \Cref{section:lessons-learned} reflects on three insights that
emerged during the formalization. We do not walk through Agda proofs in detail;
the repository is the source of truth for the code itself.

\section{A Computational Example}
\label{section:computation}

Because the operations on the HoTT book reals are defined by recursion on the
higher inductive type, and the $\rational$ constructor computes definitionally,
arithmetic identities between rational real numbers hold by definition. For
example, the identity $\rational(2) + \rational(2) = \rational(4)$ is proved by
\texttt{refl}. The type checker verifies this directly.

\bigskip
\begingroup\setstretch{1.0}\makeatletter\renewcommand{\verbatim@font}{\juliamono}\makeatother
\begin{verbatim}
    2+2≡4 : (rational 2) + (rational 2) ≡ rational 4
    2+2≡4 = refl
\end{verbatim}
\endgroup
In Lean~4 (v4.29.0) with Mathlib (v4.29.0), the analogous statement
does not hold definitionally.

\bigskip
\begingroup\setstretch{1.0}\makeatletter\renewcommand{\verbatim@font}{\juliamono}\makeatother
\begin{verbatim}
    theorem two_add_two_equal_four_real : (2 : ℝ) + (2 : ℝ) = (4 : ℝ) := rfl
    -- Not a definitional equality: the left-hand side
    --   2 + 2
    -- is not definitionally equal to the right-hand side
    --   4
\end{verbatim}
\endgroup
This is not optimal. The construction of the reals in Lean uses a
quotient of Cauchy sequences in a way that does not support direct
normalization.

\section{Organization}
\label{section:organization}

The Cubical Agda implementation comprises 33 modules and approximately 13,500
lines of code. Its organization parallels that of \Cref{chapter:reals-math},
together with supporting infrastructure for the rationals that was not available
in the Cubical Agda standard library. \Cref{table:agda-organization} summarizes
the distribution of code by topic. In this section we describe our use of the
Cubical Agda standard library and the rational support infrastructure, and then
survey the code corresponding to each section of \Cref{chapter:reals-math}.

\begin{table}[!htbp]
  \centering
  \begingroup
  \setstretch{1.0}
  \begin{tabular}{>{\raggedright\arraybackslash}p{0.26\textwidth} >{\raggedright\arraybackslash}p{0.50\textwidth} r}
    \toprule
    Topic & Main modules & Lines \\
    \midrule
    Rational support infrastructure &
    \texttt{Data.Rationals.Order}, \texttt{Data.Rationals.Properties} &
    2,293 \\
    Definition, induction, and recursion &
    \texttt{Data.Real.Base}, \texttt{Data.Real.Definitions},
    \texttt{Data.Real.Induction} &
    679 \\
    Closeness &
    \texttt{Data.Real.Close.ReflexiveSymmetric},
    \texttt{Data.Real.Close.CloseAlternative},
    \texttt{Data.Real.Close.Other} &
    2,484 \\
    Continuity and extension &
    \texttt{Data.Real.Lipschitz.Base},
    \texttt{Data.Real.Lipschitz.Closed},
    \texttt{Data.Real.Nonexpanding},
    \texttt{Data.Real.Properties} &
    1,887 \\
    Algebra &
    \texttt{Data.Real.Algebra.*} &
    2,922 \\
    Order &
    \texttt{Data.Real.Order.*} &
    3,295 \\
    \midrule
    Total & & 13,560 \\ 
    \bottomrule
  \end{tabular}
  \caption{Distribution of the Agda codebase by topic.}
  \label{table:agda-organization}
  \endgroup
\end{table}

The formalization is built on the Cubical Agda standard library, which supplies
the foundations for doing formalized mathematics in Cubical Type
Theory~\cite{agdacubical}. We make extensive use of its propositional truncation
module, including eliminators at various arities and the \texttt{SetElim}
module, which provides the Cubical library's version of Kraus's characterization
of maps from propositional truncations into sets~\cite{kraus2015}. This is used
in the construction of both multiplication and reciprocal (see
\Cref{section:algebra-order}). The library also provides the theorem that any
type equipped with a reflexive mere relation implying identity is a
set~\cite[Theorem 7.2.2]{univalentfoundationsprogram2013}, which we use to show
that $\RR$ is a set. For order theory, we rely on the library's vocabulary of
posets, strict orders, and apartness relations, including the result that the
symmetric closure of a strict order is an apartness relation. On the algebraic
side, the library supplies definitions for groups, rings, and fields, which we
use to package the algebraic structure of $\QQ$ and $\RR$. The library's notion
of field is that of a denial field in the sense of
\textcite{minesrichmanruitenburg2012}, requiring inverses for elements
satisfying $\neg (x = 0)$ rather than elements apart from zero. Since the
ordered field definition of the HoTT book (\Cref{definition:ordered-field}) uses
an apartness relation, we were unable to use the library's field definition for
$\RR$. We did, however, use the standard definition for $\QQ$.

At the start of the project, the standard library did not include an instance
showing that $\QQ$ is a denial field, nor did it provide an explicit reciprocal
operator for the rationals. We contributed a rational denial field instance as a
pull request (\url{https://github.com/agda/cubical/pull/1260}) to the Cubical
Agda standard library, which was later accepted.

The supporting infrastructure for the rationals totals approximately 2,300 lines
across two modules. The Cubical Agda standard library provides the basic
definitions and algebraic operations for the rationals, but at the time of
writing there are many missing definitions and results that our formalization
requires. We therefore develop the missing results ourselves:
\begin{itemize}
  \item We formulate
missing interactions between the algebraic and order-theoretic structure, namely
the monotonicity and order-reflection of addition and multiplication, the fact
that negation is antitone, and the fact that reciprocal on positive rationals is
antitone, all with respect to both $\le$ and $<$.
    \item We build the lattice structure
on $\QQ$, showing that $\max$ and $\min$ form join- and meet-semilattices
respectively, and establish the midpoint identities for $\max$ and $\min$
mentioned in \Cref{lemma:max-midpoint-half-distance}.
    \item We define the absolute
value and distance functions on the rationals and develop their basic theory,
including the triangle inequality and reverse triangle inequality in both norm
and distance forms, the metric axioms for distance, and the norm axioms for
absolute value.
\end{itemize}
With these contributions in place, we prove that the rational versions of each
operation defined on the reals are Lipschitz or non-expanding. We also develop
lemmas about the closeness relation on the rationals, including the openness and
separatedness properties. Other supporting results include strict monotonicity
and bounds for affine combinations, clamping operations with interval
membership, and midpoint inequalities such as $q < \frac{q + r}{2} <
r$. Finally, we prove the rational identities that are subsequently lifted to
$\RR$ to establish the Archimedean ordered field axioms.


The code corresponding to \Cref{section:definition-induction-recursion} lives in
three modules. The \texttt{Base} module defines the higher inductive-inductive
type $\RR$ and the closeness relation. The code follows
\Cref{def:hott-reals} directly, except that positivity witnesses such as
$0 < \varepsilon$ and Cauchy approximation witnesses are carried explicitly
throughout, whereas the informal presentation of \Cref{chapter:reals-math}
routinely suppresses them. Cubical Agda's native support for higher inductive
types allows the mutually dependent definitions of $\RR$ and $\closesym$,
including the path constructor, to be expressed directly;
\Cref{fig:reals-definition} shows the definition as it appears in the code.

\begin{figure}[t!]
\begingroup\setstretch{1.0}\makeatletter\renewcommand{\verbatim@font}{\juliamono\footnotesize}\makeatother
\begin{verbatim}
data ℝ : Type

data Close : (ε : ℚ) → (0 < ε) → ℝ → ℝ → Type ℓ-zero

syntax Close ε p x y = x ∼[ ε , p ] y

CauchyApproximation : ((ε : ℚ) → 0 < ε → ℝ) → Type ℓ-zero
CauchyApproximation x =
  ((δ ε : ℚ) (p : 0 < δ) (q : 0 < ε) →
   x δ p ∼[ δ + ε , 0<+' {x = δ} {y = ε} p q ] x ε q)

data ℝ where
  rational : ℚ → ℝ
  limit : (x : (ε : ℚ) → 0 < ε → ℝ) →
          CauchyApproximation x →
          ℝ
  path : (x y : ℝ) →
         ((ε : ℚ) (p : 0 < ε) → x ∼[ ε , p ] y) →
         x ≡ y

data Close where
  rationalRational :
    (q r ε : ℚ) (φ : 0 < ε) →
    - ε < q - r → q - r < ε →
    rational q ∼[ ε , φ ] rational r
  rationalLimit :
    (q ε δ : ℚ) (φ : 0 < ε) (ψ : 0 < δ) (θ : 0 < ε - δ)
    (y : (ε : ℚ) → 0 < ε → ℝ) (ω : CauchyApproximation y) →
    rational q ∼[ ε - δ , θ ] (y δ ψ) →
    rational q ∼[ ε , φ ] (limit y ω)
  limitRational :
    (x : (ε : ℚ) → 0 < ε → ℝ) (φ : CauchyApproximation x)
    (r ε δ : ℚ) (ψ : 0 < ε) (θ : 0 < δ) (ω : 0 < ε - δ) →
    (x δ θ) ∼[ ε - δ , ω ] rational r →
    limit x φ ∼[ ε , ψ ] rational r
  limitLimit :
    (x y : (ε : ℚ) → 0 < ε → ℝ)
    (φ : CauchyApproximation x) (ψ : CauchyApproximation y)
    (ε δ η : ℚ) (θ : 0 < ε) (ω : 0 < δ) (π : 0 < η)
    (ρ : 0 < ε - (δ + η)) →
    (x δ ω) ∼[ ε - (δ + η) , ρ ] (y η π) →
    limit x φ ∼[ ε , θ ] limit y ψ
  squash :
    (ε : ℚ) (φ : 0 < ε) (u v : ℝ) →
    isProp $ u ∼[ ε , φ ] v
\end{verbatim}
\endgroup
\caption{The higher inductive-inductive definition of the HoTT book reals and
  the closeness relation in Cubical Agda, corresponding to
  \Cref{def:hott-reals}. The forward declarations of
  $\RR$ and \texttt{Close} establish the mutual dependence before the
  constructors are given.}
\label{fig:reals-definition}
\end{figure}

In addition to general $(\RR, \closesym)$-induction
and its specializations $\RR$-induction and $\closesym$-induction of
\Cref{def:general-induction,def:rr-induction,def:close-induction}, the
\texttt{Induction} module also formulates proposition-valued variants that
discharge the path hypotheses automatically, as well as a binary
proposition-valued variant for motives indexed by two reals. The enhanced
$(\RR, \closesym)$-recursion principle of \Cref{def:enhanced-recursion} is also
defined here. Because Cubical Agda has native support for higher inductive
types, the computation rules corresponding to all of these principles hold
definitionally. The \texttt{Definitions} module collects auxiliary definitions
used by the first three sections, including the dependent Cauchy approximation
conditions of \Cref{section:definition-induction-recursion}, the Lipschitz,
non-expanding, and continuity conditions of \Cref{section:lifting} and the
triangle inequality and roundedness conditions of \Cref{section:closeness}.

The code for results about closeness, corresponding to \Cref{section:closeness},
is split across three submodules. The first, \texttt{ReflexiveSymmetric},
establishes reflexivity and symmetry of the closeness relation
(\Cref{lem:close-reflexive,lem:close-symmetric}). The second,
\texttt{CloseAlternative}, contains the construction of the alternative
closeness relation $\approx$ of \Cref{theorem:alternative-closeness}. At nearly
1,850 lines, this is the single largest proof in the formalization, despite
receiving a relatively compact treatment in \Cref{section:closeness}. The three
submodules are separated primarily because iteratively type-checking the
alternative closeness proof became a bottleneck during development; splitting it
into its own module kept the feedback loop manageable. The third submodule,
\texttt{Other}, proves the equivalence of the alternative closeness relation
with the inductively defined closeness relation and records the resulting
roundedness and triangle inequality properties. The \texttt{Other} module also
includes helpers that package the interaction between closeness and limits. For
example, \texttt{closeLimit} derives $u \close{\varepsilon + \delta} \limit(y)$
from $u \close{\varepsilon} y_{\delta}$, supplying the necessary hypotheses for
the relevant closeness constructor in both cases of an induction argument. These
helpers are used throughout the algebra and order development whenever a proof
needs to argue that an arbitrary real is close to the limit of a Cauchy
approximation.

The code for \Cref{section:lifting} spans four modules. The
\texttt{Lipschitz.Base} module proves unary Lipschitz extension
(\Cref{lemma:lift-lipschitz}) using the enhanced $(\RR, \closesym)$-recursion
principle, while the \texttt{Nonexpanding} module proves binary non-expanding
extension (\Cref{lemma:lift-nonexpanding}). The \texttt{Lipschitz.Closed} module
formalizes Exercise~11.8 of the HoTT book
\cite{univalentfoundationsprogram2013}, which extends the Lipschitz extension
principle to maps defined on closed intervals of the rationals. We did not end
up using this result, since we decided to follow Gilbert's approach for
multiplication and reciprocal rather than continuing with the squaring-based
construction given in the HoTT book, which makes use of this exercise. However,
it may still be of independent interest for future work. The \texttt{Properties}
module contains the uniqueness of continuous extensions
(\Cref{lemma:continuous-extension-unique}) and its binary and ternary analogues
(\Cref{lemma:continuous-extension-unique2}), the continuous extension law
helpers at arities one through three
(\Cref{corollary:continuous-extension-law}), and the binary Lipschitz
composition lemma (\Cref{lemma:lipschitz-binary-composition}). These results are
the main tools used to transfer algebraic identities from $\QQ$ to $\RR$ in
\Cref{section:algebra-order}. The \texttt{Properties} module also contains the
proof that $\RR$ is a set (\Cref{corollary:reals-set}), the surjectivity of the
$\limit$ constructor, and the injectivity of the $\rational$ constructor, which
logically belong to earlier sections but are placed here due to module
dependency constraints.

The code for \Cref{section:algebra-order} is the largest portion of the
formalization, spanning approximately 6,200 lines across thirteen modules split
between algebra and order. On the algebra side, the \texttt{Negation} module
lifts rational negation via Lipschitz extension, and the \texttt{Addition}
module lifts rational addition via non-expanding extension. Their algebraic laws
are transferred from $\QQ$ using the continuous extension laws, and the
resulting structure is packaged as an Abelian group. The \texttt{Lattice} module
similarly lifts $\min$ and $\max$ via non-expanding extensions and establishes
their lattice structure. The \texttt{Multiplication} module follows the iterated
Lipschitz extension strategy described in \Cref{section:algebra-order}, and the
needed algebraic identities are lifted to yield a commutative ring instance. The
\texttt{Reciprocal} module follows the same local-to-global pattern to obtain
the reciprocal on positive reals, and then extends to all reals apart from zero
by case splitting on the apartness witness. On the order side, the \texttt{Base}
module defines non-strict and strict inequality as in
\Cref{section:algebra-order}. The \texttt{Magnitude} and \texttt{Distance}
modules define absolute value and the distance metric on $\RR$ and and develop
their basic theory. The remaining order modules establish interactions between
order and algebraic operations: monotonicity and order-reflection of addition
with respect to both $\le$ and $<$, weak linearity, Gilbert's alternative
characterization of strict inequality
(\Cref{theorem:less-than-iff-perturb-less-equal}), and positivity preservation
and strict monotonicity for multiplication by positive reals.

\section{On the Use of Claude Code}
\label{section:claude-code}

We experimented with Claude Code, Anthropic's coding agent~\cite{claudecode}. We
used Claude Code to aid navigation of the Cubical Agda standard library, and as
a tool for converting detailed proof sketches into formal Agda proofs.

The Cubical Agda standard library is large, and locating the right lemma or
definition by name alone can be difficult. Because Claude Code has tool access
to the file system, it could search the library by description rather than by
name, explain unfamiliar definitions, and help navigate module structure. This
made it an effective form of semantic search during development.

\begin{figure}[t!]
\begingroup\setstretch{1.0}\makeatletter\renewcommand{\verbatim@font}{\juliamono\small}\makeatother
\begin{verbatim}
Can you fill in the type hole for the χ subterm in
`maxMultiplyBoundedReciprocalPositiveContinuous` in Reciprocal.agda based on the
following proof sketch?

- By `∣∣≤rational` (Properties2.agda), there exists an L : Q
  with |max(x, δ)| ≤ L
- By the continuity of `boundedReciprocalPositive`, there is
  a θ so that x ~[θ] y ⟹
  brp(δ, _, x) ~[(L+1)⁻¹ ε/2] brp(δ, _, y)
- Choose η₂ := min(θ, 1)
- Fix y : R and assume x ~[η₂] y
- Then x ~[1] y. By `maxNonexpandingℝ₂`, we have
  max(x, δ) ~[1] max(y, δ). Applying `close→≤+ε` to
  |x| ≤ L we obtain |max(y, δ)| ≤ L + 1
- By `·distanceₗ`,
  |max(y, δ) · brp(δ, x) - max(y, δ) · brp(δ, y)| = |max(y, δ)| · |brp(δ, x) - brp(δ, y)|
- Then
  |max(y, δ)| · |brp(δ, x) - brp(δ, y)| ≤ (L + 1) · (L + 1)⁻¹ · ε/2 = ε/2
- Then using `close→distance<`, we obtain the required
  closeness result

Hint: For the very last step, you'll probably need to use
the pattern for substituting the propositional positivity
argument to `Close` used in the π term below.
\end{verbatim}
\endgroup
\caption{A prompt given to Claude Code during the development of the reciprocal
  construction. The sketch specifies the proof strategy step by step, naming
  intermediate results and the library lemmas to apply. We abbreviate
  \texttt{boundedReciprocalPositive} as \texttt{brp} to fit the page.}
  \label{fig:claude-code-prompt}
\end{figure}

More surprisingly, Claude Code could take informal proof sketches and produce
formal Agda proof terms that type-checked, requiring only minor cleanup
afterwards. \Cref{fig:claude-code-prompt} shows a representative prompt from the
development. The proof sketch specifies the proof strategy step by step, naming
the intermediate results and the specific library lemmas to invoke at each
stage. The sketch also includes a hint about a substitution pattern for
rewriting a dependent positivity witness in the final step. Given this level of
detail, Claude Code produced a proof term that, after small adjustments for code
style, is essentially the one that remains in the codebase as
\texttt{maxMultiplyBoundedReciprocalPositiveContinuous} in
\texttt{Reciprocal.agda}. The proof strategy had to be specified in the prompt,
but the agent handled the translation to Agda syntax, including the selection of
correct library functions, the bookkeeping of positivity witnesses, and the
assembly of the proof term.

\section{Lessons Learned}
\label{section:lessons-learned}

The effort of formalization surfaced three insights about the interaction
between the informal presentation of the HoTT book reals and their realization
in Cubical Agda. In each case, the precision required by formalization surfaced
something that informal reading alone did not. We describe them in turn and then
identify the patterns they share.

The first concerns the alternative closeness relation of
\Cref{theorem:alternative-closeness}. The constructors of $\closesym$ provide
sufficient conditions for each combination of a rational and a limit, but not
necessary conditions. Given $\limit(x) \close{\varepsilon} \rational(r)$ as a
hypothesis, for instance, nothing about the inductive-inductive definition lets
us immediately extract a $\delta : \Qp$ satisfying the condition of the
limit-rational constructor. Indeed, as the HoTT book notes, inductive type
families ``have a tendency to contain `more than was put into them' ''
\cite[\S~11.3.2]{univalentfoundationsprogram2013}. The book flags this
explicitly, resolving it by defining a relation $\approx$ by recursion which
computes on construtors and proving that $\approx$ and $\closesym$
coincide. This resolution was already in place before the formalization began.
What the effort of formalization added was a much slower understanding of
\emph{why} this construction is the right response. The
\texttt{CloseAlternative} module contains the longest single proof in the
codebase, and only after spending time on it and then applying the alternative
closeness relation in later proofs did its role become clear. In principle, each
extraction of information from a closeness hypothesis could be carried out
locally by an induction argument; indeed, that is exactly how the forward
implication
\[
  u \close{\varepsilon} v \implies u \approx_{\varepsilon} v
\]
of \Cref{theorem:close-eq-close'} is proved. The alternative closeness relation
packages the necessary conditions once, as a globally applicable statement, so
that downstream proofs can invoke a single reusable statement rather than
reinvoke induction locally each time. This role was not immediately evident from
reading the book alone; it became evident only after formalizing the equivalence
and applying the relation in subsequent proofs.

The second insight concerns the enhanced $(\RR, \closesym)$-recursion principle
of \Cref{def:enhanced-recursion}. The HoTT book states the limit case hypothesis
in a notation that writes $f(x_{\varepsilon})$ in the inductive hypothesis,
where $f$ is the function being defined by recursion
\cite[\S~11.3.2]{univalentfoundationsprogram2013}. Both the approximation $x$
itself and the inductively defined values $f(x_{\varepsilon})$ are therefore
tacitly available. Our initial, naive transcription into Agda, however, dropped
$x$ from the inductive hypothesis entirely. In the induction principle, the
motive $A : \RR \to \mathcal{U}$ is indexed by $\RR$, so $x_{\varepsilon}$
appears naturally in the type $A(x_{\varepsilon})$ of the inductively assigned
values. In the non-dependent recursion principle, the motive is a plain type
$A : \mathcal{U}$, and no mention of $x_{\varepsilon}$ remains. It is natural to
drop $x$ from the inductive hypothesis altogether, leaving only $f : \Qp \to
A$. This suffices for most uses of the principle, because a typical recursion is
concerned with producing output in the codomain $A$ and has no occasion to refer
back to the domain approximation. The weakness only surfaced in the inner
recursion used to define alternative closeness, where the codomain is itself a
family of relations on $\RR$. In the limit-limit case there, the body of the
required existential must name a real drawn from the domain
approximation. Without access to $x$ in the recursion hypothesis, the case
cannot even be formulated. We wrote more than two thousand lines of code using
the weaker recursion principle before reaching this point. Repairing it required
reformulating the recursion principle to carry both the domain approximation and
its codomain assignment (see the comment at lines 310--325 of
\texttt{Induction.agda}) and refactoring each existing call site, including the
Lipschitz extension in \texttt{Lipschitz.Base}. Here the formalization did not
merely deepen understanding of an existing construction; it required us to
notice that the naive transcription had silently dropped information that the
informal presentation carried for free.

The third insight concerns the extension of continuous functions of several
variables. The HoTT book states the uniqueness of continuous extensions for maps
of one variable (\Cref{lemma:continuous-extension-unique}) and remarks in
passing that identities in several variables can be extended in the same
manner. The precise hypotheses are left unstated. In particular, it is not
obvious upon first reading whether joint continuity is required or whether
continuity in each variable separately suffices. To transfer algebraic
identities from $\QQ$ to $\RR$ throughout \Cref{section:algebra-order}, we
needed to commit to one or the other. We showed in
\Cref{lemma:continuous-extension-unique2} that for binary functions, separate
continuity in each variable is sufficient, with the proof proceeding by applying
the one-variable lemma one coordinate at a time. We formulate a ternary analogue
similarly in the code. Here the formalization genuinely supplied content that
the informal presentation only gestures at. The informal text does not say the
wrong thing, but it does not say enough to be applied directly, and making the
result applicable in the formalization required articulating hypotheses and
proving a new lemma.

These three instances illustrate two recurring patterns in the relationship
between informal mathematics and its formalization. In the first case, the
informal presentation is complete and rigorous, but formalizing it made its
structure evident in a way reading alone did not. In the other two, the formal
system required explicit articulation of information that the informal
presentation left implicit, and supplying that information became part of the
mathematical work of the formalization.

\chapter{Related Work}
\label{chapter:related-work}

The present chapter discusses the prior and concurrent work most directly
related to this thesis, including other formalizations of the real numbers and
broader programs in constructive analysis within univalent foundations.

The closest prior work is Gilbert's 2017 formalization of the HoTT book reals in
the Rocq proof assistant \cite{gilbert2017}. Gilbert, like the HoTT book,
follows \textcite{oconnor2007} in defining closeness as a family of binary
relations indexed by positive rationals, rather than as a real-valued distance
function. Beyond the formalization itself, Gilbert generalizes the HoTT book
construction to arbitrary premetric spaces, showing that Cauchy completion is a
monad on the category of premetric spaces with Lipschitz functions, and uses
Altenkirch, Danielsson, and Kraus's partiality
monad~\cite{altenkirchdanielssonkraus2017} to give a semi-decision procedure
comparing a real to a rational. Rocq lacks native support for both
inductive-inductive types and higher inductive types; Sozeau's experimental
branch added support for inductive-inductive types, but higher inductive types
remain unsupported, requiring Gilbert to axiomatize the path constructors of the
HoTT book reals. This motivates our choice of Cubical Agda, which natively
supports both. We adopt Gilbert's local-to-global strategy for multiplication
and reciprocal and his alternative characterization of strict inequality, and
depart in two respects. Cubical Agda supplies the path constructors and their
computation rules definitionally, and we use Kraus's \cite{kraus2015} theorem on
maps out of propositional truncations into sets in place of Gilbert's definition
by surjection to package the local-to-global step.

Concurrently, \textcite{molenaturekgrzybowskiborsetto2026} have independently
developed a Cubical Agda formalization of the HoTT book reals, pursued as an
in-progress pull request to the Cubical Agda standard library. They
additionally contribute algebraic and premetric infrastructure,
including pseudolattices, ordered commutative rings, and Archimedean rings
adapted from the HoTT book's ordered Heyting fields. In ongoing work, they
generalize Gilbert's premetric spaces from positive rationals to the positive
cone of an arbitrary ordered commutative ring. Their goals extend
into analysis beyond the scope of this thesis, including the Riemann integral,
the fundamental theorem of calculus, the mean value theorem, and trigonometric
functions.

The HoTT book construction can be read as an adaptation of earlier work by
\textcite{richman2008}, who developed a notion of completion for premetric
spaces that does not require countable choice. The HoTT book's closeness
relation and limit constructor apply Richman's completion strategy to the
rationals within univalent foundations.

\textcite{booij2020} pursues constructive analysis in univalent type theory.
Booij surveys several constructions of the real numbers, including an
alternative characterization of the HoTT book reals as a homotopy-initial Cauchy
structure, and proves equivalences among them. Booij's central contribution is
the notion of a \emph{locator}, extra structure attached to reals that enables
discrete observations such as signed-digit expansions while remaining compatible
with the extensionality principles of univalent type theory. Using locators,
Booij develops intermediate value theorems, Riemann integration, and the
Picard-Lindelöf theorem. The work is pencil-and-paper mathematics accompanied
by a Haskell prototype of locators. A chapter addresses strategies for
proof-assistant implementation but stops short of formalization. We return to
locators in \Cref{chapter:future-work}, where we discuss generalizing them
beyond the real numbers to a broader class of structures.

\textcite{murray2022} formalized Bishop-style constructive real numbers in
non-cubical Agda, including their arithmetic, ordering, Cauchy completeness, and
uncountability. In his future work, Murray suggested a Cubical Agda port as a
possible next step, while noting that HoTT-based definitions of the real numbers
might be preferable to Bishop-style ones, rendering such a port redundant. Our
work here takes up exactly that suggestion, replacing the Bishop-style setoid
construction with the HoTT book reals as a higher inductive-inductive type in
Cubical Agda.

\chapter{Future Work}
\label{chapter:future-work}

We identify three directions for future work arising from this formalization,
concerning the algebraic packaging of the HoTT book reals, the extension
principles used to lift operations from the rationals, and the generalization of
locators to a broader class of structures.

Our formalization stops just short of characterizing the HoTT book reals under
their universal property, namely that they are the initial Cauchy complete
Archimedean ordered field. We cover up to the end of Section 11.3.3 of the HoTT
book, concluding with Theorem 11.3.48 which states that the HoTT book reals form
an Archimedean ordered field~\cite{univalentfoundationsprogram2013}. Section
11.3.4 comprises only two further theorems: Theorem 11.3.49 shows Cauchy
completeness and Theorem 11.3.50 shows initiality. We stopped short not because
these theorems are difficult to prove, but because the Cubical Agda standard
library lacked support for a constructive formulation of fields at the time of
writing. As a consequence, our formalization does not contain a unified
statement corresponding to
\Cref{theorem:reals-archimedean-ordered-field}. Instead, our results are
scattered across each of the properties which make up an Archimedean ordered
field definition, although we have verified each constituent property
individually, so that the proof will follow immediately once a suitable
definition is in place. A natural next step would be to add these definitions to
the Cubical Agda standard library, allowing the universal property to be stated
formally.

We mentioned in \Cref{section:lifting} that the Lipschitz and non-expanding
extension principles of \Cref{lemma:lift-lipschitz,lemma:lift-nonexpanding} were
not to be read as the final word on the conditions under which maps on the
rationals extend to the reals. The constructions of multiplication and
reciprocal in \Cref{section:algebra-order} required iterated Lipschitz
extensions followed by elimination of the chosen bound via Kraus's theorem,
adding considerable complexity compared to the constructions of negation,
addition, and the lattice operations. A broader extension principle for
uniformly continuous maps would simplify at least the local extension step,
since bounded multiplication is uniformly continuous as a map of two rational
variables. Booij suggests that such an extension ought to be possible
\cite[Chapter~10]{booij2020}.

The path constructor for the HoTT book reals ensures that two reals which are
arbitrarily close are identified, so that the natural notion of equality among
reals coincides with the identity type. This avoids the drawbacks of a
Bishop-style setoid construction, where the equivalence relation and the
identity type diverge. However, this identification is also exactly what
prevents us from extracting rational approximations. Booij addresses this by
introducing the concept of a locator, extra structure on a real number that
restores the ability to compute with approximations~\cite{booij2020}. A locator
is obtained by strengthening the locatedness property of Dedekind cuts from a
property to structure. At the end of his chapter on metric spaces, Booij writes:
``We have not found a definition of locators for general metric spaces, of which
the locators introduced in Chapter 6 would be a special case. This would be an
important future direction of research'' \cite[p.~183]{booij2020}. We have
developed a construction of the HoTT book reals in Cubical Agda, where our
proofs are type-checked for correctness. Generalizing locators to a larger class
of structures would allow the same framework that verifies the correctness of
the mathematical theory to also execute computations on analytic objects via
term normalization.

\chapter{Conclusion}
\label{chapter:conclusion}

This thesis presented a formalization of the HoTT book reals in Cubical Agda. We
covered the higher inductive-inductive definition, the closeness relation and
its alternative characterization, extension principles for Lipschitz and
non-expanding maps, and the algebraic and order-theoretic structure culminating
in the proof that the HoTT book reals form an Archimedean ordered field. The
development is open source at \url{https://github.com/utahplt/hott-reals} and
typechecks without postulates or holes.

The process of formalization required us to make explicit mathematical structure
that the informal presentation of the HoTT book left implicit, including the
precise form of the enhanced recursion principle and the hypotheses needed for
multi-variable continuous extensions. In several cases, the precision demanded
by the type checker surfaced insights that were not apparent from reading the
informal account alone.

This development clarifies how the HoTT book reals behave in a proof assistant
with native support for higher inductive types. It provides a foundation for
further machine-assisted work in constructive analysis.

\printbibliography[heading=bibintoc]

\newpage
\thispagestyle{empty}

\vspace*{90pt}
\begin{center}

    Name of Candidate: Jackson Brough\\
    Date of Submission: April 17, 2026
    
\end{center}

\end{document}